\title{Streaming Graph Algorithms in \\
        the Massively Parallel Computation Model
    \thanks{{A preliminary version of this paper appeared in \emph{Proceedings of the 43rd Annual ACM Symposium on Principles of Distributed Computing (PODC'2024)}, pages 496--507, 2024.}}
}
\author{
\begin{tabular}[t]{c c c}
\textbf{Artur Czumaj}\thanks{Department of Computer Science and Centre for Discrete Mathematics and its Applications (DIMAP), University of Warwick, UK. E-mail: A.Czumaj@warwick.ac.uk. Research supported in part by the Centre for Discrete Mathematics and its Applications (DIMAP), by EPSRC award EP/V01305X/1, by a Weizmann-UK Making Connections Grant, and by an IBM Award.} &
\textbf{Gopinath Mishra}\thanks{Department of Computer Science, National University of Singapore, Singapore. E-mail:  Gopinath@nus.edu.sg. The work was partly done when the author was a Postdoc at the University of Warwick and the research was supported in part by the Centre for Discrete Mathematics and its Applications (DIMAP) and by EPSRC award EP/V01305X/1.}  &
\textbf{Anish Mukherjee}\thanks{Department of Computer Science, University of Liverpool, UK. E-mail: Anish.Mukherjee@liverpool.ac.uk. The work was partly done when the author was at the University of Warwick and the research was supported in part by the Centre for Discrete Mathematics and its Applications (DIMAP) and by EPSRC award EP/V01305X/1.}
\end{tabular}
}
\begin{document}
\maketitle
\begin{abstract}
We initiate the study of graph algorithms in the streaming setting on massive distributed and parallel systems inspired by practical data processing systems. The objective is to design algorithms that can efficiently process evolving graphs via large batches of edge insertions and deletions using as little memory as possible.

We focus on the nowadays canonical model for the study of theoretical algorithms for massive networks, the \emph{Massively Parallel Computation (MPC) model}. We design MPC algorithms that efficiently process evolving graphs: in a constant number of rounds they can handle large batches of edge updates for problems such as connectivity, minimum spanning forest, and approximate matching while adhering to the most restrictive memory regime, in which the \emph{local memory} per machine is \emph{strongly sublinear in the number of vertices} and the \emph{total memory} is \emph{sublinear in the graph size}. These results improve upon earlier works in this area which rely on using larger total space, proportional to the size of the processed graph. Our work demonstrates that parallel algorithms can process dynamically changing graphs with asymptotically optimal utilization of MPC resources: parallel time, local memory, and total memory, while processing large batches of edge updates.
\end{abstract}

\section{Introduction}
\label{intro}

In the last decades analyzing massive graphs and networks has become an important part of many areas of computing and its applications, including social network analysis, machine learning, medical applications, and others. The challenges of efficiently processing such graphs lie not only in their \emph{massive size}, reaching even billions and trillions of nodes or edges (see, e.g., \cite{CEKLM15,KH16,RBMZ15,SWL13,WYXXMWLDZ15}), but also in their \emph{dynamic character} (see, e.g., \cite{BFKKH23,HHS22}). For example, the dynamic nature of social networks and the steadily evolving structure of the Web often require to cope with networks where millions of edges may be added or removed per second;
similar scenarios 
appear naturally in the analysis of retail transactions, protein interaction, etc.

In order to efficiently process massive amounts of data, where the input does not fit into the available memory of even the most advanced modern systems, we naturally have to rely on distributed or parallel systems. Most prominently, parallel computation and storage systems, such as MapReduce \cite{DG10}, Hadoop \cite{White15}, Dryad \cite{IBYBF07
}, or Spark \cite{ZCFSS10}, 
have been successfully used to represent, process, and analyze the massive datasets that appear in many modern applications. As an elegant abstraction of these frameworks, the \emph{Massively Parallel Computation} (\MPC) model, first introduced by Karloff \etal \cite{KSV10} and later refined in \cite{ANOY14,BKS17,GSZ11}, has become the canonical theoretical model of the study of parallel algorithms. At a very high level, an \MPC system consists of a collection of machines that can communicate with each other through indirect communication channels. The computation proceeds in synchronous rounds, where at each round the machines receive messages from other machines, perform local computations, and finally send appropriate messages to other machines so that the next round can start. The central factor typically to be optimized in the analysis of efficient algorithms in the \MPC model is the \emph{number of rounds} while maintaining (as a hard constraint) the \emph{low local and global capacity} of the machines used, and \emph{low communication} performed per round.

While the study of \MPC algorithms has led to major advances in parallel algorithms in the static setting
, only a limited amount of research has been done in the context of dynamically changing systems, where even the execution of very efficient algorithms after a few updates in the input data might be prohibitive due to their large processing time and resource requirements. This is in contrast to the traditional sequential model of computation, where various algorithms have been successfully exploiting the fact that small modifications in the input data often have a very small impact on the solution, compared to the solution in the input instance prior to the modifications.

We consider a natural model adapting the \MPC model in a dynamic environment as a distributed system, where
%
\begin{itemize}
\item each machine stores some part of the input or its representation (e.g., a sketch), and
\item in each synchronous phase the system maintains a solution while allowing multiple modifications to the input, such as insertions or deletions of elements in the maintained dataset.
\end{itemize}
%
We assume that the updates of the data arrive in a distributed fashion, any machine can receive a request for some modifications (a similar case can be made for coordinated updates). Following related research on the \PRAM model of parallel computation (see, e.g., \cite{AABD19,GL20,TDB19}), such a model has been introduced recently in the context of graph problems on \MPC by Italiano \etal \cite{ILMP19}, who considered only a single update per phase. Dhulipala \etal \cite{DDKPSS20} and Nowicki and Onak \cite{NO21} extended this setting to a more natural case of multiple updates. (Indeed, as it was motivated and argued well by Dhulipala \etal \cite{DDKPSS20}, in typical applications of dynamic massive datasets one would expect numerous updates of the datasets with high frequency.) 

Similarly to the sequential model, the goal of an efficient algorithm is to maintain a solution to a problem more efficiently than recomputing the solution from scratch. 
Here, in the \MPC setting, the main goal is to reduce the bounds in key factors contributing to the complexity of an algorithm.
\begin{itemize}
\item \textbf{Local memory} used by individual \MPC machines.
\item \textbf{Total memory} used in \MPC system, which is the sum of the storage available on all \MPC machines (equivalently, a product of the number of machines and the local memory).
\item \textbf{Number of updates} (also called a \emph{batch size}) allowed in a single phase.
\item \textbf{Update time}, which is the \emph{number of rounds} performed between updates, in a single phase.
\item \textbf{Query time}, which is the \emph{number of rounds} to compute a solution to the problem instance at the end of a phase.
\end{itemize}

Since our focus is on graph algorithms, for simplicity, we assume that the algorithm starts with an empty graph $G$ (possibly weighted) with no edges (but with a fixed set of $n$ vertices $V$), and then, it dynamically evolves by edge insertions and deletions. (We notice that one can relax this assumption and all our algorithms can start with an arbitrary graph and preprocess efficiently, see the end of \Cref{subsec:contributions}).
We will denote the \emph{current number of edges} of $G$ by $m$ ($m$ may vary in time).

As it is nowadays standard for \MPC algorithms, since local memory is related to a natural 
hardware limitation, we will be assuming that the \emph{local memory} \lspace is \emph{sublinear} in the input size, and in order to be scalable%
\footnote{This model is often called a \emph{fully-scalable \MPC} to emphasize that it can easily scale up with the size of the input: if the existing hardware offers local memory $M$ then the sublinear setup is well parameterized to study by choosing~an~appropriate~$\spacexp$.}%
, we will consider \lspace to be upper bounded by $n^{\spacexp}$ for some arbitrarily small positive constant \spacexp (optimally, arbitrarily close to 0). In regards to the number of updates (or a batch size), while \cite{ILMP19} considered only a single update and \cite{DDKPSS20} allowed at most $n^{\spacexp-\Theta(1)}$ updates, we will study a more general case (as in \cite{NO21}), in which the number of updates allowed is $\tOh(n^{\spacexp}) = \Oh(n^{\spacexp}/\polylog(n))$, that is, the number of updates may be almost as large as the capacity of a single \MPC machine\footnote{We also notice that the arguments from Section 2 in \cite{NO21} demonstrate that with the target of constant-rounds algorithms, it is unlikely to significantly increase the number of updates per phase.}. The \emph{update time} and the \emph{query time} are the main parameters to be optimized by a parallel algorithm. We aim to design algorithms that run in a \emph{constant number of rounds} (when \spacexp is a constant) in a phase (between updates), for both, update time and query time.

The \emph{total memory} is the \emph{key parameter that distinguishes our work from earlier papers} \cite{ILMP19,DDKPSS20,NO21}. Earlier algorithms were using $\widetilde{\Theta}(n+m)$ total memory across all \MPC machines, allowing to store the entire graph and additional data structures in the system; this corresponds to the classical setting of \emph{dynamic algorithms} \cite{HHS22} (with the dynamically changing number of edges, one uses extra machines for newly arriving edges or bounds $m$).
In our work, we follow the classical approach of \emph{streaming algorithms} \cite{CY20,FKMSZ05,McGregor14,Muthukrishnan05}\footnote{The focus in this paper is on modeling \emph{single-pass streaming}, though one could also consider multiple-pass streams.}, and instead of focusing solely on minimizing time to conduct graph updates, \emph{our primary goal is to combine low (constant) update and query times with \textbf{sublinear total memory} needed to store the data}.

There is a close link between the space used in graph streaming algorithms and \MPC algorithms in the setting discussed here, in that any streaming space lower bound for graph algorithms immediately yields the same total memory lower bound in our setting\footnote{Observe that a dynamic \MPC algorithm that uses total memory $M$ can be trivially implemented as a streaming algorithm that uses the same space $M$, just by simulating sequentially the \MPC algorithm.}. Therefore, since most of the work on graph streaming has been focusing on the \emph{semi-streaming model} \cite{FKMSZ05, McGregor14}, we will extend this model to the \MPC setting. In the semi-streaming model, the data stream algorithm is permitted $\tOh(n) = \Oh(n \polylog(n))$ memory, where $n$ is the number of vertices (and thus the memory is independent of the number of edges $m$). The focus on semi-streaming algorithms is on one hand because most problems are \emph{provably intractable if the available memory is sublinear in $n$}, and at the same time, many problems become \emph{feasible once there is memory roughly proportional to the number of vertices} in the graph. Following this line of research, we consider \MPC algorithms for evolving graphs where the \MPC system is permitted 
$\Oh(n \polylog(n))$ total memory.

We can now state the main question studied in this paper:

\medskip
\centerline{%
\parbox{5.5in}{
\begin{mdframed}[hidealllines=true,backgroundcolor=gray!15]
    \emph{Can we design MPC algorithms for fundamental problems in dynamically evolving graphs that with sublinear local memory $n^{\spacexp}$ and total memory $\Oh(n \polylog(n))$, can maintain in a constant number of rounds good solutions to the problem after $\Oh(n^{\spacexp})$ updates?}
\end{mdframed}
}%
}

\medskip

We believe that the question above is very natural, addresses important features of modern parallel systems, and is also scientifically challenging. We have been arguing above that massive graphs should be studied in the parallel setting, and the \MPC model with sublinear local memory seems to be among the most suitable frameworks for such a study. Further, we claim that the focus on constant-round updates is a natural requirement from very powerful parallel systems modeled by \MPC. So, for example, while in the classical streaming setting one is usually less concerned about the time needed to process a single update, in the parallel system, it might be acceptable to have slow algorithms on individual machines, but because of its significant cost, we want to optimize the number of communication rounds to process the updates. Finally, we argue that the total memory, or equivalently, the number of machines, is a very expensive resource that ought to be minimized. This is especially important in the era of massive graphs, where we want to be ready to manage graphs with trillions of edges, or more, and committing so big resources for the task is often unlikely or overly expensive. (Moreover, our requirement that the total memory is $\tOh(n)$ ensures the full independence on the number of edges in the evolving graph, which makes it easier to manage available resources since in general, the number of edges may vary in time.)

\subsection{Our contributions}
\label{subsec:contributions}

The main finding of this paper is an affirmative answer to the question above for three fundamental graph problems: connectivity, minimum spanning forest, and maximum matching.

We begin with our 
main result for maintaining connectivity.

\begin{theorem}
\label{thm:intro-con}
Let $0 < \spacexp < 1$ be an arbitrary constant. Given an undirected graph $G$ with $n$ vertices, we can maintain the \textbf{connectivity} of $G$  to process a batch of $\tOh(n^{\spacexp})$ updates in
a constant number of
rounds on an \MPC with sublinear local memory $\Oh(n^{\spacexp})$ and {$\tOh(n)$ total memory}.

Furthermore, within the same bounds, the algorithm can maintain a \textbf{spanning forest} of $G$.
%
%
\end{theorem}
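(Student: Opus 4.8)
The plan is to combine the linear graph‑sketching technique with an explicitly maintained \emph{shallow} spanning forest. Since the evolving graph may be dense (up to $\Theta(n^2)$ edges) while we are allowed only $\tOh(n)$ total memory, we cannot store the graph and must work with a linear sketch; and since a batch must be absorbed in $\Oh(1)$ rounds, we exploit linearity of the sketch for updates and a bounded‑depth decomposition of the forest for repair.

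\emph{Data structure.} For every vertex $v$ we keep $\Theta(\log n)$ independent $\ell_0$‑sampling linear sketches of the \emph{signed} incidence vector of $v$ (the pair $\{u,v\}$, $u<v$, contributes $+1$ to the sketch of $u$ and $-1$ to the sketch of $v$); these are the Ahn--Guha--McGregor connectivity sketches. They use $\tOh(n)$ total memory, each has size $\polylog(n)$, and by linearity the sum of the sketches over any vertex set $S$ is a sketch of the boundary $\partial S$, from which a boundary edge can be recovered w.h.p. Alongside we maintain an explicit spanning forest $F$ of the current graph, represented by the Euler tours of its trees, each tour stored as a balanced $(a,b)$‑tree with $a,b=\Theta(n^{\spacexp})$, hence of depth $\Oh(1/\spacexp)=\Oh(1)$, in which every node caches the sum of the $\ell_0$‑sketches of the vertices in its subtree's tour interval; each vertex also stores a pointer to its tree's root, which serves as its component identifier. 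As each sketch is $\polylog(n)$ and local memory is $\Oh(n^{\spacexp})$, a single machine hosts an entire $(a,b)$‑tree node, and the whole structure fits in $\tOh(n)$ total memory over $\tOh(n^{1-\spacexp})$ machines. Starting from the empty graph this is trivial to initialize.

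\emph{Processing a batch of $b=\tOh(n^{\spacexp})$ updates.} (i) Route the updates to their endpoints' sketches and apply the $\pm$ contributions; since $b$ fits on one machine this takes $\Oh(1)$ rounds. (ii) Each deletion of a \emph{tree} edge is $\Oh(1)$ Euler‑tour splits and joins (excise the child subtree's contiguous tour interval, reconnect the surrounding prefix and suffix, delete the two tour occurrences of the edge), so the batch induces $\Oh(b)$ $(a,b)$‑tree operations; we realize them by sorting all $\Oh(b)$ cut points, reusing the large unchanged intervals between consecutive cut points wholesale as subtrees, and rebuilding only the $\tOh(n^{\spacexp})$ affected spine nodes while recomputing their cached sketches by $\Oh(1/\spacexp)$‑depth reductions. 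Now $F'=F\setminus D$, its fragments are exactly the tour intervals of the new $(a,b)$‑trees, and the root of each new tree caches an $\ell_0$‑sketch of that fragment's boundary in the current graph. (iii) To find replacement edges, run the Bor{\r u}vka‑style sketch decoding \emph{on the contracted graph $H$} whose vertices are the $\Oh(b)$ affected fragments: since $H$ has $\tOh(n^{\spacexp})$ vertices it fits on one machine, where we compute a spanning forest $R_H$ of $H$ and lift it to edges $R\subseteq E(G)$. (iv) Together with $R$ and the batch's edge insertions (insertions internal to a fragment are dropped), run union‑find on the $\Oh(b)$ affected fragments to decide which tours to merge, perform the $\Oh(b)$ $(a,b)$‑tree joins, and update the affected tree pointers and component identifiers (one level of indirection suffices, since only $\Oh(b)$ fragments change identifier). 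Non‑tree‑edge deletions and cycle‑closing insertions touch only the sketches. Every step is $\Oh(1)$ rounds for constant $\spacexp$, so the batch is processed in $\Oh(1)$ rounds; afterwards connectivity queries and the spanning forest itself are read off $F$ and the component identifiers in $\Oh(1)$ rounds.

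\emph{Correctness and the main obstacle.} $F'\cup R$ spans every component of $G$, because $F'$ spans each fragment and $R_H$ records exactly how the fragments re‑merge, by the $\ell_0$‑sketch/Bor{\r u}vka argument of Ahn--Guha--McGregor applied to the $\Oh(b)$‑vertex graph $H$; thus $\Theta(\log n)$ sketch copies suffice, and inflating them by a constant factor makes the overall failure probability negligible after a union bound over the $n^{\Oh(1)}$ phases, assuming an oblivious update sequence. The technical heart is step~(ii): maintaining the Euler‑tour $(a,b)$‑trees, \emph{with correct cached boundary sketches}, under a whole batch of $\tOh(n^{\spacexp})$ interleaved splits and joins in $\Oh(1/\spacexp)$ rounds — in effect a batch‑parallel Euler‑tour‑tree data structure realized in the \MPC\ model with strongly sublinear local memory. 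The points to get right are that the $\Oh(b)$ surgeries touch only $\tOh(n^{\spacexp})$ tree nodes, that the nesting structure of the deleted tree edges — which dictates how the surviving intervals regroup into fragments — can be computed on a single machine, and that all cached aggregates are recomputed by bounded‑depth reductions without a large fan‑in bottleneck; by contrast, the sketching and the reduction to connectivity on the small contracted graph $H$ are comparatively standard.
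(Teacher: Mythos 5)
Your proposal follows the same overall strategy as the paper --- AGM $\ell_0$-sketches per vertex, an explicitly maintained spanning forest represented by Euler tours, an auxiliary contracted graph $H$ on the $\Oh(b)$ affected components whose replacement forest is decoded locally on one machine by the Bor\r{u}vka/AGM argument, and a final batch re-insertion --- so the correctness skeleton and the $\tOh(n)$-memory accounting match the paper's. Where you genuinely diverge is in how the batch of Euler-tour splits and joins is realized in $\Oh(1/\spacexp)$ rounds. The paper does \emph{not} use balanced search trees at all: it keeps a flat, index-based encoding of each tour (each edge stores the tour positions of its occurrences, from which $f(\cdot)$ and $\ell(\cdot)$ are derived), builds an auxiliary sequence $S$ of length $\Oh(k)$ over the spanning forest $F_H$ of $H$, and processes consecutive edge pairs of $S$ to compute $\Oh(k)$ shift/update messages of constant size that are broadcast so every machine rewrites the indices of its own edges; fragment sketches after deletions are then re-aggregated from scratch by an $\Oh(1/\spacexp)$-depth reduction. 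Your route instead maintains each tour in a batch-dynamic $(a,b)$-tree with cached sketch aggregates, closer in spirit to the parallel batch-dynamic tree structures of Dhulipala et al.\ and Nowicki--Onak that the paper explicitly chose to avoid. Your approach buys a cleaner splice-at-arbitrary-position primitive (sidestepping the paper's delicate four-case analysis of forward/backward edge pairs forced by the fact that only one terminal per tree can be made the root) and locality of re-aggregation, at the cost of extra machinery you still need to nail down: (i) a vertex appears $2d_T(v)$ times in a tour, so caching ``the sum of the sketches of the vertices in an interval'' is not a composable aggregate unless each vertex's sketch is attached to exactly one designated occurrence (and the designation maintained under re-rooting and splicing) --- without this, an internal edge of a fragment need not cancel in the summed sketch and the sampler can return a non-boundary edge; (ii) with $\Theta(\log n)$ sketches of $\polylog(n)$ bits per child, a node of branching factor $\Theta(n^{\spacexp})$ does not fit in local memory $\Oh(n^{\spacexp})$, so the fan-out (and hence the batch size) must be trimmed by a $\polylog(n)$ factor, exactly the slack the paper makes explicit by allowing $\Oh(n^{\spacexp}/\log^3 n)$ updates; and (iii) after several tree-edge deletions a fragment is a union of nested intervals, not a single interval, so the fragment sketch is a sum over the cut-point decomposition, which you flag but should carry through the rebalancing argument. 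None of these appears to be a fundamental obstacle, so I read your proposal as a correct alternative implementation of the paper's plan rather than a different proof.
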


Extending the ideas from \Cref{thm:intro-con}, we can process updates for dynamic approximate minimum spanning forest and an exact minimum spanning forest algorithm in insertion-only streams.



\begin{theorem}
\label{thm:intro-mst}
Let $0 < \spacexp < 1$ be an arbitrary constant. Given a weighted graph $G$ with $n$ vertices, on an \MPC with $\lspace=\Oh(n^{\spacexp})$ local memory and $\widetilde{\Oh}(n)$ total memory we can process a batch of $\tOh(n^{\spacexp})$ updates and maintain in a constant number of rounds

\begin{itemize}
    \item an exact {minimum spanning forest} for insertion-only updates, and
    \item $(1+\epsilon)$-approximate  {minimum spanning forest} for arbitrary updates.
\end{itemize}

%
\end{theorem}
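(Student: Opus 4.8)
The plan is to reduce both statements to the unweighted spanning-forest machinery of \Cref{thm:intro-con}, deployed over a polylogarithmic number of weight classes and combined with the principle that a batch of only $\tOh(n^{\spacexp})$ updates perturbs the (approximate) minimum spanning forest only locally. Throughout I assume, as is standard in the (semi-)streaming literature, that edge weights are positive integers bounded by $\mathrm{poly}(n)$; a general range $[1,W]$ only costs a factor $O(\log W)$, which is absorbed by $\tOh(\cdot)$ whenever $\log W=\polylog(n)$.

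\textbf{Exact minimum spanning forest, insertions only.} I would maintain the current minimum spanning forest $F$ explicitly (at most $n-1$ edges, hence within budget), stored in a shallow, depth-$O(1/\spacexp)$ balanced hierarchical representation that supports batches of $\tOh(n^{\spacexp})$ link, cut, LCA, and tree-path-maximum operations in $O(1/\spacexp)$ rounds. On a batch $B$ of $b=\tOh(n^{\spacexp})$ insertions I would recompute $F'=\mathrm{MSF}(F\cup B)$ as follows: let $S$ be the $\le 2b$ endpoints of $B$; contract every connected component of $F$ that contains no vertex of $S$, and inside each remaining component keep only the minimal subtree spanning its $S$-vertices, with every degree-two path compressed to a single edge weighted by the \emph{maximum} weight along it. Standard minor/compression arguments (the cycle property) show this $\tOh(n^{\spacexp})$-vertex instance $\widetilde F\cup B$ has the same minimum spanning forest, restricted to $B$- and compressed edges, as $F\cup B$ does; running Kruskal on it in $O(1/\spacexp)$ rounds identifies the $\le b$ edges of $B$ that enter $F'$ and the $\le b$ compressed edges that leave (each decompressing to a single evicted edge of $F$, namely its path-maximum). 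Writing back these $\le b$ link/cut operations updates $F$; the query is immediate because $F$ is kept explicitly.

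\textbf{$(1+\epsilon)$-approximate minimum spanning forest, arbitrary updates.} Round every weight up to the nearest power of $1+\epsilon$; since each weight grows by a factor at most $1+\epsilon$, any minimum spanning forest of the rounded instance is a $(1+\epsilon)$-approximate minimum spanning forest of $G$, so it suffices to maintain the former. There are $L=O(\epsilon^{-1}\log n)=\polylog(n)$ distinct rounded weights; let $G_i$ be the subgraph of edges of rounded weight at most $(1+\epsilon)^i$, so $G_0\subseteq\dots\subseteq G_L=G$. I would run $L{+}1$ independent copies $\mathcal D_0,\dots,\mathcal D_L$ of the \Cref{thm:intro-con} structure, with $\mathcal D_i$ maintaining a spanning forest of $G_i$ together with its component count and connectivity/reconnection queries; a weighted update $(u,v,w)$ of rounded level $j$ is forwarded as the unweighted update $(u,v)$ to every $\mathcal D_i$ with $i\ge j$ (and likewise for deletions). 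Then each $\mathcal D_i$ receives a batch of $\le\tOh(n^{\spacexp})$ updates, handled in $O(1/\spacexp)$ rounds, and running all copies in parallel keeps the update time $O(1/\spacexp)$, the local memory $\Oh(n^{\spacexp})$, and the total memory $(L{+}1)\cdot\tOh(n)=\tOh(n)$. I would keep the approximate minimum spanning forest $\widehat F$ explicitly in the same shallow representation as above and repair it per batch: insertions are processed exactly as in the insertion-only case (compress, run a small Kruskal, evict via path-maxima), now with respect to rounded weights; deletions remove $\le b$ edges from $\widehat F$, splitting it into $\le b$ fragments, and for each fragment $A$ I would binary-search over the $L$ levels --- using the connectivity queries of the $\mathcal D_i$'s --- for the smallest $i$ at which $A$ is no longer a full connected component of $G_i$, extract a reconnecting edge of rounded weight $(1+\epsilon)^i$ from $\mathcal D_i$'s spanning forest, and finish with a Kruskal computation on the resulting $\tOh(n^{\spacexp})$-vertex fragment graph. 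Correctness is, in each case, the cut/cycle property applied to the rounded weights.

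\textbf{Main obstacle.} The genuinely delicate point --- and the reason the argument reaches into the internals rather than treating \Cref{thm:intro-con} as a pure black box --- is that we can at no stage afford a static connected-components or minimum-spanning-forest computation on the full $\tOh(n)$-edge graph: that is exactly the task not known to admit an $o(\log n)$-round algorithm with strongly sublinear local memory, and widely conjectured to be impossible (the one-cycle-versus-two-cycles conjecture). Hence the real work is in the two reductions above: (i) that each batch reduces to a minimum-spanning-forest/connectivity instance of size $\tOh(n^{\spacexp})$ --- the compressed active subgraph for insertions, the fragment graph for deletions --- which can be solved from scratch in $O(1/\spacexp)$ rounds; and (ii) that the accompanying batches of LCA, tree-path-maximum, and fragment-level connectivity queries can be answered in $O(1/\spacexp)$ rounds on the maintained hierarchical structures, which must therefore be designed (as in \Cref{thm:intro-con}) so that no operation ever requires touching more than $\tOh(n^{\spacexp})$ data adaptively. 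The rounding argument, the $L$-fold parallelism, and the resource accounting are then routine.
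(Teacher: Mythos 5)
Your treatment of the exact, insertion-only case is correct in spirit and close to the paper's: both rest on the cycle property and on finding path maxima in the explicitly maintained forest, and both solve a compressed $\tOh(n^{\spacexp})$-size instance locally. The difference is that the paper realizes the needed path queries concretely through its Euler-tour toolkit (the Identify-Path operation, which broadcasts the four indices $f(u),f(v),\ell(u),\ell(v)$ and lets every machine test its own edges), then batch-deletes the evicted path maxima and batch-inserts via the connectivity algorithm of \Cref{thm:intro-con}; you instead posit a ``shallow hierarchical representation'' supporting batched link/cut/LCA/path-max in $\Oh(1/\spacexp)$ rounds and the construction of a max-compressed Steiner tree over the batch endpoints. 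That primitive is exactly where the technical content lies (you acknowledge this yourself), so as written this half is a plausible alternative route whose key data-structure claim is asserted rather than established.

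The genuine gap is in your deletion repair for the $(1+\epsilon)$-approximate forest. Extracting \emph{one} minimum-level outgoing edge per fragment of $\widehat F$ is a single Bor\r{u}vka-style step: the extracted edges can pair fragments off (e.g.\ $A_1\!\leftrightarrow\!A_2$ and $A_3\!\leftrightarrow\!A_4$ each return edges inside their pair) while the only $A_2$--$A_3$ edge of the graph is never extracted, so the Kruskal run on your fragment graph cannot restore a spanning structure even though the updated graph is connected; and the approximation guarantee likewise needs level-minimal edges across the cuts of the \emph{merged} super-fragments, which one-shot extraction does not certify. Repairing this requires iterating extract-and-contract, and each iteration is adaptive against the distributed level forests $F_i$, costing $\Theta(\log b)=\Theta(\spacexp\log n)$ \MPC{} rounds --- the paper's connectivity structure escapes this only internally, by keeping $\Oh(\log n)$ mergeable sketches per vertex, summing them per fragment, and simulating all Bor\r{u}vka iterations locally on one machine (\Cref{lem:del-forest}), a device your fragment graph over $\widehat F$ does not have. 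The paper avoids the whole issue: it never repairs a maintained approximate forest under deletions. It runs the $L+1$ level structures exactly as you do, and after each batch \emph{reconstructs} the $(1+\epsilon)$-approximate MSF in $\Oh(1)$ rounds by the selection rule ``include $e=\{u,v\}\in F_i$ iff $C_{i-1}[u]\neq C_{i-1}[v]$'' (\Cref{mst-exact-search}), whose correctness is a short cut/counting argument. Since you already maintain the $\mathcal D_i$'s, your explicit $\widehat F$ and its insertion/deletion repair are unnecessary, and replacing them by this reconstruction (also dropping the adaptive binary search over levels in favor of probing all $L=\polylog(n)$ levels in parallel) would close the gap.
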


Since there is no semi-streaming algorithm for a constant approximation of matching for general dynamic streams (see, e.g., \cite{AKLY16}), 
our results for matching are slightly weaker.

\begin{theorem}
\label{thm:intro-match}
Let $0 < \spacexp < 1$ be an arbitrary constant and let $0 < \kappa < \spacexp$ and $\alpha > 1$ be arbitrary. Given an undirected graph $G$ with $n$ vertices, on an \MPC with $\lspace = \Oh(n^{\spacexp})$ local memory we can process a batch of $\tOh(\lspace^{1 - \kappa})$ updates and maintain in $\Oh(\log(1/\kappa))$ rounds an
\textbf{$\Oh(\alpha)$-approximate maximum matching} in $G$ with
\begin{itemize}
\item $\widetilde{\Oh}(n/\alpha)$ total memory for insertion-only updates, and
\item $\widetilde{\Oh}(\max\{n^2/\alpha^3,n/\alpha\})$ total memory for arbitrary updates.
\end{itemize}
Furthermore, if we are only required to output an estimate on the \textbf{size of the matching}, then the total memory bounds can be improved to $\widetilde{\Oh}(n/\alpha^2)$ and $\widetilde{\Oh}(n^2/\alpha^4)$, respectively, where $\alpha \leq \sqrt{n}$.
\junk{
Let $0 < \spacexp < 1$ be a constant and let $\alpha \ge 1$ be an arbitrary positive real. Given an undirected graph $G$ with $n$ vertices we can find an $\Oh(\alpha)$ approximate maximum matching in $G$ in single pass \smpc with $\lspace = \Oh(n^\delta)$ local memory in $\Oh(\log(1/\kappa))$ rounds to process a batch of updates of size $\Oh(\lspace^{1-\kappa})$ with
\begin{itemize}
\item $\widetilde{\Oh}(n/\alpha)$ global memory for insertion-only streams, and
\item $\widetilde{\Oh}(n^2/\alpha^3)$ global memory for dynamic streams.
\end{itemize}
For constant $\kappa$ this gives an $\Oh(1)$ round algorithm and for $\kappa = \Oh(1/\log n)$ an $\Oh(\log \log n)$ round algorithm. In addition, if we are only required to output an estimate on the size of the matching, then the global memory bounds in the above can be improved to $\widetilde{\Oh}(n/\alpha^2)$ and $\widetilde{\Oh}(n^2/\alpha^4)$, respectively.
}
\end{theorem}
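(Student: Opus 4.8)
The plan is to reduce the matching problem to connectivity/spanning-forest maintenance (Theorem~\ref{thm:intro-con}) applied to a carefully subsampled graph, combined with a classical sparsification argument for matching. The starting observation is that an $\Oh(\alpha)$-approximate maximum matching can be extracted from an $\Oh(\alpha)$-\emph{kernel}: a subgraph that, for every vertex, retains up to $\Theta(\alpha)$ (or even $1$, if we are willing to lose a constant factor) of its incident edges, suffices to certify a matching of size within an $\Oh(\alpha)$ factor of the optimum. More concretely, the intended route for the insertion-only bound is to hash the $n$ vertices into $\Theta(n/\alpha)$ groups (buckets) and maintain, via the dynamic spanning-forest machinery of Theorem~\ref{thm:intro-con}, a maximal forest on the contracted multigraph where each bucket is a supernode; the edges of such a forest, lifted back to $G$, form a low-arboricity subgraph whose maximum matching is an $\Oh(\alpha)$-approximation of $\mu(G)$, and it occupies only $\tOh(n/\alpha)$ total memory. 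Because Theorem~\ref{thm:intro-con} already processes a batch of $\tOh(\lspace^{1-\kappa}) = \tOh(n^{\spacexp(1-\kappa)})$ updates in $\Oh(\log(1/\kappa))$ rounds with $\Oh(\log(1/\kappa))$-round recursion (the $\log(1/\kappa)$ arising from the recursive halving of the batch-vs-memory gap), we inherit exactly the claimed round complexity; the matching itself is then computed from the $\tOh(n/\alpha)$-size kernel in $\Oh(1)$ additional rounds by any constant-round MPC maximal-matching routine on a sparse graph.

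For the \emph{dynamic} (insertion/deletion) case, the obstacle is the known semi-streaming impossibility of $\Oh(1)$-approximation (cited as \cite{AKLY16}), which is why we pay with larger total memory. Here I would use linear-sketching: maintain $\ell_0$-sampling sketches (as in the Ahn--Guha--McGregor connectivity sketch, which underlies Theorem~\ref{thm:intro-con}) but on a \emph{vertex-subsampled} instance — keep each vertex independently with probability $\Theta(1/\alpha)$, so the induced subgraph has $\Theta(\mu(G)/\alpha^2)$ expected matching size when $\mu(G)$ is large, and recover a near-perfect matching on the surviving vertex set. Summing the sketch sizes over the $\Theta(n/\alpha)$ surviving vertices, each needing $\tOh(n)$-bit sketches in the worst case of dense neighborhoods but only $\tOh(n/\alpha)$ after an additional edge-subsampling to bound degrees, yields the stated $\tOh(\max\{n^2/\alpha^3, n/\alpha\})$ bound; the two terms correspond to the sketch-storage regime versus the floor imposed by needing at least $\tOh(n/\alpha)$ memory to host the survivors. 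The size-estimation variants save another factor of $\alpha$ because we need only count the recovered matching edges rather than report them, so a coarser subsampling (probability $\Theta(1/\alpha^2)$) and a Chernoff/Chebyshev concentration argument on the count suffice; the constraint $\alpha \le \sqrt n$ ensures the expected survivor count stays $\ge 1$ so the estimator is nontrivial.

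The key steps, in order, are: (i) reduce $\Oh(\alpha)$-approximate matching to maintaining a bounded-degree kernel of size $\tOh(n/\alpha)$, via the arboricity/bucketing argument; (ii) instantiate that kernel as a spanning forest of a contracted multigraph and invoke Theorem~\ref{thm:intro-con} verbatim to get the update time $\Oh(\log(1/\kappa))$ and batch size $\tOh(\lspace^{1-\kappa})$; (iii) for dynamic streams, replace exact forest maintenance by $\ell_0$-sampling sketches on a vertex- (and edge-) subsampled graph and argue correctness of the recovered matching by standard sampling concentration; (iv) bound the total memory in each regime by summing sketch sizes and comparing against the $\tOh(n/\alpha)$ hosting floor; (v) derive the size-estimation improvements by replacing "report the matching" with "estimate its cardinality" and using a sharper subsampling probability. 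The main obstacle I anticipate is step (iii)–(iv): proving that a single linear sketch, maintained through deletions, simultaneously supports (a) correct connectivity/matching recovery on the random subgraph and (b) a total-memory accounting that does not blow past $n^2/\alpha^3$ — this requires care in coupling the vertex-subsampling with the per-vertex degree reduction so that the sketches remain recoverable with high probability while their aggregate size stays controlled, and it is where the $\alpha^3$ (rather than $\alpha^2$) exponent is genuinely forced.
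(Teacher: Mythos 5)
There are genuine gaps here, both in correctness and in where the stated parameters come from. First, your insertion-only reduction is flawed: a spanning forest of the multigraph obtained by contracting $\Theta(n/\alpha)$ random buckets does \emph{not} certify an $\Oh(\alpha)$-approximate matching. The forest only fixes one witness edge per contracted edge, and these witnesses can concentrate on few vertices of $G$ (e.g.\ the contracted forest can be a star whose lifted edges all touch a single bucket of $\Theta(\alpha)$ vertices), so the lifted subgraph may have maximum matching $\Oh(\alpha)$ while $\mu(G)=\Theta(n)$; preserving matching size requires many sampled bucket pairs, not connectivity of the contracted graph. The paper's insertion-only algorithm is much simpler: greedily maintain a matching that is either maximal or truncated at size $\Theta(n/\alpha)$ (\Cref{thm:match-find}), which is trivially an $\Oh(\alpha)$-approximation in $\tOh(n/\alpha)$ memory. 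Second, you attribute the batch size $\tOh(\lspace^{1-\kappa})$ and the $\Oh(\log(1/\kappa))$ round bound to \Cref{thm:intro-con}, but the connectivity result processes batches of size $\tOh(\lspace)$ in $\Oh(1)$ rounds and has no $\kappa$ in it; in the paper these parameters come from invoking the Nowicki--Onak dynamic \MPC maximal-matching algorithm (\Cref{prop:onak}) on the sparsified graph. Your fallback of recomputing a maximal matching on the kernel by ``any constant-round \MPC maximal-matching routine'' does not exist in the sublinear-local-memory regime (the best known static bound is $\tOh(\sqrt{\log n})$ rounds, as the paper itself notes), so that step would break the claimed round complexity.

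For the dynamic case the gap is larger. The paper does not do vertex subsampling at rate $1/\alpha$ with ad hoc edge subsampling; it implements the Assadi--Khanna--Li--Yaroslavtsev construction: hash $L$ and $R$ into $\beta=\lceil \opt'/\alpha\rceil$ groups, assign each $L_i$ a set of $\gamma=\lceil \opt'/\alpha^2\rceil$ random partner groups (``active pairs''), and keep one $\ell_0$-sampler per active pair; the guarantee that a maximum matching of the recovered graph $H$ is an $\Oh(\alpha)$-approximation (\Cref{lem:mat-H}) is exactly the nontrivial combinatorial statement your sketch leaves unproven, and you acknowledge as much in your step (iii)--(iv). Moreover, even granting a correct sparsifier, you never explain how the matching of $H$ is \emph{maintained} as $H$ changes under a batch of updates: the paper's update step gathers the old sampler outputs $X$, deletes them from $H$, updates the sketches, gathers the new outputs $Y$, and inserts them, each time running the dynamic algorithm of \Cref{prop:onak} on $H$ --- this is precisely what yields $\Oh(\log(1/\kappa))$ rounds per batch of $\tOh(\lspace^{1-\kappa})$ updates and total memory $\tOh(\max\{n^2/\alpha^3,n/\alpha\})$ (respectively $\tOh(n^2/\alpha^4)$ for size estimation, where the paper uses the \textsc{Tester} framework of Assadi--Khanna--Li rather than a $1/\alpha^2$ vertex sample). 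Without the sparsifier correctness lemma and without a dynamic matching subroutine on $H$, the proposal does not establish the theorem.
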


In particular, \Cref{thm:intro-match} implies the following results
.

\begin{corollary}
\label{cor:intro-match-low-total}
Let $0 < \eps < \spacexp < 1$ be arbitrary constants.
Given an undirected graph $G$ with $n$ vertices, on an \MPC with $\lspace = \Oh(n^{\spacexp})$ local memory and $\widetilde{\Oh}(n)$ total memory we can process a batch of $\tOh(n^{\spacexp - \eps})$ insertion-only updates and maintain in a constant number of rounds a \textbf{$\Oh(1)$-approximate maximum matching} in $G$.
\end{corollary}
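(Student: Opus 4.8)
The plan is to obtain Corollary~\ref{cor:intro-match-low-total} as a direct specialization of Theorem~\ref{thm:intro-match} to the insertion-only regime with a constant approximation factor. Recall that for every constant $0<\spacexp<1$, every $0<\kappa<\spacexp$ and every $\alpha>1$, Theorem~\ref{thm:intro-match} gives, on an \MPC with $\lspace=\Oh(n^{\spacexp})$ local memory, an algorithm that maintains an $\Oh(\alpha)$-approximate maximum matching in $\Oh(\log(1/\kappa))$ rounds per phase, processing batches of $\tOh(\lspace^{1-\kappa})$ updates with $\tOh(n/\alpha)$ total memory for insertion-only streams. So it suffices to pin down $\alpha$ and $\kappa$ as absolute constants in a way that makes all four target quantities --- approximation factor, number of rounds, total memory, and batch size --- match the corollary's requirements.

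First I would take $\alpha := 2$ (any fixed constant $>1$ works). Then $\Oh(\alpha)=\Oh(1)$ for the approximation factor and $\tOh(n/\alpha)=\tOh(n)$ for the total memory, exactly as required. Next I would choose $\kappa$. The batch bound becomes $\tOh(\lspace^{1-\kappa})=\tOh(n^{\spacexp(1-\kappa)})=\tOh(n^{\spacexp-\spacexp\kappa})$, so in order to be able to process batches of $\tOh(n^{\spacexp-\eps})$ updates it is enough to have $\spacexp\kappa\le\eps$, i.e.\ $\kappa\le\eps/\spacexp$. Since $0<\eps<\spacexp<1$ are constants, $\eps/\spacexp$ is a positive constant, and we may set $\kappa := \frac{1}{2}\min\{\eps/\spacexp,\ \spacexp\}$; this constant satisfies both $0<\kappa<\spacexp$, as demanded by Theorem~\ref{thm:intro-match}, and $\kappa<\eps/\spacexp$. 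Consequently $\spacexp(1-\kappa)>\spacexp-\eps$, so the algorithm handles batches of $\tOh(n^{\spacexp-\eps})$ insertion-only updates, and $\Oh(\log(1/\kappa))=\Oh(1)$ because $\kappa$ is an absolute constant.

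Combining these two choices, Theorem~\ref{thm:intro-match} yields an \MPC algorithm with local memory $\Oh(n^{\spacexp})$ and total memory $\tOh(n)$ that processes a batch of $\tOh(n^{\spacexp-\eps})$ insertion-only edge updates in $\Oh(1)$ rounds while maintaining an $\Oh(1)$-approximate maximum matching, which is precisely the statement of Corollary~\ref{cor:intro-match-low-total}. I do not expect any genuine difficulty here: the only thing to verify is that the parameter restrictions of Theorem~\ref{thm:intro-match} ($0<\kappa<\spacexp$ and $\alpha>1$) can be satisfied simultaneously with the constant-round, $\tOh(n)$-total-memory, and $\tOh(n^{\spacexp-\eps})$-batch targets, and the computation above exhibits such a choice.
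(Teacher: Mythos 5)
Your proposal is correct and is exactly the derivation the paper intends: the corollary is stated as an immediate instantiation of Theorem~\ref{thm:intro-match}, obtained by fixing $\alpha$ to a constant (so $\Oh(\alpha)=\Oh(1)$ and $\tOh(n/\alpha)=\tOh(n)$) and choosing a constant $\kappa$ small enough that $\lspace^{1-\kappa}\ge n^{\spacexp-\eps}$ while $0<\kappa<\spacexp$, making $\Oh(\log(1/\kappa))=\Oh(1)$. Your handling of the case $\eps/\spacexp\ge\spacexp$ via $\kappa=\tfrac12\min\{\eps/\spacexp,\spacexp\}$ is a sound way to meet the theorem's constraint $\kappa<\spacexp$, and nothing further is needed.
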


\begin{corollary}
\label{cor:intro-match-higher-total}
Let $0 < \spacexp < 1$ be an arbitrary constant and $\alpha > 1$ be arbitrary.
Given an undirected graph $G$ with $n$ vertices, on an \MPC with $\lspace = \Oh(n^{\spacexp})$ local memory we can process a batch of $\tOh(n^{\spacexp})$ updates and maintain in $\Oh(\log\log n)$ rounds an \textbf{$\Oh(\alpha)$-approximate maximum matching} 
with
\begin{itemize}
\item $\widetilde{\Oh}(n/\alpha)$ total memory for insertion-only updates, and
\item $\widetilde{\Oh}(\max\{n^2/\alpha^3,n /\alpha\})$ total memory for arbitrary updates.
\end{itemize}
\end{corollary}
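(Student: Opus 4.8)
The plan is to derive \Cref{cor:intro-match-higher-total} from \Cref{thm:intro-match} purely by instantiating the free parameter $\kappa$ appropriately as a (slowly decreasing) function of $n$; no new algorithm is needed. Throughout we use the standard convention that $\lspace = \Theta(n^{\spacexp})$, so that a batch of $\tOh(n^{\spacexp})$ updates is the same as a batch of $\tOh(\lspace)$ updates. Recall that \Cref{thm:intro-match} provides, for every admissible $\kappa \in (0,\spacexp)$, an algorithm that handles batches of size $\tOh(\lspace^{1-\kappa})$ in $\Oh(\log(1/\kappa))$ rounds with total memory $\tOh(n/\alpha)$ for insertion-only updates and $\tOh(\max\{n^2/\alpha^3, n/\alpha\})$ for arbitrary updates. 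Hence the only quantities we must control are the batch size $\lspace^{1-\kappa}$, which we want to be $\tilde{\Theta}(n^{\spacexp})$, and the round count $\log(1/\kappa)$, which we want to be $\Oh(\log\log n)$; the total-memory guarantees do not depend on $\kappa$ and carry over verbatim.

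First I would set $\kappa := \tfrac{c\log\log n}{\log n}$ for a constant $c>0$ fixed below. Then $\lspace^{\kappa} = n^{\spacexp\kappa} = 2^{\spacexp c\log\log n} = (\log n)^{\spacexp c}$, so that $\lspace^{1-\kappa} = \Theta\bigl(n^{\spacexp}/(\log n)^{\spacexp c}\bigr)$; since the $\tOh$-notation in the batch-size bound absorbs polylogarithmic factors, this is $\tOh(n^{\spacexp})$, exactly the batch size claimed in the corollary. For the round count, $1/\kappa = \Theta(\log n/\log\log n)$, whence $\log(1/\kappa) = \log\log n - \log\log\log n + \Oh(1) = \Oh(\log\log n)$, matching the claimed complexity. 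Applying \Cref{thm:intro-match} with this $\kappa$ then yields precisely the stated matching approximation and total-memory bounds.

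Finally I would check admissibility, i.e.\ that $0 < \kappa < \spacexp$. Since $\kappa \to 0$ as $n \to \infty$ while $\spacexp$ is a fixed positive constant, $\kappa < \spacexp$ holds for all $n$ above some threshold $n_0 = n_0(\spacexp,c)$; for $n \le n_0$ the whole input has bounded size and the problem is solved directly on $\Oh(1)$ machines in $\Oh(1)$ rounds, so these finitely many cases are absorbed into the constants. I do not expect a genuine obstacle here: the only care needed is bookkeeping --- ensuring the polylogarithmic slack hidden in the $\tOh(\cdot)$ of the batch size is wide enough to swallow the $(\log n)^{\spacexp c}$ loss, and keeping $c$ (hence $n_0$) tied to $\spacexp$ so that $\kappa$ stays strictly below $\spacexp$. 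Everything else is a direct invocation of \Cref{thm:intro-match}.
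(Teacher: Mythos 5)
Your proposal is correct and matches the paper's intended derivation: the corollary is obtained from \Cref{thm:intro-match} purely by instantiating $\kappa$ as a vanishing function of $n$ (the paper indicates $\kappa = \Oh(1/\log n)$, e.g.\ in the remark accompanying \Cref{prop:onak}, which gives batch size $\Theta(\lspace^{1-\kappa}) = \Theta(n^{\spacexp})$ and $\Oh(\log\log n)$ rounds), and your choice $\kappa = c\log\log n/\log n$ works just as well after the polylog bookkeeping you carry out. No gap.
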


All our claims above are for randomized algorithms and hold as long as the total number of updates is polynomial in $n$.
The algorithms allow the (batch) update to the streaming to be \emph{adversarial}, though the adversary is \emph{oblivious} (e.g., they are not adversarially robust \cite{BJWY22}).

In all claims above, the amount of \emph{global communication} in a single \MPC round is upper bounded by total memory used, and thus it is $\tOh(n)$ in Theorems~\ref{thm:intro-con}--\ref{thm:intro-mst} and in Corollary~\ref{cor:intro-match-low-total}.

The \emph{query time} in all claims above (Theorems~\ref{thm:intro-con}--\ref{thm:intro-match} and Corollaries~\ref{cor:intro-match-low-total}--\ref{cor:intro-match-higher-total}) is constant. This follows from the fact that in each of the problems studied we maintain the solution to the problem at hand. And so, for connectivity, the algorithm maintains vertex labeling such that two vertices have the same label if and only if they are in the same connected component. Therefore reporting the connected components can be easily done by sorting the labels (which can be done in $\Oh(1)$ \MPC rounds, see \cite{GSZ11}). Similarly, for a spanning forest, a minimum spanning forest (or its weight approximation), and an approximate maximum matching, we can maintain the list of all edges in the solution, which can be then reported when needed in a constant number of rounds.

Observe that all our algorithms above have their \emph{total memory asymptotically optimal up to polylogarithmic factors} since it matches the state-of-the-art space bounds of the best streaming algorithms (or lower bounds) for the corresponding problems, see \Cref{subsec:related} for more details.

For simplicity, we assume that the algorithms start with an empty graph on $n$ vertices $V$ (which is a standard setting in dynamic and streaming algorithms). However, our algorithms can be easily modified to start the computation at any point of time and in the ``pre-computation phase'' to solve the instance (of size $\poly(n)$) using known static algorithms. For connectivity and MST this can be done in $\Oh(\log n)$ rounds \cite{AGM12, NO21} and for matching in $\tOh(\sqrt{ \log n})$ rounds \cite{GU19,Onak18}. Hence, the main challenge is to process the updates considerably faster after a batch of insertions/deletions dynamically than by running the static algorithm from scratch every time.
%

\subsection{The model in details}
\label{subsec:model}

We follow the standard terminology of streaming algorithms for graphs \cite{BFKKH23,FKMSZ05,McGregor14} and of parallel algorithms for dynamic graphs \cite{AABD19,DLSY21,ILMP19,NO21,TDB19,TDS22}. We start with an empty graph $G$ with no edges but with a fixed vertex set $V$. Then, 
at each algorithm step a new edge is inserted into the graph or an existing edge is deleted. Since we consider multiple edge insertions and deletions (an update \emph{batch}), in a single phase the graph undergoes a batch of insertions and deletions, and in the initial round of each computation, an update or query batch is arriving at the system arbitrarily distributed among the machines. We will assume that at any moment, the current graph is simple and all deletions concern only the existing edges. However, minor modifications to our algorithms would also take care of the presence of parallel edges.

The underlying computational model is the \emph{Massively Parallel Computation} (\MPC) model (see, e.g., \cite{ANOY14,ASSWZ18,BKS17,CDP24,GSZ11,GU19,KSV10}), which corresponds to the available hardware in the system. {An \MPC system consists of a collection of machines that can communicate with each other through indirect communication channels.} In order to allow broad scalability of our algorithms, we assume that the size \lspace of the local memory of each machine is strongly \emph{sublinear} with respect to the number of vertices of the graph and is $\lspace = n^{\spacexp}$ for\footnote{To simplify the notation, we assume that \spacexp is a constant but our algorithms will work for any \spacexp arbitrarily close to 0. If \spacexp is a function on $n$, then our algorithms would typically run in $\Oh(1/\spacexp)$ rounds (which is $\Oh(\log_{\lspace}n)$). It is also known that on an \MPC with local memory $n^{\spacexp}$, essentially any problem needs $\Omega(1/\spacexp)$ rounds, see \cite{RVW18}.} an arbitrarily small positive constant \spacexp. We parameterize the \MPC model with respect to the total memory, and since we want to optimize this parameter, as in graph streaming problems, we will aim to allow only $n \polylog(n)$ total memory (and hence the number of machines is $\tOh(n/s) = \tOh(n^{1-\spacexp})$), though we relax this target in the analysis for matching, see \Cref{thm:intro-match}. Then, the computation on an \MPC takes place in synchronous rounds: in each round, each machine may perform arbitrary computation on its local memory, and then exchange messages with other machines. Each message is sent to a single machine specified by the machine sending the message. Furthermore, the total messages sent or received by each machine in each round should not exceed its memory \lspace (hence, in a round the total number of messages sent is $\tOh(n)$). The messages are delivered at the start of the next round. If at any moment the solution is requested, it is to be output collectively by machines (e.g., put all edges of a solution on the first $\tOh(n/s) = \Oh(n^{1-\spacexp})$ machines).

In order to use the \MPC model for a dynamically evolving graph, we will assume that the rounds are consolidated into phases, and at the beginning of each phase a batch of updates (insertions of new edges $\{u,v\}$ and deletions of existing edges $\{u,v\}$; the edges may be weighted) arrives to the system, and possibly, at the end of any phase the system must report a solution to the problem at hand. We will assume that the updates can arrive arbitrarily in the system, i.e., at the beginning of any round some machines receive update requests for some edges.
It is important to note that while the (batch) update to the streaming is adversarial, the adversary is oblivious. In order for the model to cope with update batches and to run efficiently, we will bound the number of edge updates per batch, which we parameterize, aiming to allow the batch size to be as large as {$\tOh(\lspace)$} (see Section 2 in \cite{NO21} for some arguments why large batches will likely require $\omega(1)$ rounds). Finally, if the updates have insertions only, we will call such instances \emph{insertion-only}.

We consider the setup in which the updates (at most {$\tOh(\lspace)$} of them) of edge insertions and edge deletions can be arbitrarily distributed among the \MPC machines. However, we can assume without loss of generality that all updates are arriving on a single \MPC machine. This follows from a known fact \cite{GSZ11} that sorting of $N$ objects can be performed in a constant number of rounds on an \MPC with sublinear local memory and total memory $\Oh(N)$. Using this fact, we can preprocess the data and move all update requests to a dedicated single machine in a constant number of \MPC rounds.

We can also assume, without loss of generality, that any single batch consists only of edge insertions or edge deletions. Indeed, since we consider constant-rounds algorithms, we can process the insertions first, and only then the deletions, resulting in a constant-rounds \MPC algorithm.

In our setting, we are assuming that the vertex set $V$ is fixed. It is rather easy to relax this requirement and allow insertions and deletions of isolated vertices, as long as a batch of updates can fit into a local machine. However, since the \MPC machines stay the same, we are assuming that even if the number of vertices changes, we will keep the same local memory \lspace.

\subsection{Related work}
\label{subsec:related}

There is a very rich literature covering dynamic graph algorithms and streaming graph algorithms; for some representative surveys, see, e.g., \cite{HHS22} and \cite{BFKKH23, CY20, McGregor14, Muthukrishnan05}, respectively.

Our results
can be compared to the state of the art of the problems studied in the streaming setting. The problem of connectivity and finding a spanning forest is known to have a $\Oh(n \log^3n)$ bits of space streaming algorithm \cite{AGM12}, matching our total memory bound in \Cref{thm:intro-con}. Nelson and Yu \cite{NY19} showed that $\Omega(n \log^3 n)$ bits of space is required for finding a spanning forest, implying that our total memory bound is tight. For finding an approximate minimum spanning forest, it is known (see, e.g., Ahn \etal \cite{AGM12}) that the space streaming complexity is essentially the same as connectivity, matching our bound in \Cref{thm:intro-mst}(i). For the problem of finding an $\alpha$-approximate matching, Assadi \etal \cite{AKLY16} (see also \cite{CCEHMMV16,Konrad15}) designed a streaming algorithm using space $\tOh(\max\{n^2/\alpha^3,n/\alpha\})$.
We are implementing this algorithm in \Cref{thm:intro-match}, matching this state-of-the-art complexity.
For insertions only, the upper bound of $\tOh(n/\alpha)$ is trivial and $\Omega(n/\alpha)$ is the lower bound to even store an $\alpha$-approximate matching in the worst case. Finally, for estimating the size of maximum matching, the state-of-the-art algorithms are due to Assadi \etal \cite{AKL17}, which use $\tOh(n^2/\alpha^4)$ space for general (turnstile) streams and $\tOh(n/\alpha^2)$ space for insertion-only streams, matching our bounds in \Cref{thm:intro-match}.

Algorithms in the \MPC model have been recently studied very extensively. While the early works on \MPC algorithms considered local memory\footnote{Notice that an \MPC with local memory $\lspace = \Oh(n)$ is almost the same as the Congested Clique model, see \cite{HP15,BDH18}.} $\Oh(n)$  or larger (see, e.g., \cite{ABBMS19,BHH19,CLMMOS18,GGKMR18,LMOS20,LMSV11}), recently the focus moved to the sublinear local memory. Some highlights include connectivity algorithms in $\tOh(\log D)$ rounds \cite{ASSWZ18,BDELM19}, study of some geometric problems \cite{ANOY14}, and $\tOh(\sqrt{\log n})$ rounds matching and MIS algorithms \cite{GU19,Onak18}.

\subsubsection{Related work in dynamic graph algorithms in the \MPC model}
\label{subsubsec:related-dynamic-algorithms}

Our model is very closely related to the D\MPC (\emph{Dynamic \MPC}) model introduced by Italiano \etal \cite{ILMP19} and later extended by Dhulipala \etal \cite{DDKPSS20} and Nowicki and Onak \cite{NO21}.

Italiano \etal \cite{ILMP19} were the first to emphasize the importance of the study of dynamic graphs in the \MPC framework, though their focus was on a single update. They obtained $\Oh(1)$-rounds \MPC algorithms handling a single update or query and maintaining connectivity, approximate MST, and maximal matching; the focus was on the local memory $\Oh(\sqrt{n+m})$ and the total memory was $\Oh(n+m)$. The emphasis was mainly on minimizing the total communication between different machines.

Dhulipala \etal \cite{DDKPSS20} extended the model to handle multiple edge updates and considered the case of sublinear local memory, while still obtaining constant-rounds \MPC algorithms, focusing on minimizing the total communication between different machines. The main result shows that on an \MPC with local memory $\lspace = \Oh(n^{\spacexp})$, in $\Oh(1)$ rounds one can maintain a dynamic undirected graph that can handle up to $k$ edge insertions/deletions and up to $k$ queries for connectivity between pairs of vertices, where $k n^{\eps} \polylog(n) \le \lspace$ for arbitrary constant $\eps>0$. The total communication for processing a batch of $k$ operations is $\tOh(k n^{\eps})$ and the total memory 
is $\tOh(n+m)$.

Nowicki and Onak \cite{NO21} continued this line of research and designed constant-rounds \MPC algorithms with local memory $\lspace = \Oh(n^{\spacexp})$ that can process $\Theta(\lspace)$ and $\Theta(\lspace^{1-\eps})$ updates to maintain minimum spanning forest and maximal matching, respectively. While the total memory used across all \MPC machines is $\tOh(n+m)$, this work has dropped the emphasis on the total communication between different machines. The focus was to maximize the size of the update batch processed while minimizing the number of rounds needed for the algorithm. As for communication, \cite{NO21} allows $\Theta(m)$ global communication during each round, which is usual in the static variant of the \MPC model, and is similar to our setup, except that we have $\tOh(n)$ global communication per round.

In regards to the model, the main difference between our setting and earlier related works
is that (i) \cite{ILMP19} considers only a single update, (ii) \cite{ILMP19,DDKPSS20} \emph{minimizes the overall number of messages sent} by \MPC machines in each round\footnote{In \cite{NO21}, the communication complexity is allowed to be $\Theta(m)$ global communication during each round, which is usual in the static variant of the \MPC model. The communication complexity in our algorithms is bounded by the respective total memory used, and hence it is $\tOh(n)$ in \Cref{thm:intro-con,thm:intro-mst,cor:intro-match-low-total}, which is substantially less than $\Oh(m)$ in the worst-case.}, and (iii) all papers \cite{ILMP19,DDKPSS20,NO21} use total memory $\widetilde{\Theta}(n+m)$ allowing to store the entire graph. In our paper, we want to
\begin{itemize}
\item \emph{maximize the number of updates} that can be processed and
\item \emph{minimize the total memory} used (aimed to be $\ll \tOh(n+m)$),
\item while \emph{minimizing the number of rounds} for processing (which we target to be constant).
\end{itemize}


\subsubsection{Distributed and \PRAM algorithms in dynamic setting}
\label{subsubsec:related-dynamic-PRAM-and-distributed-algorithms}

There has been only a limited amount of research on modeling dynamic parallel algorithms in modern distributed systems despite their potential impact on modern applications because of their speedup and better utilization of resources. Some early works include, e.g., Censor-Hillel \etal \cite{CHK16} who designed a dynamic algorithm for maintaining a maximal independent set of a graph in the \LOCAL model; the algorithm was maintaining only a single update though. See also \cite{AG18} etc.


The setting of parallel algorithms for multiple (batches of) updates has been recently investigated for the classical \PRAM and related models, see, e.g., \cite{AABD19,AABDW20,ABT20,DLSY21,DMVZ18,GL20,LSYDS22,TDS22}. The setting is similar to that studied in earlier works on \MPC \cite{ILMP19,DDKPSS20,NO21}, and in particular, it was not concerned with optimizing the total memory; the focus was on fast updates. The problems studied in these works include dynamic parallel algorithms for connectivity, spanning forest, MST, clique counting, etc.

\section{Technical overview}
\label{subsec:overview}


\subsection{Maintaining connectivity for a single update}
\label{subsubsec:overview-maintaining-connectivity-single-update}

The starting point of our connectivity algorithm is the streaming algorithm by Ahn, Guha, and McGregor (AGM)~\cite{AGM12} that uses $\Oh(n \log ^3 n)$ space. This algorithm can be implemented in the \MPC model where we can update every single insertion and deletion in $\Oh(1)$ rounds. However, to report the spanning forest, the algorithm requires $\Oh(\log n)$ rounds. The high-level idea of their algorithm is to maintain $\Oh(\log n)$ many independent sketches for each vertex.
At the end of the stream, one can merge the sketches of the vertices to generate a spanning forest in $\Oh(\log n)$ steps.

A direct \MPC implementation of the above AGM algorithm would require $\Oh(\log n)$ rounds to report a spanning forest as well. However, our goal is to spend only $\Oh(1)$ rounds per update and to report a spanning forest at any point in time. Towards this, we first present a streaming algorithm that uses $\Oh(n \log^3 n)$ space whose update time is $\Oh(n)$. Our idea is to maintain a spanning forest of the current graph at any instance along with the sketches of the vertices. When an edge $\{u,v\}$ is inserted such that $u$ and $v$ are in the same connected component, we need to only update the sketches of $u$ and $v$. If $u$ and $v$ are in different connected components, $\{u,v\}$ is included in the current spanning forest $F$ along with the update of the sketches. When an edge $\{u,v\}$ is deleted such that $\{u,v\}$ is not in $F$, we only need to update the sketches of $u$ and $v$. Otherwise, we consider the two components of $F$ that are created by deletion of the edge $\{u,v\}$ and let those be $Z_u$ and $Z_v$. We merge the sketches of the vertices in $Z_u$, and try to find a replacement edge between $Z_u$ and $Z_v$ which would be part of the spanning forest of the new graph. The detailed algorithm and analysis are presented in 
    \Cref{subsec:AGM12}
and \Cref{sec:correct}, respectively. To implement the above algorithm in \MPC, we rely on the technique of Euler tour trees. Each tree in $F$ is stored as an Euler tour and we consider the basic operations like rooting an Euler tour, merging two Euler tours, and splitting an Euler tour into two to facilitate the update operations in $\Oh(1)$ rounds. The details are presented in \Cref{algo:single-update}.

Next, we highlight the main differences between our algorithm and the AGM algorithm in \cite{AGM12}. 
While after each update, the data structure of the AGM algorithm can be updated in $\poly(\log n)$ sequential time, our algorithm 
requires \emph{$\widetilde{\Oh}(n)$ sequential time} to update its data structures. If we want to report a spanning forest of the current graph, the AGM algorithm requires $\Oh(\log n)$ steps ($\Oh(\log n)$ rounds in the \MPC model) while our algorithm explicitly stores a spanning forest of the current graph. The number of rounds in \MPC required to update the data structure in both the AGM algorithm and our algorithm is $\Oh(1)$. If we focus on reporting a spanning forest only at the end of the stream, then the AGM algorithm works as desired even if the number of updates is not bounded by $\poly(n)$. However, our algorithm requires the number of updates to be 
$\poly(n)$.

Now, we describe one of the possible main reasons why we can achieve $O(1)$ round complexity as opposed to $\Oh(\log n)$ rounds achieved by the AGM algorithm, in the \MPC model. Both our algorithm and AGM algorithm may need to merge $\Oh(n)$ sketches. Note that $\Oh(n)$ sketches can be merged in $\Oh(\log_{\lspace}n) = \Oh(1/\phi)$ rounds of \MPC. In the AGM algorithm, the sketches are not all just merged in one shot, but rather in $\Oh(\log n)$ steps. Merging of the components in a later round depends on the sketches used in previous rounds. Note that the AGM algorithm does not maintain any additional information apart from the sketches. Since our algorithm maintains the connected components along with the sketches, we take advantage of the extra information to bypass $\Oh(\log n)$ \MPC rounds barrier in direct implementation of AGM algorithm in our model.

\subsection{Maintaining connectivity for batch updates}
\label{subsubsec:overview-maintaining-connectivity-batches}

Now we describe how to extend the above algorithm for a large batch of updates that can be fit into the local memory of a single machine i.e., of size $\lspace=\Oh(n^\spacexp)$. For simplicity, we consider the insertions and deletions in a single batch separately in two rounds.

\paragraph{Insertions.}
Given a batch of insertions, for each edge, one of two cases must occur: either the edge is between two distinct components in $G$ and therefore it might become a part of the new spanning forest, or both endpoints of the edge are in the same component in $G$ which makes it a non-tree edge. The challenge here is to identify their types in one shot and update all the relevant information in a constant number of rounds. First, we construct an auxiliary graph $H$ where each vertex corresponds to a connected component of $G$. Insert all the edges to $H$ that do not create a self-loop or a parallel edge, and compute a spanning forest $F_H$ in $H$. This can be done in $\Oh(1)$ rounds. The edges in $F_H$ constitute the set of edges that merge two different connected components in $G$. The Euler tour for the updated graph is composed of the Euler tours of the existing components of $G$ via the edges of $F_H$. This amounts to determining the relative positions of the part of the trees in between two \emph{terminal} vertices (endpoint of an edge in $F_H$) in the final Euler tour. We first construct an Euler tour-like sequence $S$ for the auxiliary graph $H$
from which we show how to find an Euler tour for the whole graph in $\Oh(1)$ \MPC rounds. This is one of the main technical contributions of our work as previously, in the context of \MPC setting, efficient algorithms for maintaining Euler tours were only known under single edge updates. Updating the sketches and inserting the remaining edges (between the same connected components) extends from the single-edge update scenario.

\paragraph{Deletions.}
Now we discuss how we handle the deletions which is arguably the harder case. In a batch of deletions, for each edge, there are two cases to consider. If the edge is not part of the maintained spanning forest $F$ then the connectivity information and the spanning forest do not change. However, when an edge from $F$ is deleted either the corresponding component splits into two parts or we might find a \emph{replacement edge} that might connect back these two components. However, notice that we do not have all such replacement edges stored explicitly in the memory. For the deletion of a batch of tree edges, the challenge is to then identify all the potential replacement edges \emph{at once} and build the new spanning forest using those. We first delete the edges from the graph and update the sketches accordingly. To update the Euler tours, which are now split into many parts after the deletions, we employ an inverse procedure of the methods that we use to update the Euler tours for processing insertions. Next, we construct an auxiliary graph $H$ where each vertex corresponds to a connected component of $G$ that has at least one endpoint of a deleted edge from the batch. Since the edges that are not part of the spanning forest $F$ are not explicitly stored, we use the maintained sketches to recover the replacement edges between the vertices in $H$. We then find a spanning forest $F_H$ in $H$ which can be done in $\Oh(1)$ rounds. Finally, we need to add the edges of the spanning forest $F_H$ to the intermediate spanning forest of $G$ (after the deletions) to find the new spanning forest. The corresponding Euler tours due to these insertions can be updated using our algorithm for processing insertions as well.

Our algorithm for handling the batch updates crucially differs from the earlier works in the following way. While Italiano \etal \cite{ILMP19} use Euler tours to implement their connectivity algorithm for single-edge updates, implementing the Euler tour trees for batch updates is more technically challenging if we want to do so in $\Oh(1)$ rounds. While Nowicki and Onak \cite{NO21} consider batch updates of size $\Oh(n^\spacexp)$ in $\Oh(1)$ rounds for MST and 2-edge connectivity, they use the top tree data structure to implement their algorithms, which is arguably more complicated. Dhulipala et. al \cite{DDKPSS20} use parallel batch-dynamic tree data structure to process a batch of $\Oh(n^{1-\Theta(1)})$ updates in $\Oh(1)$ rounds for maintaining connectivity. We consider the parallel merging of a batch of Euler tour trees and the splitting of an Euler tour tree into a number of Euler tour trees in $\Oh(1)$ rounds, which is technically interesting. The batch size considered in our work is $\Oh(n^\spacexp/\poly(\log n))$. The loss of a $\poly(\log n)$ factor in the batch size is caused by storing $\Oh(\log n)$ many sketches. In order to process a batch of a certain size, all the sketches corresponding to the updates must fit onto a single machine. Dhulipala et. al \cite{DDKPSS20} also uses sketching but stores $n^{\Theta(1)}$ sketches per vertex. While  Dhulipala et. al \cite{DDKPSS20} focuses on minimizing the total communication along with achieving $\Oh(1)$ rounds update complexity, our objective is to achieve $\Oh(1)$ rounds algorithm while maximizing the batch size, similar to Nowicki and Onak \cite{NO21}. The crux of our algorithm is the use of total memory strictly sublinear in the graph's size, matching the space complexity of the best-known streaming algorithm for connectivity.

\subsection{Application of connectivity: Minimum spanning forest and bipartiteness}
\label{subsubsec:overview-MSF-bipartiteness}

We present applications of the connectivity result: (i) Exact Minimum Spanning Forest (MSF) in insertion-only streams, (ii) $(1+\eps)$-approximation to the weight of MSF, and (iii) testing whether the input graph is bipartite, both in dynamic streams. The total memory of all three algorithms is $\tOh(n)$ and the corresponding data structures in all three cases can be updated in $\Oh(1)$ rounds of \MPC.

For exact MSF in insertion-only streams, we consider the following (folklore) streaming algorithm that requires $\Omega(n)$ time to update the data structure for a single insertion. We maintain an MSF $F$ of the current graph at any instance of time. When an edge $e=\{u,v\}$ is inserted, we check whether $e$ is the heaviest edge in the cycle formed by joining $e$ with the path between $u$ and $v$ in $F$. We implement this algorithm in the \MPC model using Euler tours. Along with Rooting, Join, and Split operations on Euler tours, we consider another operation called \emph{Identify-Path} in which we are given two vertices, and the objective is to identify all the edges in the unique path in $F$ between them. We show that a batch of $\Oh(n^\spacexp)$ such Identify-Path operations can be executed in $\Oh(1)$ rounds which is the crux of our algorithm for exact MSF in insertion-only streams showing $\Oh(1)$ \MPC rounds are enough to update a batch of $\Oh(n^\spacexp)$ insertions. We also note here that our connectivity algorithm for batch updates is used here only in a black box manner.
The $(1+\eps)$-approximation to the weight of the MSF and testing whether the input graph is bipartite are more straightforward applications of our connectivity algorithm under batch updates.

\subsection{Approximate maximum matching}
\label{subsubsec:overview-approx-max-matching}

As already mentioned in \Cref{intro}, we give separate algorithms for finding an $\Oh(\alpha)$-approximate maximum matching and $\Oh(\alpha)$-approximation to the value of the maximum matching.

\paragraph{Matching finding.}
In insertion-only streams, we consider the folklore streaming algorithm that stores a matching greedily over the stream of size at most $\Oh(n/\alpha)$. We show that this algorithm can be implemented in \MPC such that $\Oh(1)$ rounds is enough to process an update of size at most $\Oh(n^{\spacexp})$. For dynamic streams, our approach relies on extending the known algorithms for finding an approximate matching and implementing those in the \MPC model suitably. However, it is not the case that all (sketching-based) algorithms in the streaming literature (to find an approximate matching) can be extended to \MPC. In particular, there are three different papers~\cite{AKLY16, CCEHMMV16, Konrad15} that give essentially the same result: there exists a dynamic streaming algorithm that uses $\widetilde{\Oh}(n^2/\alpha^3)$ space and reports a matching whose size is an $\Oh(\alpha)$ approximation to the size of the maximum matching. However, our techniques rely on the streaming algorithm of Assadi \etal \cite{AKLY16} and we do not see how the algorithms of~\cite{CCEHMMV16, Konrad15} can be extended to the \MPC setting such that we can efficiently process a batch of updates. At a high level, the algorithm by Assadi \etal \cite{AKLY16} generates a sparsified graph $H$ of size $\tOh(\max\{n^2/\alpha^3,n/\alpha\})$ such that any maximal matching of $H$ is an $\Oh(\alpha)$-approximation to the maximum matching of the input graph. We show that we can dynamically process a batch of updates in $\Oh(1)$ rounds to the input graph and generate a batch of updates to the graph $H$. Here, we use the \MPC algorithm by Nowicki and Onak~\cite{NO21} for maintaining a maximal matching that processes a batch of updates in $\Oh(1)$ rounds and uses total memory proportional to the size of the graph.

\paragraph{Matching size estimation.}
We build on the streaming algorithms by Assadi, Khanna, and Li~\cite{AKL21} that can report an $\Oh(\alpha)$ approximation to the size of the maximum matching; that uses $\widetilde{O}(n/\alpha)$ space and $\widetilde{O}(n/\alpha^2)$ space in insertion-only streams and in dynamic streams, respectively. The streaming algorithms of \cite{AKL21} for insertion-only and dynamic streams are based on a meta-algorithm (whose $\Oh(\log n)$ instances are run in parallel with different parameters) called $\mbox{\textsc{Tester}}(G,k)$, $k \in \mathbb{N}$. $\mbox{\textsc{Tester}}(G, k)$ takes a graph $G$ as input over a stream along with the parameter $k$ and distinguishes between $\mbox{OPT} \geq k$ and $\mbox{OPT} \leq  k/2$, where $\opt$ denotes the size of the maximum matching in $G$. Note that the space complexity of $\mbox{\textsc{Tester}}(G,k)$ is $\tOh(k)$ and $\tOh(k^2)$ in insertion-only and dynamic streams, respectively. In insertion-only, $\mbox{\textsc{Tester}}(G,k)$ maintains a matching of size at most $k$ greedily. In dynamic streams, $\mbox{\textsc{Tester}}(G,k)$ generates a sparse subgraph $H$ of size $\tOh(k^2)$. We show that we can dynamically process a batch of updates in $\Oh(1)$ rounds to the input graph and generate a batch of updates to the graph $H$. Here, we use the \MPC algorithm by Nowicki and Onak~\cite{NO21} for maximal matching and process a batch of updates in $\Oh(1)$ rounds.

\subsection{Organization}
\label{subsec:organization}

After some preliminaries in \Cref{sec:prelims}, we discuss a streaming algorithm for connectivity in dynamic streams in \Cref{sec:pure-streaming}. In \Cref{sec:connect-single}, we implement the above algorithm in \MPC under single edge updates. We conclude with some open directions in \Cref{sec:conclusion}. Please refer to the full version for the missing details.
\section{Preliminaries}
\label{sec:prelims}

Let $[n]$ denote the set $\{1,\ldots, n\}$. Let $G = (V(G),E(G))$ denote an \emph{undirected} graph $G$ with vertex set $V(G)$ and edge set $E(G)$. When $G$ is clear from the context, we may denote $V(G)$ and $E(G)$ by $V$ and $E$, respectively. 
Throughout the paper, $n$ denotes the number of vertices in $G$. For an edge $e$, $G \cup \{e\}$ denotes the graph with vertex set $V(G)$ and edge set $E(G) \cup \{e\}$. $G \setminus \{e\}$ denotes the graph with vertex set $V(G)$ and edge set $E(G) \setminus \{e\}$. A subgraph $F$ of a graph $G$ is called a \emph{spanning forest} if it is a collection of disjoint trees that cover all the vertices of $G$ without forming cycles. An edge present in a spanning forest is referred to as a \emph{tree edge} and all other edges in the graph are called \emph{non-tree edges}. For a vector $\mathbf{X}$, $\mathbf{X}_j$ denotes the $j$-th coordinate of $\mathbf{X}$ and $\ell_0(\mathbf{X})$ is the standard $\ell_0$ norm of vector $\mathbf{X}$, i.e., the number of non-zero elements in vector $\mathbf{X}$. When we say an event holds \emph{with high probability} (w.h.p.), we mean that the event holds with probability at least $1-1/\poly(n)$.

\subsection{Preliminaries on sketching}
\label{subsec:sketching}

Graph sketching provides a powerful approach to summarize and approximate the structural properties of graphs without needing to store or process the entire graph. Graph sketching algorithms typically operate incrementally as the graph stream arrives, updating the sketch accordingly. By constructing compact graph sketches, it becomes possible to approximate key graph properties.

First, in the following lemma, we discuss the $\ell_0$ sketching result of a vector in streaming. Then we use it to discuss sketching the neighborhood of the vertices in a graph.
\begin{lemma}[\cite{CJ19}]
\label{lem:ell0}
Let $\delta \in (0,1)$.
{There exists a dynamic streaming algorithm that receives updates to a vector $\mathbf{X}\in \{-1,0,1\}^{N}$, stores a sketch $S$ of $\Oh(\log ^2 N \cdot \log (1/\delta))$ bits for $\mathbf{X}$, and works as follows with probability $1-1/\delta$: upon a query to sketch $S$ we either get $\perp$ or an index $i \in [N]$ randomly from the set $\{j \in [N]:\mathbf{X}_j\neq 0\}$ depending on whether $\ell_0(\mathbf{X})=0$ or not.}
\end{lemma}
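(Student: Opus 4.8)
The plan is to realize the sketch $S$ as a standard $\ell_0$-sampler built from geometric subsampling on top of exact $1$-sparse recovery, and then amplify the success probability. Concretely, I would fix a prime $q=\Theta(N^{c})$ for a suitable constant $c$ together with a random $z\in\mathbb{F}_q^{\times}$, and for each subsampling level $j\in\{0,1,\dots,\lceil\log_2 N\rceil\}$ maintain, over the stream, three linear functionals of the restricted vector $\mathbf{X}^{(j)}$ — the subvector of $\mathbf{X}$ supported on the coordinates that ``survive'' to level $j$: the weight $w_j=\sum_i \mathbf{X}^{(j)}_i$, the position sum $p_j=\sum_i i\,\mathbf{X}^{(j)}_i$, and the polynomial fingerprint $\psi_j=\sum_i \mathbf{X}^{(j)}_i z^{\,i}\bmod q$. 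Each of these is a linear sketch of $\Oh(\log N)$ bits, updatable in $\Oh(1)$ time per coordinate update, and membership of a coordinate in level $j$ is decided by a shared $\Oh(1)$-wise independent hash so that $i$ survives to level $j$ with probability $2^{-j}$ and the levels are nested. That is $\Oh(\log N)$ levels of $\Oh(\log N)$ bits, i.e.\ $\Oh(\log^2 N)$ bits; I would keep $\Theta(\log(1/\delta))$ independent copies of this whole object (with independent hash and independent $z$), for a total of $\Oh(\log^2 N\cdot\log(1/\delta))$ bits, matching the claim.

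The $1$-sparse recovery tester at level $j$ declares ``$\mathbf{X}^{(j)}$ is the singleton at $i^\ast$ with value $w_j$'' exactly when $|w_j|=1$, $p_j/w_j\in[N]$ is an integer, and $\psi_j=w_j\,z^{\,p_j/w_j}\bmod q$; in that case it outputs $i^\ast=p_j/w_j$. If $\mathbf{X}^{(j)}$ really is $1$-sparse the test always passes, and if $\mathbf{X}^{(j)}=\mathbf{0}$ it fails since $w_j=0$; the only possible error is a false positive when $\ell_0(\mathbf{X}^{(j)})\ge 2$, which forces the nonzero polynomial $\sum_i \mathbf{X}^{(j)}_i z^{\,i}-w_j z^{\,i^\ast}$ of degree less than $N$ to vanish at the random $z$, an event of probability at most $N/(q-1)=\Oh(N^{1-c})$ by Schwartz--Zippel. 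A union bound over all levels, all copies, and the polynomially many query times (the adversary being oblivious) keeps the total false-positive probability at $1/\poly(N)$, which is absorbed into $\delta$ when $\delta\ge 1/\poly(N)$ and otherwise handled by enlarging $c$. On a query I would, within each copy, take the highest level the tester accepts and report its index, report across copies any index so obtained, and output $\perp$ only if every copy has $w_0=0$.

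For correctness and near-uniformity, fix the support $T=\{i:\mathbf{X}_i\ne0\}$ with $t=|T|\ge1$. At level $j^\ast=\lfloor\log_2 t\rfloor$ the count of surviving elements of $T$ is a sum of $t$ (limited-independence) indicators of mean $\Theta(1)$, so $\Pr[\text{exactly one of }T\text{ survives at level }j^\ast]\ge c_0$ for an absolute constant $c_0>0$; conditioned on exactly one surviving element it is uniform over $T$ by the symmetry of the subsampling. Hence each copy returns a correct index, uniform on $T$, with probability $\ge c_0$, so with $\Theta(\log(1/\delta))$ copies at least one does with probability $\ge 1-\delta$; combining the reported indices by an independent uniform choice among them keeps the output uniform on $T$. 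If $t=0$ then $w_j=0$ at every level of every copy and we correctly output $\perp$.

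The step I expect to be the main obstacle is making the reported index \emph{exactly} (not merely $\delta$-close to) uniform on $T$ while scanning $\Oh(\log N)$ levels per copy: ``report the highest accepting level'' is only safe on the event that that level has a unique survivor, so level ties — two support elements reaching the same maximal level — must either be shown to contribute only $\Oh(1/\poly(N))$ bias (and then folded into $\delta$) or sidestepped by giving each coordinate an $\Oh(\log N)$-bit random rank and defining levels by rank thresholds, at the cost of a slightly larger but still $\Oh(\log^2 N\cdot\log(1/\delta))$-bit sketch. The false-positive bound and the $\Theta(\log(1/\delta))$-fold amplification are then routine.
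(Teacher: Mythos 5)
The paper does not actually prove this lemma: it is imported wholesale from the cited reference [CJ19], so there is no in-paper argument to compare against. Your reconstruction is, in substance, the standard $\ell_0$-sampler of that line of work (geometric subsampling levels, a $1$-sparse tester per level consisting of the count $w_j$, the index-weighted sum $p_j$, and a polynomial fingerprint verified via Schwartz--Zippel, then $\Theta(\log(1/\delta))$-fold amplification), and the space accounting $\Oh(\log^2 N\cdot\log(1/\delta))$ bits, the linearity/mergability needed by the paper's \Cref{lem:sketch}, and the handling of the all-zero case are all correct. The one genuine soft spot is the one you flagged yourself: with only $\Oh(1)$-wise independent subsampling, ``conditioned on a unique survivor the survivor is uniform on $T$'' does not follow from symmetry, since $\Pr[i \text{ is the unique survivor}]$ involves joint events over all of $T$ and can differ across $i$; the literature handles this either by tolerating a $1/\poly(N)$ statistical deviation from uniform (which is what [CJ19]-style statements guarantee) or by using higher independence/random ranks as you suggest. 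For this paper that caveat is immaterial: the lemma is only used (via \Cref{lem:sketch1,lem:sketch,lem:sketch-A-B}) to recover \emph{some} edge of a nonempty cut with high probability, so near-uniformity suffices, and your argument establishes exactly what the downstream proofs need. Two minor points worth tightening if you write this up: the union bound over query times should be stated as relying on the oblivious adversary and the $\poly(n)$ bound on stream length that the paper assumes, and your rule ``output $\perp$ only if every copy has $w_0=0$'' should be rephrased (e.g., output $\perp$ whenever no copy's tester accepts), since $w_0=\sum_i\mathbf{X}_i$ can vanish on a nonzero $\mathbf{X}$; as you note, this only affects the $\delta$-probability failure event, but the cleaner rule avoids the appearance of a detection criterion for $\ell_0(\mathbf{X})=0$.
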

\begin{remark}
Note that the sketch in \Cref{lem:ell0} is \emph{linear} or \emph{mergable} i.e., let $S_1$ and $S_2$ be the linear sketches of vectors $\mathbf{X}_1$ and $\mathbf{X}_2$, then $S_1+S_2$ is also a sketch of $\mathbf{X_1}+\mathbf{X_2}$. This makes the updates in the streaming algorithm of \Cref{lem:ell0} easy.
\end{remark}
Using \Cref{lem:ell0}, we discuss the sketching results we will be using in this paper. To begin with, let $G$ be a graph with $V(G)=\{v_1,\ldots,v_n\}$.  For a vertex $v_i \in V(G)$, consider a vector $\mathbf{X}_{v_i}$ in $\{-1,0,1\}^{n \choose 2}$ whose each entry is of the form $\{j,k\} \in {[n] \choose 2}$ such that
\[ \mathbf{X}_{v_i} (\{v_j,v_k\})=\begin{cases}
      +1, &  \{v_j,v_k\}\in E(G)~\mbox{and}~i=\max\{j,k\}   \\
      -1, &  \{v_j,v_k\}\in E(G)~\mbox{and}~i=\min\{j,k\}   \\
      0, & \mbox{otherwise.}
   \end{cases}
\]


For a subset $S \subseteq V(G)$, let us define $\mathbf{X}_S:=\sum\limits_{v \in S}\mathbf{X}_v$. Let $E(A, B)$ be the set of edges having one endpoint in $A$ and the other endpoint in $B$.

Now, consider the following lemma that gives a connection between the vector $\mathbf{X}_A$ and the set of edges in $E(A, V\setminus A)$, where $A$ is a subset of $V(G)$.
\begin{lemma}[\cite{AGM12}]
Let us consider $A \subseteq V(G)$. Then $\size{E(A,V \setminus A)}=\ell_o(\mathbf{X}_A)$. Also, for any $\{j,k\} \in {[n] \choose 2}$, $\mathbf{X}_{A}(\{j,k\})=\pm 1$ if and only if $ \{j,k\} \in E(A,V\setminus A)$.
\end{lemma}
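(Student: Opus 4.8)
The plan is to unwind the definition of the vectors $\mathbf{X}_{v_i}$ coordinate by coordinate. Fix a pair $\{v_j,v_k\} \in {[n] \choose 2}$ and assume without loss of generality $j < k$. By construction, the coordinate of $\mathbf{X}_{v_i}$ indexed by $\{v_j,v_k\}$ is nonzero only if $i \in \{j,k\}$ and $\{v_j,v_k\} \in E(G)$; in that case $\mathbf{X}_{v_j}(\{v_j,v_k\}) = -1$ (since $j=\min\{j,k\}$) and $\mathbf{X}_{v_k}(\{v_j,v_k\}) = +1$ (since $k=\max\{j,k\}$). Since $\mathbf{X}_A = \sum_{v \in A}\mathbf{X}_v$, the value $\mathbf{X}_A(\{v_j,v_k\})$ is the sum of these (at most two) contributions restricted to those of $v_j, v_k$ that lie in $A$.

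First I would dispose of the case $\{v_j,v_k\} \notin E(G)$: then every contribution is $0$, so $\mathbf{X}_A(\{v_j,v_k\}) = 0$, and such a pair is trivially not in $E(A,V\setminus A)$. If $\{v_j,v_k\} \in E(G)$, a short case analysis on $\size{\{v_j,v_k\}\cap A} \in \{0,1,2\}$ finishes it: if this intersection is empty or equals $\{v_j,v_k\}$, then either no contribution is present or the two contributions $-1$ and $+1$ cancel exactly, so $\mathbf{X}_A(\{v_j,v_k\}) = 0$, and indeed the edge does not cross the cut $(A, V\setminus A)$; if the intersection has size exactly one, precisely one of $-1,+1$ survives, so $\mathbf{X}_A(\{v_j,v_k\}) = \pm 1$, and the edge does cross the cut. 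This establishes the second assertion, the equivalence $\mathbf{X}_A(\{v_j,v_k\}) = \pm 1 \iff \{v_j,v_k\} \in E(A,V\setminus A)$.

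The first assertion then follows immediately: every coordinate of $\mathbf{X}_A$ lies in $\{-1,0,1\}$, so $\ell_0(\mathbf{X}_A)$, the number of nonzero coordinates, equals the number of pairs with $\mathbf{X}_A(\{v_j,v_k\}) = \pm 1$, which by the second assertion is exactly $\size{E(A,V\setminus A)}$. I do not expect a genuine obstacle here; the only point demanding care is the sign bookkeeping in the definition of $\mathbf{X}_{v_i}$ — orienting each edge from its smaller- to its larger-indexed endpoint — since this is precisely what makes contributions of edges internal to $A$ cancel while contributions of cut edges persist.
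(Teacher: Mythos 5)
Your proof is correct. The paper itself gives no argument for this lemma — it is cited directly from Ahn--Guha--McGregor \cite{AGM12} — and your coordinate-wise verification (each coordinate $\{v_j,v_k\}$ receives a $-1$ from $v_j$ and a $+1$ from $v_k$ when the edge is present, so the contributions cancel exactly when both endpoints or neither endpoint lies in $A$, and a single $\pm 1$ survives exactly for cut edges, whence $\ell_0(\mathbf{X}_A)=\size{E(A,V\setminus A)}$) is precisely the standard argument underlying that citation, with the sign bookkeeping handled correctly.
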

Next, we consider three different lemmas that talk about sketching the neighborhood of a vertex (\Cref{lem:sketch1}), or a set of vertices (\Cref{lem:sketch}), or between two sets of vertices (\Cref{lem:sketch-A-B}). All three lemmas follow from \Cref{lem:ell0}. However, we have stated them in different forms that will be useful in our proofs.
\begin{lemma}
\label{lem:sketch1}
{Let $v$ be a vertex of $G$ and $\delta \in (0,1)$. There exists a one-pass streaming algorithm that  stores an $\Oh(\log^2 n \cdot \log (1/\delta))$ size sketch $S_{v}$ of $\mathbf{X}_v$ such that the algorithm does the following with probability $1-\delta$:}
\begin{itemize}
\item when $E(\{v\}, V \setminus \{v\}) =\emptyset$,  i.e., $v$ is a singleton vertex, we get  $\perp$;
\item  $E(\{v\}, V \setminus \{v\}) \neq \emptyset$, we get  a random edge from $E(\{v\}, V \setminus \{v\})$, i.e., a random edge with $v$ as an endpoint.
\end{itemize}
\end{lemma}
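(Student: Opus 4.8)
The plan is to derive \Cref{lem:sketch1} as an almost immediate corollary of \Cref{lem:ell0}, by instantiating the latter with the vector $\mathbf{X}_v$ associated to the vertex $v$. First I would recall that $\mathbf{X}_v \in \{-1,0,1\}^{N}$ with $N = \binom{n}{2}$, and that by the definition of $\mathbf{X}_v$ its nonzero coordinates are exactly the pairs $\{v_j,v_k\}$ that form an edge incident to $v$; equivalently, applying the preceding lemma (the $\ell_0$-characterization of $\mathbf{X}_A$) with the singleton set $A = \{v\}$, the support of $\mathbf{X}_v$ is precisely $E(\{v\}, V\setminus\{v\})$, and $\ell_0(\mathbf{X}_v) = |E(\{v\}, V\setminus\{v\})|$. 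The key observation is that every edge insertion/deletion incident to $v$ in the graph stream translates into a single coordinate update of $\mathbf{X}_v$ (flipping it between $0$ and $\pm 1$), so a graph stream induces a legitimate vector-update stream for $\mathbf{X}_v$, and we may run the algorithm of \Cref{lem:ell0} on it.

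Next I would carry out the translation of the guarantees. Set the failure parameter of \Cref{lem:ell0} to the desired $\delta$, so the sketch $S_v := S$ has size $\Oh(\log^2 N \cdot \log(1/\delta)) = \Oh(\log^2 n \cdot \log(1/\delta))$ bits, using $N = \binom{n}{2} \le n^2$ so $\log^2 N = \Oh(\log^2 n)$. By \Cref{lem:ell0}, with probability $1-\delta$ a query to $S_v$ returns $\perp$ if $\ell_0(\mathbf{X}_v) = 0$ and otherwise returns an index drawn uniformly at random from the support of $\mathbf{X}_v$. Combining this with the support characterization above: $\ell_0(\mathbf{X}_v) = 0$ exactly when $E(\{v\}, V\setminus\{v\}) = \emptyset$, i.e.\ $v$ is singleton, in which case we get $\perp$; and when $E(\{v\}, V\setminus\{v\}) \ne \emptyset$, the returned index is a uniformly random element of the support, which is a uniformly random edge incident to $v$. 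This matches the two bullet points in the statement verbatim, so the lemma follows.

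I do not expect a genuine obstacle here — the content is essentially a packaging of \Cref{lem:ell0} in graph-theoretic language, and the only things to be careful about are bookkeeping: (i) checking that the map from graph updates to vector updates is well-defined and that deletions only ever concern existing edges (guaranteed by the model assumptions, so $\mathbf{X}_v$ stays in $\{-1,0,1\}^N$); (ii) confirming $\log^2 N = \Theta(\log^2 n)$ so the stated sketch size is correct; and (iii) noting that the $\pm$ sign pattern of $\mathbf{X}_v$ (which endpoint is the min/max index) is irrelevant to the support and hence to the sampling guarantee. If anything is mildly delicate, it is only making explicit that ``a query'' to the sketch is the $\ell_0$-sampling query of \Cref{lem:ell0} and that the uniformity of the sampled index over the support is exactly what ``a random edge with $v$ as an endpoint'' means; this is a one-line remark rather than a hard step.
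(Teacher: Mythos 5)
Your proposal is correct and is essentially the paper's own proof: the paper likewise derives \Cref{lem:sketch1} by directly instantiating \Cref{lem:ell0} with the vector $\mathbf{X}_v$, observing that any update to the neighborhood of $v$ is a coordinate update to $\mathbf{X}_v$. Your additional bookkeeping (support of $\mathbf{X}_v$ equals $E(\{v\},V\setminus\{v\})$, and $\log^2\binom{n}{2}=\Oh(\log^2 n)$) only spells out details the paper leaves implicit.
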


\begin{proof}
This follows from \Cref{lem:ell0} where our objective is to store a sketch of the vector $\mathbf{X}_v$. Note that any update to the neighborhood of $v$ can be thought of as an update to the vector $\mathbf{X}_v$.
\end{proof}

\begin{lemma}\label{lem:sketch}
{Let $\delta \in (0,1)$.
There exists a one-pass streaming algorithm that stores an $\Oh(\log^2 n \cdot \log (1/\delta))$ size  sketch $S_{v}$ of $\mathbf{X}_v$ for each $v \in V(G)$ such that the algorithm does the following for a given subset $A \subset V(G)$ with probability $1-\delta$:}
\begin{itemize}
    \item reports $\perp$ when $E(A, V \setminus A)=\emptyset$
    \item reports a random edge from $E(A, V\setminus A)$ when $E(A, V \setminus A) \neq \emptyset$.
\end{itemize}
\end{lemma}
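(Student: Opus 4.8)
\textbf{Proof plan for \Cref{lem:sketch}.}

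The plan is to reduce the statement to \Cref{lem:ell0} exactly as in \Cref{lem:sketch1}, but now working with the aggregated vector $\mathbf{X}_A = \sum_{v \in A}\mathbf{X}_v$ rather than a single $\mathbf{X}_v$. First I would note that by \Cref{lem:ell0} there is a dynamic streaming algorithm that, for each coordinate vector $\mathbf{X}_v \in \{-1,0,1\}^{\binom{n}{2}}$, maintains a sketch $S_v$ of $\Oh(\log^2 n \cdot \log(1/\delta))$ bits; here $N = \binom{n}{2}$, so $\log^2 N = \Theta(\log^2 n)$, which accounts for the stated size. Each edge insertion or deletion incident to $v$ is a single coordinate update to $\mathbf{X}_v$, so these sketches are maintainable over the stream. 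The key structural fact I would invoke is the preceding lemma from \cite{AGM12}: for $A \subseteq V(G)$ we have $\ell_0(\mathbf{X}_A) = \size{E(A, V\setminus A)}$, and every coordinate $\{j,k\}$ with $\mathbf{X}_A(\{j,k\}) = \pm 1$ corresponds precisely to an edge of $E(A, V\setminus A)$. (One should check the cancellation: an edge with both endpoints in $A$ contributes $+1$ from one endpoint and $-1$ from the other, hence $0$; an edge with exactly one endpoint in $A$ survives with value $\pm 1$; an edge with no endpoint in $A$ is $0$ in every summand.)

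Next I would use the linearity/mergeability remark following \Cref{lem:ell0}: since $S_v$ is a linear sketch of $\mathbf{X}_v$, the sum $\sum_{v \in A} S_v$ is a valid sketch of $\mathbf{X}_A$. So given the query set $A$, the algorithm forms $S_A := \sum_{v\in A} S_v$ and queries it as in \Cref{lem:ell0}. With probability $1 - \delta$ the query returns $\perp$ if $\ell_0(\mathbf{X}_A) = 0$ and otherwise a uniformly random index $i$ from $\{\{j,k\} : \mathbf{X}_A(\{j,k\}) \ne 0\}$. By the structural fact, $\ell_0(\mathbf{X}_A)=0$ is equivalent to $E(A,V\setminus A) = \emptyset$, and in the nonzero case the returned index is exactly a uniformly random edge of $E(A, V\setminus A)$. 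This is precisely the claimed behavior, so the proof amounts to: (i) maintain the per-vertex $\ell_0$-sketches of \Cref{lem:ell0}; (ii) on a query for $A$, add the sketches to get a sketch of $\mathbf{X}_A$; (iii) invoke the $\ell_0$-sampling guarantee and translate back via the $E(A, V\setminus A)$ correspondence.

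I do not expect any real obstacle here — this is a direct corollary of \Cref{lem:ell0} together with the $\mathbf{X}_A$ lemma, and the only things to be careful about are the bookkeeping that $N=\binom{n}{2}$ gives the right sketch size, the edge-cancellation argument for why $\mathbf{X}_A$ encodes exactly the boundary $E(A,V\setminus A)$, and that mergeability is being applied to sketches built with the \emph{same} randomness so that $\sum_{v\in A} S_v$ is genuinely the sketch of $\mathbf{X}_A$ that \Cref{lem:ell0} analyzes. The failure probability $\delta$ is inherited verbatim from \Cref{lem:ell0} applied once to the query for $A$. Hence the proof can be kept to a couple of sentences pointing to \Cref{lem:ell0}, the linearity remark, and the $E(A,V\setminus A)$ lemma.
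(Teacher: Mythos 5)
Your proposal is correct and follows essentially the same route as the paper: the paper also forms $S_A = \sum_{v\in A} S_v$, invokes linearity of the sketches together with $\mathbf{X}_A = \sum_{v\in A}\mathbf{X}_v$, and appeals to the $\ell_0$-sampling guarantee of \Cref{lem:ell0} (via \Cref{lem:sketch1}) and the AGM correspondence between nonzero coordinates of $\mathbf{X}_A$ and edges of $E(A, V\setminus A)$. Your added remarks on the cancellation of internal edges and on using common randomness across the per-vertex sketches are details the paper leaves implicit, but the argument is the same.
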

\begin{proof}
Let us consider the sketch $S_v$ for $\mathbf{X}_v$ as guaranteed by the algorithm \Cref{lem:sketch1}. Clearly, the space complexity of the algorithm is $\Oh(n \log ^3 n)$. For $A \subseteq V(G)$, let us consider the sketch $S_A= \sum_{v \in A}S_v$. Recall that $S_v$ is the sketch for $\mathbf{X_v}$, $\mathbf{X}_A=\sum_{v \in A}\mathbf{X}_v$, and sketches are linear. So, $S_A$ is a sketch for $\mathbf{X}_A$ such that the desired properties hold (as mentioned in the statement) w.h.p.
\end{proof}


\begin{lemma}\label{lem:sketch-A-B}
Let $A$ and $B$ be disjoint subsets of $V(G)$.
There exists a one-pass streaming algorithm that  stores an $\Oh(\log^3 n)$ size  sketch for $E(A,B)$ such that w.h.p. the algorithm does the following:
\begin{itemize}
    \item reports $\perp$ when $E(A, B)=\emptyset$
    \item reports a random edge from $E(A, B)$ when $E(A, B) \neq \emptyset$.
\end{itemize}
Note that the space complexity of the algorithm is $\Oh(n \log^3 n)$.
\end{lemma}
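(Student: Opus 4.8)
The plan is to reduce the "edges between two sets" problem to the "edges leaving a set" problem (Lemma~\ref{lem:sketch}) by a standard vertex-relabeling trick, so that no genuinely new sketching machinery is needed. Concretely, I would first observe that it suffices to have, for each vertex $v \in V(G)$, an $\Oh(\log^3 n)$-bit sketch of the incidence vector $\mathbf{X}_v$ with failure probability $\delta = 1/\poly(n)$, exactly as provided by \Cref{lem:sketch1}; storing these for all $n$ vertices costs $\Oh(n\log^3 n)$ space, which is the claimed bound. The point is that one fixed family of per-vertex sketches can be post-processed, after the stream, to answer the $E(A,B)$ query for \emph{any} disjoint pair $A, B$.

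Next I would spell out the post-processing. Given disjoint $A, B \subseteq V(G)$, form the combined sketch $S_A + S_B = \sum_{v \in A \cup B} S_v$, which by linearity is a sketch of $\mathbf{X}_{A \cup B}$; by the lemma preceding \Cref{lem:sketch1}, a nonzero coordinate of $\mathbf{X}_{A\cup B}$ corresponds to an edge of $E(A \cup B, V \setminus (A \cup B))$, i.e.\ an edge with exactly one endpoint in $A \cup B$. That is not quite what we want: we want an edge with one endpoint in $A$ and the other in $B$, not an edge leaving $A \cup B$ to the outside. The clean fix is to contract (conceptually) everything outside $B$ into the set $A$: run the sampler of \Cref{lem:sketch} on the pair $(\,V \setminus B,\ B\,)$, i.e.\ query $S_{V\setminus B} = \sum_{v \notin B} S_v$, and intersect the event with having sampled an edge whose $(V\setminus B)$-endpoint lies in $A$. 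Cleaner still, and what I would actually write: use the two complementary queries. Query the sampler for the cut $(A, V\setminus A)$ and, independently, for the cut $(B, V\setminus B)$; an edge of $E(A,B)$ is precisely an edge that crosses \emph{both} cuts. To get a uniform sample from $E(A,B)$ directly, it is simplest to note that $\mathbf{X}_{A}$ restricted to coordinates $\{j,k\}$ with $v_j \in A, v_k \in B$ is exactly $\pm 1$ on $E(A,B)$ and $0$ elsewhere \emph{once we also know the edge leaves $A$}; combining the $\ell_0$-sampler guarantee of \Cref{lem:ell0} applied to the sub-vector supported on pairs crossing into $B$ gives a uniform element of $E(A,B)$. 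I would therefore invoke \Cref{lem:ell0} once more on the restricted vector $\mathbf{X}_A|_{A\times B}$, whose updates are again a subset of the stream updates, obtaining the sketch and the sampling guarantee in one stroke, and then union-bound the $1/\poly(n)$ failure probability to get "with high probability."

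The main obstacle, and the only place that needs care, is making the "restrict to pairs crossing from $A$ into $B$" step legitimate as a \emph{linear} sketch computed from per-vertex data after the stream. The sets $A$ and $B$ are not known while the stream is processed, so we cannot zero out the irrelevant coordinates during updates; we can only do it as a post-hoc linear operation on the stored sketches. Since the $\ell_0$-sketch of \Cref{lem:ell0} is a linear map applied to $\mathbf{X}$, and coordinate restriction (zeroing out coordinates outside $A \times B$) is itself linear, I would argue that one cannot in general apply an arbitrary coordinate restriction to an already-compressed linear sketch — so the honest route is the cut-based one: keep the per-vertex sketches $S_v$, form $S_A + S_B$, sample an edge $e$ leaving $A \cup B$ or (better) sample from the cut $(A, V\setminus A)$, and then \emph{test} whether the returned edge also has its other endpoint in $B$; if not, report $\perp$ conditioned on that, and boost by running $\Oh(\log n)$ independent copies so that, whenever $E(A,B)\neq\emptyset$, at least one copy returns an $E(A,B)$ edge w.h.p. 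This costs only an extra $\polylog$ factor, stays within $\Oh(n\log^3 n)$ (absorbing the extra $\log$ into the stated $\Oh(\log^3 n)$ per-object / $\Oh(n\log^3 n)$ total bound by adjusting the hidden constant and the $\delta$ in \Cref{lem:ell0}), and yields a (near-)uniform sample from $E(A,B)$. I would close by noting the bipartite-like structure here is exactly why the lemma is stated for \emph{disjoint} $A$ and $B$: disjointness guarantees that an edge crossing both cuts $(A,\cdot)$ and $(B,\cdot)$ has one endpoint in each, with no self-loop ambiguity.
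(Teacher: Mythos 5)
Your proposal goes off track at the point where you decide that $A$ and $B$ ``are not known while the stream is processed.'' That assumption is not part of the statement: the lemma stores a sketch \emph{for $E(A,B)$}, i.e.\ the pair $(A,B)$ is fixed when the sketch is initialized (and this is exactly how it is used later, where the active pairs $(L_i,R_j)$ are chosen in a preprocessing phase before the stream). The paper's proof is therefore the one-line argument you actually wrote mid-way and then abandoned: apply \Cref{lem:ell0} directly to the vector supported on the coordinates $\{j,k\}$ with one endpoint in $A$ and the other in $B$; every stream update to an edge of $E(A,B)$ is an update to that vector, the sketch has size $\Oh(\log^2 n\cdot\log(1/\delta))=\Oh(\log^3 n)$ with $\delta=1/\poly(n)$, and the sampler returns $\perp$ or a uniformly random edge of $E(A,B)$ exactly as claimed. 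There is no need to route through \Cref{lem:sketch} or the per-vertex sketches $S_v$ at all, and no coordinate-restriction of an already-compressed sketch is ever performed.

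The fallback you propose instead --- sample from the cut $(A,V\setminus A)$ via $S_A$, test whether the other endpoint lies in $B$, and boost with $\Oh(\log n)$ independent copies --- does not prove the lemma. The acceptance probability of a single trial is $|E(A,B)|/|E(A,V\setminus A)|$, which can be polynomially small (e.g.\ one edge into $B$ against $\Theta(n)$ or more edges leaving $A$ elsewhere), so $\Oh(\log n)$ repetitions give only a $1-\bigl(1-|E(A,B)|/|E(A,V\setminus A)|\bigr)^{\Oh(\log n)}$ success probability, far from ``with high probability'' whenever $E(A,B)\neq\emptyset$; making it work would require $\widetilde{\Omega}\bigl(|E(A,V\setminus A)|/|E(A,B)|\bigr)$ independent copies, each with its own fresh per-vertex sketches, which destroys the space bound. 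So the rejection-sampling route fails, and the gap is repaired simply by using the fact that $(A,B)$ is known to the algorithm when the sketch is created, which is what the paper's proof (``follows from \Cref{lem:ell0} with $\delta=1/n^{\Oh(1)}$'') does.
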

\begin{proof}
   {Follows from \Cref{lem:ell0} and setting $\delta=1/n^{\Oh(1)}$.}
\end{proof} 
\section{Streaming algorithm for spanning-forest}
\label{sec:pure-streaming}

Ahn, Guha, and McGregor~\cite{AGM12} proposed a streaming algorithm for maintaining a spanning forest that uses $\Oh(n \log ^3 n)$ bits of space and succeeds w.h.p. Moreover, the update time of the algorithm is $\poly (\log n)$. This algorithm can be implemented in \MPC where we can update every single insertion and deletion in $\Oh(1)$ rounds. But to report the spanning tree the algorithm will require $\Oh(\log n)$ rounds. Here, we first present a streaming algorithm that uses $\Oh(n \log^3 n)$ space whose update time is $\widetilde{\Oh}(n)$. However, as we show in \Cref{sec:connect-single}, this algorithm can be implemented in \MPC such that both update and query can be performed in $\Oh(1)$ rounds.

\subsection{High-level description of Ahn \etal \cite{AGM12}}
\label{subsec:AGM12}

The algorithm of Ahn \etal \cite{AGM12} maintains $t=\Oh(\log n)$ independent sketches
(cf. Lemma~\ref{lem:sketch1})
of $\mathbf{X}_v$ for each vertex $v$. Let $S_v^{(i)}$ be the $i$-th sketch of $\mathbf{X}_v$. Note that the total space complexity 
is $\Oh(n \log ^4 n)$. After the end of the stream, the algorithm finds a spanning forest of the graph in $t$ iterative steps as follows. In the first round, we find an edge for each vertex $v$ from sketch $S^{(1)}_v$ (if exists) and merge the vertices present in the same connected component in the graph $H_1(V_1, E_1)$, where $V_1=V(G)$ and $E_1$ is the edge set found in the first round. Let $V_2$ be the set of supernodes where each supernode corresponds to a connected component in $H_1$. Similarly, in the $i$-th round with $2 \leq i \leq t$, $V_i$ is the set of supernodes, each supernode corresponding to a connected component in $H_{i-1}$. We use the $i$-th sketches of the vertices to find edges from the supernodes in $V_{i}$ (if exists). Let $E_i$ be the set of edges between the nodes in $V_i$ found from the sketches. Observe that the algorithm terminates after $t=\Oh(\log n)$ rounds at which point each supernode in $V_t$ corresponds to a connected component of $G$.

The above algorithm can be implemented on an \MPC when the number of updates is $\poly(n)$. Note that, for updating the data structure with insertion 
or deletion of an edge $\{u,v\}$, we just need to update the sketches $S_{u}^{(1)},\ldots, S_u^{(t)}, S_v^{(1)},\ldots, S_v^{(t)}$. These updates to sketches can be done in $\Oh(1)$ rounds. To answer a query, we can compute a spanning forest of the current graph in $\Oh(\log n)$ rounds in the same way as we described above for finding a spanning forest at the end of the stream.

\subsection{Our algorithm using $\Oh(n \log^3 n)$ space}
\label{algo:single-update}

We start with the description of the data structure that our streaming algorithm maintains followed by the complete algorithm.

We maintain the following data structures:
\begin{itemize}
\item \textbf{Component Id}: For each $i \in [n]$ a value $C[i]$ that denotes the component id in which vertex $i$ lies. Initially, $C[i]=i$ for each $i \in [n]$. For a vertex $u \in V$, we denote by $C_u$ the component having vertex $u$, i.e., the set of vertices with component id $C[u]$. For simplicity, the component id of a component $C$ is the minimum id of any vertex in $C$.

\item \textbf{Spanning Forest:} A spanning forest $F$ of the current graph~$G$.

\item \textbf{Sketches:} For every vertex $v \in V(G)$, a sketch $S_v$ for $\mathbf{X}_v$.
%

\end{itemize}

\paragraph{Algorithm.}
Using the data structures defined above we can describe our algorithm in detail. 
The main algorithm is \textsc{Connectivity} that initializes the data structures and uses subroutines \textsc{Insert}, \textsc{Delete}, and \textsc{Query}.
\vspace{5pt}

\begin{algorithm}[H]
\caption{\textsc{Connectivity} $(e=\{u,v\})$}
\label{alg:conn}

\textbf{Initialization:} The spanning forest consists of $n$ components where each component corresponds to a vertex of the graph.  No sketch for the components is stored and those will be initiated over updates.

\textbf{Operation:} Based on the instruction whether insertion of an edge, deletion of an edge, or reporting of the current spanning tree call \textsc{Insert}, \textsc{Delete}, or \textsc{Query}; respectively.
\end{algorithm}


\paragraph{Insertion:}
We now describe the subroutine \textsc{Insert}  that updates the data structures when we insert an edge $e=\{u,v\}$. It updates the sketches corresponding to the vertices $u$ and $v$ and updates the spanning forest and the component ids of the vertices depending on whether $u$ and $v$ are in the same component currently.
\vspace{5pt}

\begin{algorithm}[H]
\caption{\textsc{Insert} $(e=\{u,v\})$  }
\label{algo:insert}
   Update the sketches $S_{u}$ and $S_v$ with the information that $\{u,v\}$ is inserted.

\If{$C[u]=C[v]$}
{
No change to $F$.

}
\Else
{
    Add  $\{u,v\}$ to $F$. \ {\small \textit{// Note that components $C[u]$, $C[v]$ are merged to form a bigger component}}

    Update the component id of all vertices in $C_u$ or $C_v$ as follows:

    \lIf{$C[u] < C[v]$}{the components ids of all the vertices in $C_v$ are changed to $C[u]$}
    \lElse{the component ids of all the vertices in $C_u$ are changed to $C[v]$}

}
\end{algorithm}

\paragraph{Deletion:}
Next, we describe subroutine \textsc{Delete}  that updates the data structures when we delete an edge $e=\{u,v\}$. It updates the sketches corresponding to the vertices $u$ and $v$ and updates the spanning forest and the component ids of the vertices depending on whether $\{u,v\}$ is an edge in the current spanning forest $F$ or not. When $\{u,v\}$ is not in $F$, we do not need to do anything. Otherwise (when $\{u,v\}$ is in $F$), the update procedure is slightly more involved. The component of $F$ having both $u$ and $v$ (before the deletion of $\{u,v\}$) will be divided into two components (after the deletion of $\{u,v\}$). Let $Z_u$ and $Z_v$ denote the set of vertices in the components having $u$ and $v$, respectively. The high-level idea is to merge the sketches of the vertices in $Z_u$ in order to find a possible edge from a vertex of $Z_u$ to $Z_v$. The details are in \Cref{alg:delete}.
\vspace{5pt}

\begin{algorithm}[H]
\caption{\textsc{Delete} $(e=\{u,v\})$}
\label{alg:delete}

// \textit {\small In this case $C[u]=C[v]$}

  Update the sketches $S_{u}$ and $S_v$ with the information that $\{u,v\}$ is deleted.

\If {$\{u, v\}$ is not in $F$}
{
   No need to do anything.
}
\ElseIf{$\{u,v\}$ is in $F$}
{
   Determine the sketch ${S}_{Z_u}$ of ${X}_{Z_u}$ by merging the sketches of all the vertices in $Z_u$.

    Use the sketch ${S}_{Z_u}$ to find a possible replacement edge between $Z_u$ and $Z_v$.

    \If {$S_{Z_u}$ reports $\perp$}
    {
        {\small \textit{ // In this case, there is no edge between $Z_u$ to $Z_v$}.}

        Update the component ids of the vertices in $Z_u \cup Z_v$ as follows:

        \quad $\triangleright$ all vertices in $Z_u$ are assigned a component id $\min_{x \in Z_u} C[x]$ and all vertices in $Z_v$ are assigned a component id $\min_{x \in Z_v} C[x]$.
        \label{algo:delete-line-id}
    }
    \ElseIf{$S_{Z_u}$ reports an edge $\{a,b\}$}
    {
        Add $\{a,b\}$ to $F$.
    }
}
\end{algorithm}

\medskip

The subroutine \textsc{Query} of the main algorithm is simple as we store an explicit spanning forest in our data structure.

\medskip

\begin{algorithm}[H]
\caption{\textsc{Query}}  \label{alg:query}
Report the spanning forest $F$ stored by the data structure.
\end{algorithm}

\subsection{Correctness proof of \textsc{Connectivity}}
\label{sec:correct}

We begin our analysis of Algorithm \textsc{Connectivity} with the following simple lemma.

\begin{lemma}
\label{lemma:space-Connectivity}
The space complexity of \textsc{Connectivity} is $\Oh(n \log^3 n)$.
\end{lemma}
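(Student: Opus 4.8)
\textbf{Proof proposal for Lemma~\ref{lemma:space-Connectivity}.}

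The plan is to account for each of the three data structures maintained by \textsc{Connectivity} separately and add up the bounds. First I would handle the \emph{Component Id} array: it stores one value $C[i]$ for every vertex $i \in [n]$, and each such value is a vertex index, hence fits in $\Oh(\log n)$ bits. This contributes $\Oh(n \log n)$ bits in total. Second, the \emph{Spanning forest} $F$: being a forest on $n$ vertices it has at most $n-1$ edges, and each edge is a pair of vertex indices, so storing $F$ explicitly (say, as an edge list, or implicitly through the Euler-tour representation used later in the \MPC implementation) takes $\Oh(n \log n)$ bits as well. Third, and dominant, the \emph{Sketches}: for every vertex $v \in V(G)$ we keep a single sketch $S_v$ of $\mathbf{X}_v$, and by \Cref{lem:sketch1} (which is an instantiation of \Cref{lem:ell0} with $N = \binom{n}{2}$) each such sketch, once we set the failure probability to $\delta = 1/n^{\Oh(1)}$ so that the algorithm succeeds w.h.p., occupies $\Oh(\log^2 n \cdot \log(1/\delta)) = \Oh(\log^3 n)$ bits. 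Over all $n$ vertices this is $\Oh(n \log^3 n)$ bits.

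Summing the three contributions, the total is $\Oh(n \log n) + \Oh(n \log n) + \Oh(n \log^3 n) = \Oh(n \log^3 n)$ bits, which is the claimed bound. I would remark that, unlike the AGM algorithm of \cite{AGM12} which keeps $t = \Oh(\log n)$ independent sketches per vertex (and therefore uses $\Oh(n \log^4 n)$ bits), our algorithm only needs a \emph{single} sketch per vertex, because the explicitly maintained spanning forest $F$ supplies the extra structural information that AGM had to recover through $\Oh(\log n)$ rounds of repeated sketch queries; this is exactly what keeps the space at $\Oh(n \log^3 n)$.

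There is no real obstacle here — the only point that needs a moment's care is the choice of $\delta$. The sketch size is logarithmic in $1/\delta$, and since \textsc{Connectivity} may process up to $\poly(n)$ updates and we want a union bound over all of them to still give a $1 - 1/\poly(n)$ success probability, we take $\delta = 1/n^{c}$ for a suitably large constant $c$; this only inflates the per-sketch size by a constant factor, leaving the $\Oh(n \log^3 n)$ bound intact. (One subtlety worth flagging is that the sketches must stay valid across deletions and re-queries; this is exactly the reuse-of-randomness issue, which is why the correctness claims hold against an \emph{oblivious} adversary and for polynomially many updates — but this concerns correctness, handled in \Cref{sec:correct}, not the space accounting done here.)
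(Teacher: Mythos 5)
Your proposal is correct and follows essentially the same route as the paper: enumerate the three maintained data structures (component-id array, spanning forest, one sketch per vertex), bound the per-vertex sketch by $\Oh(\log^3 n)$ via \Cref{lem:sketch1} with $\delta = 1/\poly(n)$, and observe that the sketches dominate. Your extra remarks on the choice of $\delta$ and the contrast with the $\Oh(\log n)$-sketch AGM scheme are accurate but not needed for the space accounting.
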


\begin{proof}
From the description of \textsc{Connectivity}, our algorithm stores a spanning forest, a component id array $C$, and a sketch for each vertex $v \in V(G)$. By \cref{lem:sketch1}, $\Oh(\log^3 n)$ space is enough to store the sketch of a single vertex. Hence, the claimed space complexity bound follows.
\end{proof}

Next, we prove that Algorithm \textsc{Connectivity} maintains a spanning tree of the current graph $G$ whp.

\begin{lemma}
\label{lemma:correctness-Connectivity}
At any instant of time, Algorithm \textsc{Connectivity}  behaves as follows with high probability: $F$ is a spanning forest of (the current graph) $G$ and all vertices in a connected component $X$ have the same component id, i.e., $\min_{v \in X} C[v]$.
\end{lemma}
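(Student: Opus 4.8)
The plan is to prove Lemma~\ref{lemma:correctness-Connectivity} by induction on the number of updates (insertions and deletions) processed so far, maintaining the following invariant: after processing any prefix of the update stream, $F$ is a spanning forest of the current graph $G$ and for every connected component $X$ of $G$, all vertices of $X$ share the component id $\min_{v \in X} C[v]$ (and vertices in different components have different ids). The base case is the initialization: $G$ is empty, $F$ consists of $n$ isolated vertices, and $C[i]=i$, so the invariant holds trivially and deterministically. For the inductive step I would assume the invariant holds before an update and argue it is restored afterwards, with a w.h.p.\ caveat introduced only where sketches are queried; a union bound over the $\poly(n)$ updates (using that each sketch query in Lemmas~\ref{lem:sketch1}--\ref{lem:sketch} succeeds with probability $1-1/\poly(n)$) then yields the global high-probability statement.

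For an insertion of $e=\{u,v\}$: the sketches $S_u, S_v$ are updated (this is exact, no error), and adding an edge to $G$ preserves the spanning-forest property of $F$ in both cases — if $C[u]=C[v]$ then by the inductive hypothesis $u,v$ already lie in one tree of $F$, so $e$ is a non-tree edge and $F$ still spans every component; if $C[u]\neq C[v]$ then $u,v$ lie in distinct trees, $e$ joins them without creating a cycle, so $F\cup\{e\}$ is again a spanning forest, and the component-id relabelling (pushing the smaller id onto the union) restores the id invariant for the merged component while leaving all others untouched. For a deletion of $e=\{u,v\}$: again the sketch update is exact; if $e\notin F$ then removing it from $G$ changes no connected component (since $F$ certifies connectivity within each component via a path avoiding $e$), so the invariant is untouched. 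The interesting case is $e\in F$: by the inductive hypothesis the tree of $F$ containing $e$ splits into exactly two subtrees on vertex sets $Z_u \ni u$ and $Z_v \ni v$, and the vertex set of the old component of $G$ is exactly $Z_u \cup Z_v$. Here I invoke Lemma~\ref{lem:sketch} applied to the set $A=Z_u$: querying $S_{Z_u}=\sum_{x\in Z_u}S_x$ returns $\perp$ iff $E(Z_u, V\setminus Z_u)=\emptyset$ and otherwise a uniformly random edge of $E(Z_u, V\setminus Z_u)$. Since after deleting $e$ there are no edges of $G$ between $Z_u$ and $V\setminus(Z_u\cup Z_v)$ (those two sets were in different components before the update and no edges were added), we have $E(Z_u, V\setminus Z_u)=E(Z_u,Z_v)$, so the query correctly detects whether a replacement edge exists: if $\perp$, then $Z_u$ and $Z_v$ are genuinely disconnected in the new $G$, $F$ minus $e$ is a spanning forest of it, and assigning $\min_{x\in Z_u}C[x]$ and $\min_{x\in Z_v}C[x]$ restores the id invariant on the two new components; if it returns $\{a,b\}\in E(Z_u,Z_v)$, then adding $\{a,b\}$ to $F$ reconnects the two subtrees into one spanning tree of the (unchanged) component, so $F$ is again a spanning forest and the component ids need no change.

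I anticipate two points requiring care rather than a genuine obstacle. First, one must verify the structural claims underlying the deletion case — that removing a tree edge of a spanning forest splits exactly one tree into exactly two, and that the resulting vertex partition $Z_u\sqcup Z_v$ coincides with the partition of the old $G$-component — which follows from the inductive hypothesis that $F$ was a spanning forest of the pre-update $G$. Second, and more subtly, in the ``reports $\{a,b\}$'' branch of \textsc{Delete} the algorithm does \emph{not} relabel component ids; I should confirm this is correct, namely that the connected components of $G$ are unchanged after deleting $e$ and adding the replacement edge $\{a,b\}$ (both endpoints being in $Z_u\cup Z_v$), so the old ids remain valid. The only place probabilistic failure enters is the single sketch query per deletion of a tree edge; since there are at most $\poly(n)$ updates, a union bound over all such queries — each failing with probability at most $1/\poly(n)$ with the degree of the polynomial chosen large enough via $\delta$ in Lemma~\ref{lem:sketch} — keeps the overall failure probability at $1/\poly(n)$, which gives the ``with high probability'' conclusion. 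The space bound is already handled by Lemma~\ref{lemma:space-Connectivity}, so nothing further is needed there.
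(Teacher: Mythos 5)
Your proposal is correct and follows essentially the same route as the paper's proof: induction on the number of updates with a case analysis for insertions (tree/non-tree) and deletions (non-tree edge, tree edge with or without a replacement edge), invoking Lemma~\ref{lem:sketch} with $A=Z_u$ for the tree-edge deletion case and a union bound over the $\poly(n)$ updates. The one point the paper makes explicit that you leave implicit is why Lemma~\ref{lem:sketch} may be applied to the (history-dependent) set $Z_u$ at all: since $Z_u$ and the queried vector $\mathbf{X}_{Z_u}$ are determined by earlier sketch answers, the per-query success guarantee needs the observation (the paper's Observation~\ref{obs:Suv}) that the adversary is oblivious, so the update sequence is fixed independently of the sketch randomness; your union-bound step tacitly assumes this independence and would benefit from stating it.
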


\begin{proof}
For the clarity of presentation, we use the normal notations before the update and dashed notations after the update i.e., after an update: graph $G$, spanning forest $F$, array $C$, and the sets $C_u$ and $Z_u$ are denoted by $G'$, $F'$, $C'$, $C'_u$, and $Z'_u$, respectively.

We prove the lemma by induction on the number of updates. Note that the lemma holds at the beginning of the algorithm when none of the edges were inserted (or deleted). To prove the inductive step consider the case of insertion and deletion separately.

\paragraph{Insertion case:}
Let the current update be the insertion of an edge $e=\{u,v\}$. Consider the procedure \textsc{Insert}$(e)$ (to insert an edge $e$).

\vspace{1mm}
\noindent {\bf Case 1:} ($C[u] = C[v]$) In this case, there is no change to $F$ (i.e., $F'=F$) and note that $F$ is a spanning forest of $G$ as well as $G'=G \cup \{e\}$. That is, both arrays $C$ and $C'$ are identical. 

\vspace{1mm}
\noindent {\bf Case 2:} ($C[u] \neq C[v]$) In this case $\{u,v\}$ is added to the current spanning forest $F$ and the earlier components $C_u$ and $C_v$ are merged using the edge $\{u,v\}$ to form a bigger component. Observe that $F'=F \cup \{e\}$ is a spanning forest of $G'=G \cup \{e\}$. In \textsc{insert}, we update the component ids of the vertices in $C_u$ or $C_v$ to $C_v$ or $C_u$ depending on whether $C[u]> C[v]$ or not. Observe that, there is no change to any component except for $C_u$ and $C_v$.


\paragraph*{Deletion Case:} Let the current update be the deletion of an edge $e=\{u,v\}$. Consider the procedure \textsc{Delete}$(e)$ (to delete an edge $e$).

\vspace{1mm}
\noindent {\bf Case 1:} ($\{u,v\}$ is not an edge in $F$) In this case, there is no change to $F$ (i.e., $F'=F$) and note that $F$ is a spanning forest of $G$ as well as $G'=G \cup \{e\}$.  That is, both arrays $C$ and $C'$ are identical.  

\vspace{1mm}
\noindent {\bf Case 2:} ($\{u,v\}$ is an edge in $F$) In this case $\{u,v\}$ is deleted from the current spanning forest $F$. Before that, the sketches $S_u$ and $S_v$ are updated to take care of the deletion $\{u,v\}$. The algorithm finds the merged sketch $S_{Z_u}$ of $\mathbf{X}_{Z_u}$. We have the following observation:
      
\begin{observation}
\label{obs:Suv}
With high probability, the following holds about sketch $S_{Z_u}$: upon a query we either get $\perp$ or a random edge from $E(Z_u,Z_v)$ depending on whether $E(Z_u,Z_v) =\emptyset$ or not, respectively.
\end{observation}

\begin{proof}
By the induction hypothesis, w.h.p. $F$ is the spanning forest of the graph $G$ before the deletion of the current edge $e$. Recall that we assume the adversary is oblivious, i.e., the current edge to be deleted is not provided by an adversary rather it is fixed obliviously. So, the current edge update is independent of $Z_u$ and sketches. Hence, setting $A=Z_u$ in \cref{lem:sketch} completes the proof of \Cref{obs:Suv}.
\end{proof}

Algorithm \textsc{Delete} queries the sketch $S_{uv}$ after deletion of the edge $\{u,v\}$. We divide the analysis into two parts based on whether there is an edge between $Z_u$ and $Z_v$ or not.

If there is no edge between $Z_u$ and $Z_v$ (other than $\{u,v\}$), by \Cref{obs:Suv}, $S_{Z_u}$ reports $\perp$ with high probability.  Then observe that $F'=F \setminus \{e\}$ is a spanning forest of $G'=G \setminus \{e\}$. Apart from the components $C_u$ and $C_v$, the other components remain the same in both $G$ and $G'$. Also, the component ids of the vertices in $Z_u \cup Z_v$ are updated as required.

Now consider the case when there is an edge between $Z_u$ and $Z_v$ (other than $\{u,v\}$). By \Cref{obs:Suv}, $S_{Z_u}$ reports an edge $\{a,b\}$ between $Z_u$ and $Z_v$ with high probability. Assume that  $a \in Z_u$ and $b \in Z_v$. In algorithm \textsc{Delete}, $e'=\{a,b\}$ is added to $F \setminus \{e\}$. Observe that $F\setminus \{e\} \cup \{e'\} $ is a spanning forest of $G \setminus \{e\}$. Also, note that the component ids of the vertices remain the same as it was before the deletion of $e$, i.e., the arrays $C$ and $C'$ are identical.
This completes the proof of \Cref{lemma:correctness-Connectivity}.
\qedhere
\end{proof}





\section{MPC implementation for a single update}
\label{sec:connect-single}

In this section, we show how to implement the streaming algorithm from \Cref{sec:pure-streaming} 
in the \MPC model. We start with describing the Euler tour tree data structure 
that is central to our algorithm.

Let $T$ be a rooted tree. An \emph{Euler tour} (\emph{E-tour} in short) $\text{EU}_T$ of $T$ is a walk along $T$ that starts and ends at the root, visits each edge exactly twice, and is represented by a sequence of the endpoints of the traversed edges. E.g., if the path uses the edges $\{u,v\}$ and $\{v,w\}$, then $v$ appears twice in the sequence; each vertex $v$ appears $2d_T(v)$ times in the sequence, where $d_T(v)$ is the degree of $v$ in $T$.


The tree corresponding to an E-tour is called an E-tour tree. The length of the E-tour $\eu_T$ of a E-tour tree $T$ is denoted by $L_T=4(|T|-1)$. Let $\eu_{T_u}$ denote an E-tour having $u$ as one of the vertices. Let $f(u)$ and $\ell(u)$ denote the index of the first and the last occurrence of $u$ in $\eu_T$. Let $\ind _u$ denote the set of all indices where vertex $u$ occurs in $\eu_T$. Note that $|\ind_v| = 2 \cdot  d_T(v)$ in the E-tour, where $d_T(v)$ is the degree of $v$ in the 
E-tour tree $T$.

Our goal is to maintain the E-tours in \MPC and execute on them some basic operations. Let us consider a graph $G=(V(G), E(G))$ and a spanning forest $F$ of $G$ such that the spanning tree of each component of the graph is rooted. For each (rooted) spanning tree $T$ of forest $F$, we maintain an Euler tour $\eu_T$. Using a vertex-based partitioning (with all edges incident to a vertex stored on consecutive machines), we distribute the edges across all machines. For each vertex $v$, we maintain:
%
%
\begin{itemize}
\item the id of its component or the id of the E-tour in which it is present; 
\item the index of its first and last appearance in the E-tour, i.e., $f(v)$ and $\ell(v)$;
\item $\ind _v$ (the set of all indices in which $v$ appears in the E-tour of $T$). We do not explicitly store $\ind_v$. This is implicitly stored as information on the edges incident on $v$.
\end{itemize}

Furthermore, for each edge $\{u,v\}$, we store:
\begin{itemize}
    \item the id of the E-tour having both $u$ and $v$;
    \item the index of first and last appearance of $u$ and $v$ in the E-tour, i.e., $f(u), f(v),\ell(u)$, and $\ell(v)$;
    \item the index in the E-tour where $u$ and $v$ appear such that those indices correspond to traversing the edge $\{u,v\}$.
\end{itemize}

To update an E-tour or to perform some operation implies changing the information stored with the edges. Note that $\ind_u$ for vertex $u$ is stored in a distributed manner together with the edges incident on $u$ from which $f(u)$ and $\ell(u)$ can be computed. So, to update the E-tours in a distributed fashion, we leverage the properties of an E-tour which allows us to perform basic operations like \emph{Rooting}, \emph{Join}, and \emph{Split} by communicating only $\Oh(1)$ size messages. 

\subsection*{Operations on Euler tours}
   \noindent {\bf Rooting:} Given an E-tour $\eu_T$ of a tree $T$ and a vertex $u$ in $T$, the objective is to change $\eu_T$ to $\eu_{T'}$ such that $\eu_{T'}$ is an E-tour of tree $T'$, where $T'$ is a tree rooted at vertex $u$ having the same vertex and edge set as $T$.

    \vspace{1mm}
    \noindent  {\bf Join:} Given two E-tours $\eu_{T_u}$ and $\eu_{T_v}$ of trees $T_u$ and $T_v$ with roots $u$ and $v$, respectively, such that the vertex set of $T_u$ and $T_v$ are disjoint. Upon insertion of an edge $\{u,v\}$, $T_u$ and $T_v$ are combined into a single component. Let $T$ be the spanning tree of the union of the vertices in $T_u$ and $T_v$, obtained by combining $T_u$ and $T_v$ along the edge $\{u,v\}$. The goal is to generate the E-tour for the tree $T$.

    \vspace{1mm}
    \noindent  {\bf Split:} We are given an E-tour $\eu_T$ of a tree $T$ and an edge $\{u,v\}$ in $T$. Upon deletion of an edge $\{u,v\}$ in $T$, $T$ is divided into two components. Let $T_u$ and $T_v$ denote the spanning trees of the components having $u$ and $v$, respectively. The objective is to generate the E-tours for trees $T_u$ and $T_v$.


\begin{lemma}\label{lem:etour-oper}
The operations Rooting, Join, and Split
can be implemented on E-tours in $\Oh(1)$ rounds in \MPC.
\end{lemma}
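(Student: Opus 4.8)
The plan is to show that each of the three operations can be reduced to a bounded number of primitive MPC routines — sorting, parallel prefix sums (segmented scans), and broadcasting $\Oh(1)$-size messages — each of which runs in $\Oh(1)$ rounds on an MPC with local memory $n^\phi$ and total memory $\tOh(|T|)$ by the Goodrich--Sitchinava--Zhang framework \cite{GSZ11}. The key observation to exploit is that an E-tour is stored implicitly: the sequence $\eu_T$ is never materialized as one array; instead each edge $\{x,y\}\in T$ carries the two indices in $\eu_T$ at which the traversal of $\{x,y\}$ occurs (one for the ``downward'' and one for the ``upward'' direction), and from the multiset of indices incident to a vertex $v$ we recover $f(v)$, $\ell(v)$, and $|\ind_v|=2d_T(v)$. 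So each operation amounts to recomputing, for each edge of the affected trees, its new pair of indices, plus updating the E-tour id and the cached $f(\cdot),\ell(\cdot)$ values; all of this is a constant number of passes of sort/scan over a set of size $\Oh(|T|)$.

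\textbf{Rooting.} First I would handle Rooting, since Join reduces to it. Given $\eu_T$ and a target root $u$, the classical fact is that the E-tour rerooted at $u$ is the cyclic rotation of $\eu_T$ that brings an occurrence of $u$ to the front: if $p=f(u)$ is the first occurrence of $u$, then $\eu_{T'} = \eu_T[p..L_T]\cdot \eu_T[1..p-1]$ up to fixing the boundary occurrence of the old root. In the implicit representation this means: compute $p=f(u)$ (a single broadcast from the machine holding $u$'s incident edges after an aggregation), then every edge whose current index $i$ satisfies $i\ge p$ gets new index $i-p+1$, and every edge with $i<p$ gets new index $i+L_T-p+1$. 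This is a purely local arithmetic update once $p$ and $L_T$ are broadcast, so $\Oh(1)$ rounds suffice; recomputing $f(v),\ell(v)$ for every $v$ is one segmented-min/max scan over the edge set sorted by endpoint.

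\textbf{Join and Split.} For Join with edge $\{u,v\}$ where $T_u,T_v$ are rooted at $u,v$: reroot $T_u$ at $u$ and $T_v$ at $v$ (two Rooting calls), then the E-tour of the merged tree rooted at $u$ is $\eu_{T_u}$ with the single occurrence of $u$ ``opened up'' to insert the block $u,\,\eu_{T_v}[2..L_{T_v}],\,v$ (the two new index slots recording the traversal of $\{u,v\}$), i.e.\ a concatenation $\eu_{T_u}[1..f(u)] \cdot (\text{new slot for } \{u,v\}\downarrow) \cdot \eu_{T_v}[\,\cdot\,] \cdot (\text{new slot for }\{u,v\}\uparrow) \cdot \eu_{T_u}[f(u)+1..L_{T_u}]$. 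Each old edge's new index is its old index shifted by a fixed additive constant determined by which of the three blocks it lands in and by $L_{T_v}$, all broadcastable in $\Oh(1)$; then one sort/scan re-derives the $f,\ell$ arrays and stamps the merged E-tour id. Split is the inverse: removing $\{u,v\}$ from $T$ (say $v$ is the child, so the subtree rooted at $v$ occupies the contiguous index interval $[f(v),\ell(v)]$ in $\eu_T$) cuts $\eu_T$ into the inner block $[f(v),\ell(v)]$, which after deleting $\{u,v\}$'s two slots and reindexing becomes $\eu_{T_v}$, and the outer part $\eu_T[1..f(v)-1]\cdot\eu_T[\ell(v)+1..L_T]$ which becomes $\eu_{T_u}$; edges are partitioned by testing whether their index lies in $[f(v),\ell(v)]$, and each gets a fixed additive reindex. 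Determining which of $u,v$ is the child is a single comparison of $f(u)$ vs $f(v)$. A subtlety I would flag: after a Split the two new E-tour components need fresh component/E-tour ids assigned consistently to all their edges and vertices — this is handled by broadcasting, say, $\min$ vertex id in each piece, again one aggregation + broadcast.

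\textbf{Main obstacle.} The genuinely delicate point is not any single operation but making sure every ``broadcast a constant to all affected edges'' step is legitimate under the memory constraints: the affected tree may have up to $n-1$ edges spread over $\tOh(n^{1-\phi})$ machines, and a naive broadcast from one machine violates the per-round $n^\phi$ communication cap. The fix — and the thing I would write out carefully — is that all these broadcasts/aggregations are instances of the standard $\Oh(1)$-round MPC primitives (prefix sums over a sorted sequence, all-to-one reduction along a balanced aggregation tree of depth $\Oh(1/\phi)$, and one-to-all along the same tree), so they cost $\Oh(1)$ rounds for constant $\phi$; one has to be careful that the vertex-based layout keeps each vertex's incident edges on $\Oh(1)$ consecutive machines (or re-sorts to restore this, in $\Oh(1)$ rounds) so that $f(v),\ell(v),\ind_v$ can be read off locally. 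Given that, each of Rooting, Join, Split is a fixed-length pipeline of sort/scan/broadcast steps, hence $\Oh(1)$ rounds, proving the lemma.
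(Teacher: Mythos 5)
Your proposal is correct and follows essentially the same route as the paper's proof: represent the E-tour implicitly via per-edge indices, observe that each operation reduces to block-wise additive (or cyclic) index shifts whose offsets are determined by $f(u),\ell(u),f(v),\ell(v)$ and the tour lengths, and broadcast these $\Oh(1)$-size constants so every machine updates its local indices, with Split exploiting that the subtree below $v$ occupies the contiguous interval $[f(v),\ell(v)]$. The minor deviations (anchoring the Rooting rotation at $f(u)$ rather than $\ell(u)$, an extra rerooting step in Join, and spelling out the GSZ sort/scan/broadcast primitives) are inessential and, if anything, make the round-complexity accounting more explicit than in the paper.
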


\begin{proof}
Below we analyze each of these operations separately. For simplicity of presentation, we use $index_x, f(x),\ell(x)$ for a vertex to denote the respective quantity before the current update and $index'_x, f'(x),\ell'(x)$ to denote that of before the current update. The corresponding element of $i \in \ind_x$ (before the current update) will be denoted by $i' \in \ind_x'$ after the update.

   \vspace{2mm}
     \noindent {\bf Rooting:} Let the given E-tour be $\eu_T$ for tree $T$ and $u$ be the vertex such that we want to change the root of $T$ to $u$ (unless $u$ is the root of $T$ already). Recall that $L_T$ denotes the length of $\eu_T$. For each vertex $x$ in $T$ and $i \in \ind_x$, set $i'=(i+L_T-\ell(u))~\mbox{mod}~L_T +1$. These modifications of the indices can be done in $\Oh(1)$ rounds by broadcasting the id of $\eu_T$, the value of $L_T$, and $\ell(u)$ to all machines. After the modifications of the index sets of the vertices, the information on the edges corresponds to the new E-tour of the tree rooted at $u$.

      \vspace{2mm}
      \noindent {\bf Join:} Let the two E-tours given be $\eu_{T_u}$ and $\eu_{T_v}$ of trees $T_u$ and $T_v$ with roots $u$ and $v$, respectively, such that the vertex set of $T_u$ and $T_v$ are disjoint. Suppose we want to generate an E-tour for tree $T$ that can be formed by joining $T_u$ and $T_v$ with edge $\{u,v\}$ such that $u$ is the root of $T$. For each vertex $x \in T_v$ (including $v$), $\ind_x$ is modified as follows: for each $i \in \ind_x$  $i'=i+f(u)+2 \in \ind'_x$. Then $\ind_v$ and $\ind_u$ are modified as follows:  $\ind_v'=\ind_v \cup \{f(u)+2,f(u)+\ell(v)+3\}$ and $\ind_u'=\ind_u \cup \{f(u)+1,f(u)+\ell(v)+4\}$. Here $\ell(u) \in \ind_u$ denotes the last index of $u$ in the E-tour of $\eu_{T_u}$ before the insertion of $\{u,v\}$. Then for each $i \in \ind _x$ for some $x$ in $T_u$ such that $i > f(u)$, $i'=i+L_{T_{v}}+4 \in \ind_x'$.
       These modifications of the indices can be done in $\Oh(1)$ rounds by broadcasting the IDs of $\eu_{T_u}$ and $\eu_{T_v}$, the values of $L_{T_u}$ and $L_{T_v}$, and the values of $f(u),\ell(u),f(v)$, and $\ell(v)$ to all machines. After the modifications of the index sets of the vertices, the information on the edges corresponds to the new E-tour of the tree $T$.

        \vspace{2mm}
       \noindent  {\bf Split:} Suppose we are given an E-tour $\eu_T$ of a tree $T$ and an edge $\{u,v\}$ in $T$.  Suppose we want to split $\eu_T$ into two E-tours for two spanning trees after deleting edge $\{u,v\}$: one for tree $T_u$ (containing the vertex $u$) and the other one is for $T_v$ (containing the vertex $v$). First, we decide whether $u$ is an ancestor of $v$ or $v$ is an ancestor $u$ from the value of $f(u),f(v), \ell(u)$, and $\ell(v)$. In particular $u$ is an ancestor of $v$ if and only if $f(u)< f(v)$ and $\ell(u)>\ell(v)$. Without loss of generality, assume that $u$ is an ancestor of $v$. Modify $\ind_v$ and $\ind_u$ to $\ind_v \setminus \{f(v), \ell(v)\}$ and $\ind_u \setminus \{f(v)-1, \ell (v)+1\}$, respectively. For each descendent $x$ of $v$ (including $v$) and index $i \in \ind_x$, set   $i'=i-f(v)\in \ind'_x$. For each vertex $x$ in $T_u \setminus T_v$ and $i \in \ind_x$ such that $i>\ell(v)$, set $i'=i-(\ell(v)-f(v)+3) \in \ind_x'$. These modifications of the indices can be done in $\Oh(1)$ rounds by broadcasting the id of $\eu_{T}$ the value of $L_{T}$, and the values of $f(u),\ell(u),f(v)$, and $\ell(v)$ to all machines. After the modifications of the index sets of the vertices, the information on the edges corresponds to the new E-tours of the trees $T_u$ and $T_v$.
%
%
%
\end{proof}
%
With the aid of the Euler tour trees described above and the implementation of the operations on them, we can implement our streaming algorithm for \textsc{Connectivity} efficiently in \MPC. 
%
\begin{lemma}
\label{lem:analysis-single}
Each update and query in the algorithm in \Cref{algo:single-update} can be implemented in $\Oh(1)$~rounds.
%
\end{lemma}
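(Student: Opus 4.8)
The plan is to implement each of the three subroutines \textsc{Insert}, \textsc{Delete}, and \textsc{Query} from \Cref{algo:single-update} in $\Oh(1)$ \MPC rounds, using the Euler-tour machinery of \Cref{lem:etour-oper} together with the fact that sorting and $\Oh(n)$-fold sketch aggregation each cost $\Oh(1/\spacexp) = \Oh(1)$ rounds on an \MPC with local memory $n^{\spacexp}$ and total memory $\tOh(n)$. The data-structure invariant to preserve is exactly the one listed before the lemma: each vertex and each tree edge stores its component/E-tour id and the index information $f(\cdot), \ell(\cdot)$, with $\ind_v$ implicitly distributed over $v$'s incident edges. Throughout, a single update touches only two designated vertices $u,v$ (and their components), so the relevant bookkeeping is a constant number of broadcasts plus one or two E-tour operations.

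First I would handle \textsc{Query}: the spanning forest $F$ is stored explicitly as a collection of E-tours, so reporting it just means listing the stored tree edges, which is $\Oh(1)$ rounds (indeed $\Oh(n/s)$ machines collect them by sorting on edge id). Next, \textsc{Insert}$(e=\{u,v\})$: updating the two sketches $S_u, S_v$ with the $\ell_0$-linear update of \Cref{lem:ell0} is local to the machines holding those vertices, hence $\Oh(1)$ rounds. To decide the case we compare $C[u]$ and $C[v]$, read off from the vertex records. If equal, nothing else changes. If not, we add $\{u,v\}$ to $F$ by a single \textbf{Join} of the two E-tours $\eu_{T_u}, \eu_{T_v}$ (after a \textbf{Rooting} of each at $u$ and $v$ respectively), which is $\Oh(1)$ rounds by \Cref{lem:etour-oper}; then we relabel all vertices of the smaller-id'd-away component with the new component id by broadcasting the old id, the new id, and the E-tour id to all machines and letting each machine rewrite its local records — again $\Oh(1)$ rounds. (The relabeling cost is fine since it is only a rename, not a rebuild.)

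The main obstacle is \textsc{Delete}$(e=\{u,v\})$ with $\{u,v\}\in F$, which requires (i) splitting an E-tour, (ii) aggregating $\Oh(n)$ sketches to form $S_{Z_u} = \sum_{x\in Z_u} S_x$, (iii) querying it, and (iv) possibly re-joining via a replacement edge. After updating $S_u, S_v$ we perform one \textbf{Split} on $\eu_T$ along $\{u,v\}$ (\Cref{lem:etour-oper}, $\Oh(1)$ rounds), yielding E-tours for $Z_u$ and $Z_v$; reading $f(v),\ell(v)$ tells us which part is which. To build $S_{Z_u}$ we need to sum the per-vertex sketches of all vertices now carrying component id of $Z_u$; since each sketch is $\Oh(\log^3 n)$ bits and there are at most $n$ of them, this is an aggregation of $\tOh(n)$ total data into one sketch, done by a balanced aggregation tree of depth $\Oh(\log_s n) = \Oh(1/\spacexp) = \Oh(1)$ — this is exactly the ``merge $\Oh(n)$ sketches in $\Oh(1/\phi)$ rounds'' observation from the technical overview. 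Then we query $S_{Z_u}$ (\Cref{obs:Suv}): if $\perp$, we rename the two components' ids by broadcasts as in \textsc{Insert}'s relabeling step; if it reports a replacement edge $\{a,b\}$, we add it to $F$ by one \textbf{Rooting}+\textbf{Join} (now $Z_u,Z_v$ keep a common id, so no relabeling is even needed). Every sub-step is $\Oh(1)$ rounds, so the whole update is $\Oh(1)$ rounds. I would close by noting the implicit-storage subtlety — $\ind_v$ lives on $v$'s incident edges, and since a single update changes the degrees of only $O(1)$ vertices by $O(1)$, and the E-tour operations only shift index values by quantities broadcastable in $\Oh(1)$ words, all affected edge-records can be rewritten consistently in $\Oh(1)$ rounds; this consistency check (that Rooting/Join/Split on the distributed representation really does yield a valid E-tour of the new $F$) is where most of the care goes, and it is already encapsulated in \Cref{lem:etour-oper}.
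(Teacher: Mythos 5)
Your proposal is correct and follows essentially the same route as the paper's own proof: handle \textsc{Query} trivially from the stored forest, implement \textsc{Insert} with a broadcast sketch update plus Rooting/Join, and implement \textsc{Delete} with Split, an $\Oh(1/\spacexp)$-round aggregation of the $Z_u$ sketches, a query to $S_{Z_u}$, and a possible Rooting/Join for the replacement edge, all resting on \Cref{lem:etour-oper}. The only additions beyond the paper's write-up are minor bookkeeping details (explicit component-id relabeling and the remark on the implicitly stored $\ind_v$), which are consistent with the intended argument.
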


\begin{proof}
The query step is trivial as our algorithm stores a spanning forest and the E-tour trees. We analyze the cases for insertions and deletions separately.

\vspace{1mm}
\noindent {\bf Insertion of an edge $\{u,v\}$:} As we are storing the sketches for each vertex and since the sketches are linear, the sketches of the vertices $u$ and $v$, i.e., $S_u$ and $S_v$ can be updated in $\Oh(1)$ rounds by just broadcasting the information that the sketches of $S_u$ and $S_v$ needs to be updated with insertion of edge $\{u,v\}$.
If $C[u]=C[v]$, our algorithm does not do anything apart from updating sketches.  Now consider the case $C[u]\neq C[v]$. In this case,  we want to merge the two components $C_u$ and $C_v$. 

\vspace{1mm}
\noindent {\bf Deletion of an edge $\{u,v\}$:} Note that $S_u$ and $S_v$ can be updated in $\Oh(1)$ rounds by just broadcasting the information that the sketches of $S_u$ and $S_v$ needs to be updated with deletion of edge $\{u,v\}$. If $\{u,v\}$ is not in $F$, our algorithm does not do anything apart from updating sketches. Now consider the case when $\{u,v\}$ is in $F$. We perform the Split operation for E-tours $\eu_{T}$ with both vertices $u$ and $v$ with the deletion of edge $\{u,v\}$. Let $T_u$ and $T_v$ be the two E-tours trees we obtain after the Split operation. We merge the sketches of the vertices in E-tour $\eu_{T_u}$ to get $S_{Z_u}$. This can be done by first broadcasting the id of the E-tour $\eu_{T_u}$ and then merging the sketches of the vertices in $Z_u$. Note that the broadcasting can be done in $\Oh(1)$ rounds and the merging step can be done in $\Oh(1/\spacexp)$ rounds\footnote{This is because each machine can store $\Omega(n^\spacexp / \log^3 n)$ sketches and hence we can generate the sketch $S_{Z_u}$ in $\Oh(\log_{n^\spacexp/\log ^3 n} n)=\Oh(1/\spacexp)$ rounds.}. Also, notice that the size of $S_{Z_u}$ is $\Oh(\log ^3 n)$ and can be stored in one machine. If we don't find any edge from the sketch $S_{Z_u}$, we are done. If we find an edge $\{a,b\}$ from the sketch $S_{Z_u}$, observe that it must be between $Z_u$ and $Z_v$ or equivalently one of $\{a,b\}$ is in $T_u$ and the other one is in $T_v$. Then we perform the Rooting operation to change the roots of $T_u$ and $T_v$ to $u$ and $v$, respectively. Then we execute Join operation to merge the E-tours $\eu_{T_u}$ and $\eu_{T_v}$ with insertion of edge $\{a,b\}$.  From \Cref{lem:etour-oper}, all steps required to delete edge $\{u,v\}$ can be done in $\Oh(1)$ rounds.

From the above descriptions along with the fact that all the operations Rooting, Join, and Split are in $\Oh(1)$ rounds (\Cref{lem:etour-oper}), we conclude that each update can be performed in $\Oh(1)$ rounds.
\end{proof} 
\section{Maintaining \textsc{Connectivity} under batch updates}
\label{sec:connect-batch}

In this section, we describe how our algorithm under single edge updates from the previous section can be extended to handle a batch of updates of size $\Oh(n^\spacexp)$ within the same (local and total) memory bounds as before while running in a constant number of rounds.

Suppose we are given a graph $G=(V, E)$ and a batch of $k = \Oh(n^\spacexp/{\log ^3 n})$ updates $U = \{e_1, e_2, \dots, e_k\}$ such that for each $e_i=(u_i,v_i)$ we have $u_i,v_i \in V(G)$. As before, we denote by $G'$ the graph $G$ after a batch of edges has been inserted into or deleted from $G$. For the sake of simplicity, we consider the insertions $I$ and deletions $D$ in a batch of updates $U = I \cup D$ separately in two consecutive phases. In the first phase, we handle the insertions $I$ that we describe next.

\subsection{Batch insertions}
\label{subsec:ins}

As before, after receiving a batch of edge insertions, our goal is to update the connectivity information of $G$ and also to maintain a spanning forest $F$ of $G$ explicitly. When a batch of insertions arrives, for each edge one of two cases must occur -- either the edge is between two distinct connected components in $G$ therefore it might become a tree edge in $G'$, or both endpoints of the edge are in the same connected component in $G$ which makes it a non-tree edge. The challenge is to identify their types in one shot and update all the relevant information in a constant number of rounds.

\paragraph{Data Structures.}
We maintain the same data structures as in \Cref{algo:single-update} here as well. In particular, we maintain (i) an array $C$ consisting of the component id of each connected component of $G$, (ii) a spanning forest $F$ of the current graph $G$, and (iii) a sketch $S_v$ of $\mathbf{X}_v$ for each vertex $v \in V(G)$. Furthermore, we also maintain the Euler tour tree data structure for the spanning forest $F$ as in \Cref{sec:connect-single}. In particular, we assume that the operations like Rooting, Join, and Split can be performed on E-tours in $\Oh(1)$ rounds.

\paragraph{Updating the Sketches.}
The set of inserted edges is denoted by $I \subseteq U$. We update the sketches independently for each inserted edge $e \in I$ similar to \Cref{sec:pure-streaming} i.e., for $e=\{u,v\}$ we update the sketches $S_u$ and $S_v$. As before, this can be achieved in $\Oh(1)$ rounds by broadcasting the information that the sketches of $S_u$ and $S_v$ need to be updated with insertion of edge $e = \{u,v\}$ for each $e \in I$. \\

Our main algorithm proceeds as follows. Consider the set of edges $I_{d} \subseteq I$ such that the endpoints of an edge $e =\{u,v\} \in I_{d}$ are in distinct connected components (i.e., $C[u] \neq C[v]$) in graph $G$ before the insertion. We will consider the remaining edges afterward. First, construct an auxiliary graph $H$ such that each vertex $v \in V(H)$ corresponds to a connected component of $G$ that contains an endpoint of some edge $e \in I_{d}$. Insert all the edges in $I_{d}$ between the corresponding (super) vertices in the graph $H$ and remove any parallel edges created in this process. Denote this set of left-over parallel edges by $I_{d'}$. Note that by construction, $H$ does not contain any self-loops either. Next, we compute a spanning forest $F_H$ in $H$.

\begin{claim}
\label{claim:ins-forest}
The spanning forest $F_H$ can be constructed in $\Oh(1)$ rounds.
\end{claim}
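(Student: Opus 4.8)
The plan is to show that constructing a spanning forest of the auxiliary graph $H$ reduces to running a constant-round \MPC connectivity/spanning-forest primitive on an instance small enough to be handled directly. First I would bound the size of $H$: since each edge $e \in I_d$ contributes at most two super-vertices (the connected components of $G$ containing its endpoints) and at most one edge to $H$, we have $|V(H)| \le 2k$ and $|E(H)| \le k$, where $k = \Oh(n^\spacexp/\log^3 n)$ is the batch size. In particular the entire graph $H$ fits in the local memory $\lspace = \Oh(n^\spacexp)$ of a single machine (even with room to spare for the $\polylog$ factors), once we have relabelled the super-vertices of $H$ by consecutive identifiers.

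The key steps, in order, are: (i) gather the batch $I_d$ onto a single machine --- this is possible since $|I_d| \le k \le \lspace$, using the $\Oh(1)$-round sorting/aggregation primitive of \cite{GSZ11}; (ii) for each edge $e = \{u,v\} \in I_d$, attach the component ids $C[u]$ and $C[v]$, which are stored in the data structure --- a constant-round routing/join operation since both the query set (the endpoints appearing in $I_d$) and the array entries needed have total size $\Oh(k)$; (iii) on that single machine, contract each component id to a fresh vertex of $H$, discard self-loops (which cannot occur since endpoints of $e \in I_d$ lie in distinct components) and parallel edges (recording the discarded ones as $I_{d'}$), and then simply \emph{run a sequential spanning-forest computation locally} to obtain $F_H$; (iv) broadcast $F_H$ back, translating its edges to the original vertex labels, in $\Oh(1)$ rounds. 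Since every step is either a constant-round \MPC sorting/aggregation/broadcast primitive or a purely local computation on a machine holding all of $H$, the whole construction takes $\Oh(1)$ rounds.

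I do not anticipate a genuine obstacle here --- the claim is essentially a bookkeeping lemma whose content is that $H$ is small. The one point that needs a little care is step (ii): we must argue that looking up $C[u]$ for all $\Oh(k)$ relevant endpoints can be done in $\Oh(1)$ rounds even though the array $C$ is distributed across machines. This follows from the standard fact (used throughout the paper) that a batch of $\Oh(k) \le \Oh(\lspace)$ lookups into a distributed array of size $n$ can be answered in $\Oh(1)$ rounds via sorting both the queries and the array by key and merging --- indeed this is the same primitive already invoked to move all updates onto one machine in \Cref{subsec:model}. The only subtlety worth flagging is consistency of component ids: since the sketches and the Euler-tour structure are untouched by this step and $C$ reflects the connectivity of $G$ prior to the current batch (by \Cref{lemma:correctness-Connectivity}), the super-vertices of $H$ are well-defined, and a local spanning-forest algorithm on $H$ is correct by definition of a spanning forest.
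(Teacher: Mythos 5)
Your proposal is correct and follows essentially the same route as the paper's proof: bound $|V(H)| \le 2k$ and $|E(H)| \le k$, observe that $H$ therefore fits in the local memory of a single machine, identify the super-vertices via the component-id array $C$, and compute $F_H$ locally. The extra detail you give on gathering the batch and performing the $C$-lookups via constant-round sorting primitives is a faithful elaboration of what the paper leaves implicit.
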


\begin{proof}
First notice that $|V(H)| \le 2k$ and $|E(H)| \le k$. For $k = \Oh(n^\spacexp/{\log ^3 n})$ the graph $H$ consisting of the components and the edges can be stored in the memory of a single machine. To construct $H$, each node in $V(H)$ can be identified from the array $C$. Since $H$ can be stored in a single machine, we can find $F_H$ in $\Oh(1)$ rounds as well.
\end{proof}

We first insert, in parallel, the edges of the spanning forest $F_H$ to the graph $G$. We have already updated the relevant sketches for the whole batch of insertions. We now describe how to update the corresponding E-tours after inserting the edges of $F_H$.

\subsection{Updating Euler Tour trees}
\label{subsec:update-Euler-tour-trees}

Our goal is to combine the E-tour trees corresponding to the vertices of the auxiliary graph $H$ using the edges of $E(F_H)$ to construct the E-tour of the spanning forest $F'$ of the updated graph $G'$. Wlog. suppose $|E(F_H)| = k$. Formally we can formulate the problem as follows: given $t \le 2k$ trees $T_1, \ldots, T_{t}$ and $k$ edges $e_1, \ldots, e_k$ such that the union of the trees $T_i$ and the edges $e_i$ form a forest $F'$ (in $G'$), our goal is to construct the Euler tour sequence of $F'$. For simplicity assume that the forest $F'$ is actually a tree $T$ and let the corresponding tree in $H$ be $T_H$. If $F'$ contains more than one tree then each of them can be handled in parallel in a similar fashion.

Define the root of this auxiliary tree $T_H$ to be the node $T_1$. Call the set of vertices that are endpoints of the edges $e_1, \ldots, e_k$ as \emph{terminals}. For each node $T_i$ in the tree $T_H$, the terminals of $T_i$ are the terminal vertices that are present in $T_i$.
Let us first root tree $T_1$ at an arbitrary terminal in $T_1$. For each non-root node $T_i$, let $t_i$ be the terminal vertex connected to some vertex in the parent of $T_i$. Notice that, $t_i$ might not be the root in the tree $T_i$ but we can change the root to $t_i$ by applying the Rooting procedure on $T_i$ (as explained in \Cref{sec:connect-single}) which can be done in $\Oh(1)$ rounds. Next, we define an auxiliary sequence $S$ (of edges from $E(F_H)$) for the entire spanning tree $T$ of the updated graph $G'$ from which we can construct the final E-tour $\eu_T$. We first create a node $T_0$ in $H$ such that $T_0$ consists of a single vertex $u_0$. Then connect $T_0$ to the existing root $T_1$ of $T_H$ through an edge $e_0=\{u_0,v_0\}$, where $v_0$ is the root of $T_1$.

\begin{definition}[Auxiliary Sequence]
\label{def:aux}
Let us consider the tree $T_H$ defined above where each node in $T_H$ corresponds to a component in $G$. The auxiliary sequence $S$ of $T_H$ is
 defined recursively using function $\Pi$ as follows.
\begin{itemize}
    \item For a leaf node $A$ in $T_H$, we have $\Pi(A) = \phi$.
    \item  For any other node $A$, consider the descendant nodes $B_1,\dots,B_p$ of $A$ in $T_H$. Let $u_1,\dots,u_p \in A$ and $v_1 \in B_1,\dots,v_p \in B_p$ such that $e_i=\{u_i,v_i\}$ are edges of $T_H$ and $f(u_1)\leq \dots \leq f(u_p)$ (where $f$ is defined in the E-tour corresponding to node $A$). Also, let $e_i'$ denote the same edge $e_i$ traversed in the other direction.
$$\Pi(A) = e_1 \Pi(B_1) e'_1 \dots {e}_p \Pi(B_p) e'_p.$$
\end{itemize}
Then we have $S =\Pi(T_0)$.
\end{definition}


\begin{lemma}
\label{lem:seq}
$S$ is a sequence of edges of length $\Oh(k)$ and can be constructed in $\Oh(1)$ rounds.
\end{lemma}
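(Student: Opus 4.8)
\textbf{Proof plan for Lemma~\ref{lem:seq}.}

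The plan is to bound the length of $S$ first, and then describe how to compute it in $\Oh(1)$ rounds by turning the recursive definition of $\Pi$ into a single Euler-tour computation on the auxiliary tree $T_H$ itself. For the length bound: each edge $e_i$ of $T_H$ (there are at most $k$ of them, plus the artificial edge $e_0$) contributes exactly two symbols to $S$ — one occurrence of $e_i$ and one of $e_i'$ — because in the recursion $\Pi(A) = e_1\Pi(B_1)e_1'\cdots e_p\Pi(B_p)e_p'$ the pair $e_i,e_i'$ appears exactly once, namely when $A$ is the parent of $B_i$. Hence $|S| = 2(k+1) = \Oh(k)$, and since $k = \Oh(n^{\spacexp}/\log^3 n)$, the whole sequence $S$ fits in the local memory of a single machine.

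For the round complexity: I would observe that $S = \Pi(T_0)$ is precisely (a relabeling of) the Euler tour of the auxiliary tree $T_H$ rooted at $T_0$, where the children $B_1,\dots,B_p$ of each node $A$ are visited in the order dictated by $f(u_1)\le\cdots\le f(u_p)$ — i.e., in increasing order of the position in $A$'s own E-tour of the terminal attaching $A$ to that child. So the algorithm is: (i) gather the structure of $T_H$ — which by Claim~\ref{claim:ins-forest} is $\Oh(k)$ in size and already sits on one machine, together with the artificial node $T_0$ and edge $e_0$; (ii) for each node $A$ and each child-edge $e_i = \{u_i,v_i\}$ with $u_i \in A$, attach to $e_i$ the sort key $f(u_i)$ computed in $A$'s E-tour (these $f$-values are already stored with the edges of the E-tour data structure, so they can be fetched and forwarded to the machine holding $T_H$ in $\Oh(1)$ rounds of communication); (iii) on that single machine, sort each node's child list by this key and run a standard (sequential) Euler-tour traversal of $T_H$ from the root $T_0$, emitting $e_i$ before recursing into $B_i$ and $e_i'$ after returning — this produces $S$. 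Since $T_H$ lives entirely on one machine, steps (ii)–(iii) are local computation, and the only communication is the $\Oh(1)$-round gathering of the $f(u_i)$ values and of $T_H$'s adjacency structure; hence $S$ is constructed in $\Oh(1)$ rounds.

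The main obstacle I expect is step (ii): making sure the sort keys $f(u_i)$ are consistently available. Each node $A$ of $T_H$ corresponds to one of the original E-tour trees $T_j$, possibly after a Rooting operation has been applied to make its attaching terminal $t_i$ the root; I need the $f$-values used in Definition~\ref{def:aux} to be those \emph{of $A$'s E-tour in its current rooting}, and there may be up to $\deg_{T_H}(A)$ distinct terminals $u_1,\dots,u_p$ inside a single $A$ whose first-occurrence indices must all be retrieved. Since $\sum_A \deg_{T_H}(A) = \Oh(k)$ and each such lookup is a request to the machine(s) storing the incident edges of that terminal (where $f(\cdot)$ is maintained), the total number of requests is $\Oh(k) = \Oh(n^{\spacexp})$, which fits the communication budget and resolves in $\Oh(1)$ rounds by the standard sorting/routing primitive of~\cite{GSZ11}. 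Once these keys are in place, the rest is a routine Euler-tour traversal, and the lemma follows.
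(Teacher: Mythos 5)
Your proposal is correct and follows essentially the same route as the paper's proof: observe that $S$ has length $\Oh(k)$ and hence fits on a single machine, gather the auxiliary tree $T_H$ (and whatever local information the recursion for $\Pi$ needs) there in $\Oh(1)$ rounds, and then unroll the recursion as a local Euler-tour traversal. Your explicit length count ($2(k+1)$ symbols) and your attention to fetching the first-occurrence indices $f(u_i)$ under the current rootings are details the paper leaves implicit, but they do not change the argument.
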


\begin{proof}
    Since the sequence $S$ is of total length $\Oh(k)$, we can construct it in $\Oh(1)$ rounds by applying the function $\Pi$ repeatedly inside the local memory of a single machine. Evaluating the function itself amounts to finding all the neighbors of a vertex in $T_H$ which can also be done inside the local memory of a single machine.
\end{proof}

Before proceeding further we first recall some notations from \Cref{sec:connect-single} and also introduce a few more. Every node $u \in V(G)$ has a first and a last index denoted by $f(v)$ and $l(v)$. The length of the E-tour of an E-tour tree $T_i$ is denoted by $L_{T_i}$. Interchangeably, we sometimes denote this by $L_A$ where $A$ is the node corresponding to $T_i$ in $T_H$. We denote by $\ind _v$ the set of all indices where vertex $v$ occurs in $\eu_T$. Recall that we do not explicitly store $\ind_v$ and it is only implicitly stored on the edges incident on $v$.

An Euler tour of a tree is a sequence of vertices on the tree. So, the Euler tree representation of a spanning forest a sequences of Euler tour trees one for each tree in the forest. When a batch of edges is inserted, the edges connect the existing trees through various (possibly internal) vertices. From the E-tours of the existing trees to get the new E-tour we need to compose the sequences appropriately. As mentioned before, we first root each tree $T_i$ for $i \in [t]$ at one of the terminals (endpoint of an edge in $T_H$), say $t_i$, and then construct the auxiliary sequence $S$. Now to convert the sequence $S$ to an E-tour of the tree $T$ we need to generalize the \emph{Join} operation to be able to merge several trees in $\Oh(1)$ rounds, instead of just two as in \Cref{sec:connect-single}.

For joining two E-tours $\eu_{T_u}$ and $\eu_{T_v}$ of trees $T_u$ and $T_v$, for (a subset of) vertices $x \in T_v$ and $y \in T_v$, the sets $\ind_x$ and $\ind_y$ are modified accordingly. The same holds for joining two E-tours, say $\eu_{T_v}$ and $\eu_{T_w}$. However, the main challenge to join multiple such E-tours together at once is to be able to figure out quickly, in the above scenario, how $\ind_z$ is modified for vertices $z \in T_w$ when the sets $\ind_x$ and $\ind_y$ are also modified and vice versa. Notice that if an E-tour sequence is inserted inside another E-tour sequence (which corresponds to attaching a tree to an internal vertex of another) then all the indices following the insertion need to be shifted appropriately. This was easy to achieve in the single edge insertion case as both the terminals can be made the root of the corresponding trees. However, this is not possible for multiple-edge insertions where a single tree can have many terminal vertices and only one can become the root.

We introduce two operations, \emph{shift-index} and \emph{update-index}, to overcome these challenges. We use shift-index to appropriately offset each interval between two terminal vertices in a tree for the new E-tour. Notice if a tree only contains a single terminal the indices for the whole tree would be shifted accordingly. Let us denote by $\Delta$ the quantity by which a set of indices are shifted. From the sequence $S$, we compute all the $\Delta$ values that we finally broadcast to each machine. Furthermore, for each inserted edge $e = \{u, v\}$, we use the update-index operation to add (in case of insertions) new elements to the index sets $\ind_u$ and $\ind_v$.

Next, we describe these operations in detail. We scan through the sequence $S$ and consider a consecutive pair of edges in $S$ at a time. The following cases may arise depending on whether the edge types are ``forward'' or ``backward'' which in turn depends on whether we are traversing the edge for the first time or not. To simplify the notation we assume that for each $e=\{u,v\}$, $u$ is closer to the root in the graph $H$.

\paragraph{Case 1:}
Suppose the next pair of edges in the sequence is both of forward type, say $e_1e_2$ where $e_1 = \{u_1, v_1\}$ and $e_2 = \{u_2, v_2\}$. In the auxiliary graph $H$, let $u_1$ belongs to node $A$, $v_1, u_2$ belong to node $B$, and $v_2$ belongs to  node $C$.

For each vertex $x \in B$ and $\forall z \in index_x$ such that $f(v_1) \le z \le f(u_2)$ we update $z' = z + \Delta$. Note that we are modifying only $z \in index_x$ such that $f(v_1) \le z \le f(u_2)$, which depends only on $\Delta$ as the terminals other than $v_1$ and $u_2$ of the E-tour of the node $B$ have higher $f$ values. For $z \in index_x$ such that $z > f(u_2)$ depends not only on the value of $\Delta$ but also on the length of the E-tour corresponding to $C$ and possibly the length of some other E-tours, and will be taken care of later either in Case 3 or Case 4.

As the size of the index of a vertex is twice the degree of the vertex in the spanning tree, two elements must be added to  $index_{u_2}$ and $index_{v_2}$ corresponding to the insertion of the edge $\{u_2,v_2\}$. But here we add one element each to $index_{u_2}$ and $index_{v_2}$ as follows.
$$ index'_{u_2} = index_{u_2} \cup \{ f(u_2) + \Delta + 1\} \qquad
index'_{v_2} = index_{v_2} \cup \{ f(u_2) + \Delta + 2\}.$$

Other addition of one element each to $index_{u_2}$ and $index_{v_2}$ (corresponding to the insertion of edge $e_2 = \{u_2,v_2\}$) will be taken care of in Case 2 or Case 4, as that depends on the length of the E-tour corresponding to $C$ and possibly the length of some other E-tours.

Finally, we update $\Delta' = \Delta + f(u_2) + 2$ to modify the value of the shift, as $\Delta$ was taking care of the shift earlier and the indices of the vertices in the descendants of $B$ (including $C$) will  experience a shift depending on $f(u_2)$.

\paragraph{Case 2:}
Suppose the next pair of edges in the sequence is of type forward followed by backward edge, say $ee'$ where $e = \{u, v\}$. In the auxiliary graph $H$, let $u$ belong to node $A$ and $v$ belong to node $B$. For each vertex $x \in B$ and $\forall z \in index_x$, we update $z' = z + \Delta$. Note that  we are modifying $\forall z \in index_x$ where $x \in B$, as this depends only on $\Delta$ and not on the length of the other E-tours in Case 1. This is because $B$ is a leaf node in $H$ and has no terminal node other than $v$.

We also update $index_{u}$ and $index_{v}$ as:
$$index'_{u} = index_{u} \cup \{ L_{B} + \Delta + 2\}
\qquad
index'_{v} = index_{v} \cup \{ L_{B} + \Delta + 1\}.$$

Note that we need to add two elements to $index_{u}$ and $index_{v}$ corresponding to the insertion of edge $\{u,v\}$. However, we have added one element each to $index_{u}$ and $index_{v}$ here. The other addition is taken care of either in Case 1 or Case 3.

\paragraph{Case 3:}
Suppose the next pair of edges in the sequence is of type backward followed by a forward edge, say $e_1'e_2$ where $e_1 = \{u_1, v_1\}$ and $e_2 = \{u_2, v_2\}$. In the auxiliary graph $H$, let $u_1, u_2$ belong to node $A$, $v_1$ belongs to node $B$, and $v_2$ belongs to node $C$.

We first update $\Delta' = \Delta - f(u_1) - 2 + L_{B} + 4$, to take care of the fact that the vertices whose indices are to be updated are no more present in the descendants of $A$\footnote{Note that we must have added $f(u_1) - 2$ to $\Delta$ at some point of time earlier in Case 1 to take care of the fact that the vertices present in the descendants of $B$ must be shifted by $f(u_1)+2$.} and the other nodes whose indices are to be modified will get a shift that depends on the length of the E-tour corresponding to $B$.

Now for each vertex $x \in B$ and $\forall z \in index_x$ such that $f(u_1) \le z \le f(u_2)$ we update $z' = z + \Delta'$.  Note that we are modifying only $z \in index_x$ in $f(u_1) \le z \le f(u_2)$, which depends only on $\Delta'$ as the terminals other than $u_1$ and $u_2$ of the E-tour of the node $B$  have either $f$ value less than $f(u_1)$ or  more than $f(u_2)$. For $z \in index_x$ such that $z < f(u_1)$ has been taken care of in either Case 1 or Case 3. For $z \in index_x$ such that $z > f(u_2)$  depends not only on the value $\Delta'$ but also on possibly on the length of some other E-tours, and will be taken care of later either in Case 3 or Case 4.

We also update $index_{u_1}$ and $index_{u_2}$ by:
$$index'_{u_1} = index_{u_1} \cup \{ f(u_2) + \Delta' + 1\}
\qquad
index'_{u_2} = index_{u_2} \cup \{ f(u_2) + \Delta' + 2\}.$$

Other addition of one element each to $index_{u_1}$ corresponding to the insertion of edge $\{u_2,v_2\}$) is taken care in either Case 1 or Case 3. Other addition to $index_{v_1}$  will be taken care of later, either in Case 2 or Case 4, as that depends possibly on the length of some other E-tours.

\paragraph{Case 4:}
Finally, the remaining case is that the next pair of edges in the sequence is both of type backward, say $e_1'e_2'$ where $e_1 = \{u_1, v_1\}$ and $e_2 = \{u_2, v_2\}$. In the auxiliary graph $H$, let $u_2$ belongs to node $A$, $v_2, u_1$ belong to node $B$, and $v_1$ belongs to a node $C$.

Again, we first update $\Delta' = \Delta - f(u_1) - 2 + L_B + 4$ like Case 3. Next, for each vertex $x \in B$ and $\forall z \in index_x$ such that $z \ge f(u_1)$ we update $z' = z + \Delta'$. Note that  we are modifying $\forall z \in index_x$ where $x \in B$ such that $z \geq f(u_1)$, as this depends only on $\Delta$ and not on the length of the other E-tours like in Case 3. This is because there is no terminal node having a higher $f$ value than $f(u_1)$. For $z \in index_x$ such that $z < f(u_1)$ has been taken care of in either Case 1 or Case 3 before.

Finally, we update $index_{u_2}$ and $index_{v_2}$ by:
$$index'_{u_2} = index_{u_2} \cup \{ L_{B} + \Delta' + 2\}
\qquad
index'_{v_2} = index_{v_2} \cup \{ L_{B} + \Delta' + 1\}.$$

Note that we need to add two elements to $index_{u_2}$ and $index_{v_2}$ corresponding to the insertion of edge $\{u_2,v_2\}$. However, we have added one element each to $index_{u_2}$ and $index_{v_2}$ here. The other addition has been taken care of in Case 1 or Case 3 before.

This finishes the description of how the existing E-tours are joined together to get the new E-tour $\eu_T$ corresponding to the updated graph $G'$. Next, we insert the rest of the edges which are all non-tree edges, i.e., both the endpoints of such an edge belong to the same connected component. This also includes the leftover edges $I_{d'}$ from the previous step which became parallel edges in the auxiliary graph $H$. None of these edges affect the E-tour data structure. Recall also that the corresponding sketches have already been updated in the very beginning.

\begin{lemma}
\label{lem:batch-insert}
The algorithm described above can be implemented in $\Oh(1)$ rounds.
\end{lemma}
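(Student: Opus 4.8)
The plan is to verify that each stage of the batch-insertion algorithm costs only $\Oh(1)$ rounds, the key recurring observation being that all the auxiliary objects involved --- the graph $H$, its spanning forest $F_H$, and the auxiliary sequence $S$ --- have size $\Oh(k) = \Oh(n^{\spacexp}/\log^3 n) \le \lspace$, so they fit on a single machine and any computation depending only on them is purely local. First I would dispose of the cheap stages: updating the sketches $S_u, S_v$ for every inserted edge $e=\{u,v\}\in I$ is done by broadcasting the $\Oh(k)$ update requests, which is $\Oh(1)$ rounds since the sketches are linear; identifying the component ids of the endpoints of the edges in $I_d$ from the array $C$, building $H$, removing the leftover parallel edges $I_{d'}$, and computing a spanning forest $F_H$ is $\Oh(1)$ rounds by \Cref{claim:ins-forest}; and inserting the edges of $F_H$ into $G$ is done in parallel, leaving only the Euler-tour update to analyze.

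Next I would root each relevant tree $T_i$ at its designated terminal $t_i$. Each such rooting is a single invocation of the Rooting operation, which is $\Oh(1)$ rounds by \Cref{lem:etour-oper}; since the trees $T_1,\dots,T_t$ are vertex-disjoint and Rooting only broadcasts the per-tree scalars $L_{T_i}$ and $\ell(t_i)$ (there are $\Oh(k)$ of them in total) and then has each incident edge recompute its own indices locally, all $t\le 2k$ rootings can be carried out simultaneously in $\Oh(1)$ rounds. Constructing the auxiliary sequence $S=\Pi(T_0)$ is then $\Oh(1)$ rounds by \Cref{lem:seq}, with $|S|=\Oh(k)$.

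The heart of the argument is the scan of $S$ described in Cases~1--4. Although the running offset $\Delta$ is updated \emph{sequentially} along $S$, the entire sequence (together with the $f$- and $\ell$-values and E-tour lengths $L_B$ of the $\Oh(k)$ nodes of $T_H$) fits on one machine, so the scan can be executed locally; it produces, in $\Oh(1)$ rounds, (i) a table assigning to each node $B$ of $T_H$ and each interval of its E-tour delimited by consecutive terminals the shift $\Delta$ that the indices in that interval must receive, and (ii) the list of new index-set elements to be added to $\ind_u,\ind_v$ for each inserted edge $\{u,v\}$. Both objects have size $\Oh(k)$, so they are broadcast to all machines in $\Oh(1)$ rounds. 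Every machine then performs shift-index and update-index purely locally: for each edge it stores and each index $z\in\ind_x$ implicitly recorded on that edge, it reads off from the $f$-values stored with the edge and the broadcast terminal positions which node and interval $z$ lies in, and adds the corresponding $\Delta$; and it appends the relevant new elements to the terminals' index sets. Finally, inserting the remaining non-tree edges --- those with both endpoints in the same component of $G$, including $I_{d'}$ --- changes neither $F$ nor any E-tour, and their sketches are already updated, so this is $\Oh(1)$ rounds. Summing over the $\Oh(1)$-many stages gives the claim.

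The step I expect to require the most care is this last one: one must argue that the sequential $\Delta$-recurrence of Cases~1--4 is not an obstruction precisely because $|S|\le\lspace$, and --- more delicately --- that the per-edge bookkeeping in shift-index and update-index genuinely has all the data it needs locally (the node of $T_H$ containing each endpoint, the $f$-values of the relevant terminals inside that node, and the broadcast $\Delta$-table), so no extra communication round is incurred. The correctness of the resulting index assignment, namely that it is a valid Euler tour $\eu_T$ of the updated tree $T$, is exactly what the Case~1--4 analysis above establishes and is not needed for the round bound, though I would remark that it follows from that analysis.
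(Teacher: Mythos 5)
Your proposal is correct and follows essentially the same route as the paper's proof: stage-by-stage accounting using \Cref{claim:ins-forest} and \Cref{lem:seq}, exploiting that $|S|=\Oh(k)$ fits on a single machine so the sequential $\Delta$-scan of Cases~1--4 runs locally, then broadcasting $\Oh(k)$ constant-size messages from which every machine updates its portion of the E-tour locally. Your treatment is somewhat more explicit than the paper's (e.g., the parallel rootings at the terminals and the precise contents of the broadcast messages), but these are details the paper handles in the surrounding algorithm description rather than a different argument.
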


\begin{proof}
We first argue about the number of rounds. The respective sketches can be updated in $\Oh(1)$ rounds by broadcasting all the insertions. Whether an inserted edge is between distinct connected components in $G$ can be verified in $\Oh(1)$ rounds from the array $C$. From \Cref{claim:ins-forest} we know the auxiliary graph $H$ and a spanning forest $F_H$ in $H$ can be constructed in $\Oh(1)$ rounds. Next, from $F_H$ we first create the sequence $S$ which can be constructed in $\Oh(1)$ rounds from \Cref{lem:seq}. Finally, we update the E-tours. While updating the E-tours for the entire graph is described as a sequential procedure starting from the sequence $S$, it can be implemented in $\Oh(1)$ rounds in \MPC as follows. Since the length of $S$ is only $\Oh(k)$, we can store the entire sequence in the local memory of a single machine. Next, for updating the indices we create $\Oh(k)$ many messages (one for each pair of consecutive edges in $S$) each of size $\Oh(1)$ which we can broadcast to all the machines. From these messages, each machine can update its part of the E-tour stored inside the local memory.
\end{proof}

\subsection{Batch deletions}
\label{subsec:del}

Next, we consider the deletions. In a batch of deletions, for each edge to be deleted there are two cases to consider. If the edge is a non-tree edge (i.e., not part of the maintained spanning forest $F$) then the connectivity information and the spanning forest do not change. However, when a tree edge (which is part of $F$) is deleted either the corresponding component splits into two parts or we might find a replacement (previously non-tree) edge that might connect back these two components. However, notice that we do not have all such replacement edges stored explicitly in the memory. For the deletion of a batch of tree edges, the challenge is to then identify all the potential replacement edges \emph{at once} and build the new spanning forest using these edges.

When a batch of deletions arrives, we first update the sketches as follows.

\paragraph{Updating the sketches.}
We denote the set of deleted edges as $D \subseteq U$. We update the sketches independently for each deleted edge similar to \Cref{sec:pure-streaming}. In particular, for each edge $e=\{u,v\}$ such that $e \in D$ we update the sketches $S_u$ and $S_v$. As before, $S_u$ and $S_v$ can be updated in $\Oh(1)$ rounds by broadcasting the information that the sketches of $S_u$ and $S_v$ need to be updated with deletion of edge $e = \{u,v\}$ for each $e \in U_D$. {However, unlike in \Cref{sec:pure-streaming}, where we maintain only one sketch per vertex that performs as desired with high probability, here we maintain $t = \Oh(\log n)$ independent sketches for each vertex $v \in V(G)$. Each of these sketches operates with a constant success probability, i.e., each sketch requires $\Oh(\log^2 n)$ bits of space (see \Cref{lem:sketch1}).} Let us denote them by $\mathcal{S}_v = \{S^{(1)}_v, S^{(2)}_v, \ldots, S^{(t)}_v \}$.

Our main algorithm proceeds as follows. We first remove the non-tree edges from $G$. We only need to update the sketches corresponding to these edges, they do not affect the maintained spanning forest and so E-tours. Next, we consider the tree edges. For simplicity, we consider the deletions only in a single connected component of $G$. Deletions across all components can be handled in parallel in a similar fashion. Removing tree edges splits the corresponding spanning tree into several subtrees $T_1, T_2, \dots, T_p$ where $p \le 2k$. Let $Z_i$ be the set of vertices in the component corresponding to the tree $T_i$. In parallel for each $i \in [p]$, we merge the sketches of the vertices in $Z_i$ to get the sketches $\mathcal{S}_{Z_i}$. Similar to the single edge deletion case, for each $i$ this can be done by first broadcasting the ID of the E-tour $\eu_{T_i}$ and then merging the sketches of the vertices in $Z_i$.

Next, similar to the insertion case, we construct an auxiliary graph $H$ as follows. Each $v \in V(H)$ corresponds to a connected component of the graph $G'$ after the tree edges have been removed. Note that each such component contains at least one endpoint of a tree edge $e = \{u,v\}$.

To update the E-tour trees, which now splits into several parts after the deletions, we employ an inverse procedure of the methods that we use to update the Euler tours for processing insertions.  We describe the idea very briefly as the procedure is entirely symmetric to the batch insertion scenario. In particular, consider the graph $H$ with the deleted tree edges as $E(H)$. We create the same auxiliary sequence $S$ from $H$ which is of length $\Oh(k)$. We look at each consecutive pair of edges in $S$ which again give rise to four different cases. However, in each case, while we generate the messages in a very similar way as before the messages themselves will be different in the following way: the shift-index operations will come with different signs (as the indices will decrease after deletions) and the update-index operations will remove indices from the sets $\ind_v$ corresponding to terminal vertices $v$.

Our next goal is to construct a spanning forest of the graph $H$ with vertices $V(H)$ and without any edges (i.e., after removing the deleted edges) using the maintained sketches $\mathcal{S_v}$ for each $v \in V(H)$.

\paragraph{Constructing $F_H$.}
We find a spanning forest $F_H$ in $H$ in $t = \Oh(\log k)$ iterative steps. The algorithm follows the approach of Ahn \etal \cite{AGM12} output a spanning forest (also described in \Cref{sec:pure-streaming}). While in \cite{AGM12} the algorithm is applied on the entire graph $G$ at the end of the stream, here we apply it only on the auxiliary graph $H$ but after every batch of updates.


In the first step, we query sketch $S^{(1)}_v$ for a replacement edge for each vertex $v$ from (if exists). Then we merge the vertices present in the same connected component in the graph $H_1(V_1, E_1)$ where $V_1=V(H)$ and $E_1$ is the set of replacement edges found in the current step. Similarly, in the $i$-th round with $2 \leq i \leq t$, $V_i$ is the set of supernodes where each supernode corresponds to a connected component in $H_{i-1}$. We use the $i$-th sketches of the corresponding vertices to find possible replacement edges from each supernode in $V_{i}$ (if exists). Let $E_i$ be the set of edges between the nodes in $V_i$ that were found from the sketches. The algorithm terminates after $t=\Oh(\log k)$ rounds when each node in $V_t$ corresponds to the connected components of $H$. The correctness of the above procedure follows from the fact that in each step the expected size of the graph decreases at least by a factor of $2$ using the linearity of expectation. Hence, the procedure terminates in $\Oh(\log k)$ steps. We now show how to implement the above procedure efficiently in \MPC.

\begin{lemma}
\label{lem:del-forest}
The spanning forest $F_H$ can be constructed in $\Oh(1/\spacexp)$ rounds.
\end{lemma}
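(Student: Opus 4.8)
The plan is to implement the $t = \Oh(\log k)$-step Borůvka-style contraction on the auxiliary graph $H$ (with $|V(H)| \le 2k$) described just above the lemma statement, and argue that each of the $\Oh(\log k)$ rounds of that procedure can be simulated in $\Oh(1/\spacexp)$ \MPC rounds, so that the total is $\Oh(\log k / \spacexp) = \Oh(1/\spacexp)$ since $k = \Oh(n^{\spacexp}/\log^3 n)$ and hence $\log k = \Oh(\log n)$ --- wait, that would give $\Oh(\log n /\spacexp)$, not $\Oh(1/\spacexp)$; so the real point is that the whole graph $H$ together with \emph{all} $t = \Oh(\log k) = \Oh(\log n)$ sketches of its vertices fits into the local memory of a single machine, so the entire contraction can be done \emph{locally} once the sketches $\mathcal{S}_{Z_i}$ have been gathered. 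Thus the $\Oh(1/\spacexp)$ bound comes purely from the cost of \emph{assembling} the merged sketches $\mathcal{S}_{Z_i}$, and the contraction itself is free.

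First I would bound the data size: each $v \in V(H)$ carries $t = \Oh(\log n)$ sketches, each of $\Oh(\log^2 n)$ bits by \Cref{lem:sketch1}, so one vertex of $H$ needs $\Oh(\log^3 n)$ bits and all of $V(H)$ needs $\Oh(k \log^3 n) = \Oh(n^{\spacexp})$ bits, which fits on one machine. The sketch $\mathcal{S}_{Z_i}$ for component $Z_i$ is, by linearity of the sketches (see the remark after \Cref{lem:ell0}), simply the coordinate-wise sum of the per-vertex sketches over $v \in Z_i$; since $\sum_i |Z_i| = \Oh(n)$ vertices are spread over $\Oh(n^{1-\spacexp})$ machines and each machine holds $\Omega(n^{\spacexp}/\log^3 n)$ sketches, these sums can be computed by a standard tree-aggregation (grouping by E-tour id $\eu_{T_i}$, exactly as in the single-edge deletion case of \Cref{lem:analysis-single}) in $\Oh(\log_{n^{\spacexp}/\log^3 n} n) = \Oh(1/\spacexp)$ rounds, and the resulting $\Oh(k)$ merged sketches, of total size $\Oh(k \log^3 n) = \Oh(n^{\spacexp})$, are then shipped to a single machine in $\Oh(1)$ rounds.

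Next I would verify correctness of the contraction run locally on that machine. This is the procedure of Ahn \etal \cite{AGM12} restricted to $H$: in step $i$ query the $i$-th sketch of every current supernode for an edge leaving it, add the returned edges to $E_i$, and contract connected components of $(V_i, E_i)$ to form $V_{i+1}$; because the supernodes at step $i$ are determined only by sketches $S^{(1)},\dots,S^{(i-1)}$ and the sketches are mutually independent, $S^{(i)}$ is fresh randomness for the step-$i$ query, so \Cref{lem:sketch} applies and each supernode that is not already a full component of $H$ returns a genuine cut edge; a linearity-of-expectation argument shows the number of non-trivial supernodes halves in expectation each step, hence $t = \Oh(\log k) = \Oh(\log n)$ steps suffice w.h.p. to reach the components of $H$, i.e.\ the replacement edges forming $F_H$. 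Crucially all $t$ steps happen inside one machine, contributing $\Oh(1)$ \MPC rounds, and then the $\Oh(k)$ edges of $F_H$ are broadcast so that in a final $\Oh(1)$ rounds they can be inserted into $G$ (with their E-tours updated via the batch-insertion machinery of \Cref{subsec:update-Euler-tour-trees}, which was already shown to take $\Oh(1)$ rounds in \Cref{lem:batch-insert}).

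The main obstacle is the sketch-merging step, and specifically making sure the $\Oh(1/\spacexp)$-round aggregation is done \emph{consistently with the E-tour split}: after the batch of tree-edge deletions the components $Z_i$ are exactly the vertex sets of the new E-tour trees $\eu_{T_i}$, so each vertex must know the id of the split tree it now belongs to before summing its $t$ sketches into the right bucket --- this is why the E-tour update (the inverse of \Cref{subsec:update-Euler-tour-trees}) must be performed first, and it is the one place where the "entirely symmetric to batch insertion" remark has to be taken at face value. Once the labels $\eu_{T_i}$ are in place, grouping the $\Oh(n)$ vertices by label and summing is a routine sort-and-aggregate that \cite{GSZ11} guarantees in $\Oh(1/\spacexp)$ rounds under the $\tOh(n)$ total-memory budget, and every other step is either $\Oh(1)$ rounds or purely local, giving the claimed $\Oh(1/\spacexp)$ total.
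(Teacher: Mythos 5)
Your proposal is correct and follows essentially the same route as the paper's proof: merge the per-vertex sketches of each component $Z_i$ by an $\Oh(1/\spacexp)$-round aggregation keyed on the E-tour ids, observe that the $\Oh(k)$ merged sketches (of total size $\Oh(k\log^3 n)=\Oh(n^{\spacexp})$ bits) fit on a single machine, and then simulate the AGM-style contraction on $H$ entirely locally, so only the sketch-merging contributes to the round count. Your additional remarks on the independence of the $\Oh(\log n)$ sketches and the expected halving are consistent with the discussion the paper places just before the lemma rather than inside its proof.
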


\begin{proof}
First, observe that the set of vertices $V(H)$ can be found in $\Oh(1/\spacexp)$ rounds as follows. For a vertex $v \in V(H)$ we first broadcast the id of the E-tour $\eu_{T_v}$ and then merge the sketches of the vertices in $Z_v$. The broadcasting can be done in $\Oh(1)$ rounds and the merging step can be done in $\Oh(1/\spacexp)$ rounds. Note that we can do the above step for all the nodes in $H$ and all $\Oh(\log n)$ independent sketches in parallel. Observe that since $|V(H)| \le 2k$, for $k = \Oh(n^\spacexp)$ we can store $V(H)$ in the local memory of a single machine. However, for each $i \in [k]$ the total space need to store the sketches corresponding to each vertex in $H$ is $\Oh({\log ^3 n})$. Hence in a single machine, we can gather all the information to simulate the algorithm described above, for constructing $F_H$ locally.
%
\end{proof}

Finally, we need to insert back the edges of the spanning forest $F_H$ to the intermediate spanning forest of $G$ (after the deletions) to find the new spanning forest $F'$. The corresponding Euler tours due to these insertions can be updated using the same algorithm for processing batch insertions from \Cref{subsec:ins}.

\begin{lemma}
\label{lem:batch-delete}
The algorithm described above can be implemented in $\Oh(1/\spacexp)$ rounds.
\end{lemma}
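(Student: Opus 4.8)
The plan is to account for the round complexity of the batch-deletion procedure component by component, reusing the primitives already established. First I would note that updating the $t = \Oh(\log n)$ sketches for every deleted edge is done by broadcasting the $\Oh(k)$ deletion messages, each of size $\Oh(1)$, which takes $\Oh(1)$ rounds; similarly, removing the non-tree edges only touches these sketches and hence also costs $\Oh(1)$ rounds and leaves the spanning forest and its E-tours untouched. Next, removing the tree edges splits each affected spanning tree into subtrees $T_1, \dots, T_p$ with $p \le 2k$; the E-tour update here is the exact inverse of the batch-insertion procedure of \Cref{subsec:ins}, so by the same reasoning as \Cref{lem:batch-insert} — build the auxiliary sequence $S$ of length $\Oh(k)$ on one machine via \Cref{lem:seq}, emit one $\Oh(1)$-size message per consecutive pair of edges in $S$, broadcast, and have each machine locally apply the corresponding shift-index / update-index modifications (now with flipped signs and index removals) — this step runs in $\Oh(1)$ rounds.

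The dominant cost comes from building the merged sketches $\mathcal{S}_{Z_i}$ for the components $Z_1, \dots, Z_p$ of $H$ and then running the AGM-style spanning-forest routine on $H$. Each merge of the $\Oh(\log^3 n)$-bit sketches over a set $Z_i$ is a sum of up to $n$ linear sketches; since a machine holds $\Omega(n^{\spacexp}/\log^3 n)$ sketches, this aggregation completes in $\Oh(\log_{n^{\spacexp}/\log^3 n} n) = \Oh(1/\spacexp)$ rounds, and all $p \le 2k$ components together with all $t$ independent sketch copies are merged in parallel. This is precisely the bound proved in \Cref{lem:del-forest}: once $V(H)$ and the associated sketches (total size $\Oh(k \cdot \log^3 n) = \Oh(n^{\spacexp})$) are gathered onto a single machine, the $\Oh(\log k)$-step boosting procedure that extracts $F_H$ is simulated entirely locally, contributing no additional rounds.

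Finally, inserting the edges of $F_H$ back into the intermediate spanning forest of $G$ to obtain $F'$ is handled by the batch-insertion algorithm of \Cref{subsec:ins}, which by \Cref{lem:batch-insert} costs $\Oh(1)$ rounds; the relevant sketches are already current. Summing the contributions — $\Oh(1)$ for sketch updates and non-tree-edge removal, $\Oh(1)$ for the E-tour split, $\Oh(1/\spacexp)$ for sketch merging and constructing $F_H$ via \Cref{lem:del-forest}, and $\Oh(1)$ for re-inserting $F_H$ — yields an overall bound of $\Oh(1/\spacexp)$ rounds. I expect the only subtlety requiring care is verifying that the inverse E-tour update (the split analogue of the four-case insertion procedure) is genuinely symmetric: one must check that decrementing indices and removing the appropriate elements from $\ind_v$ for terminal vertices $v$ reconstructs exactly the E-tours of the post-deletion subtrees, so that the subsequent re-insertion of $F_H$ operates on a well-formed E-tour data structure. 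This is routine but is the place where a sign error or an off-by-one in the $\Delta$ bookkeeping would hide; everything else is a direct appeal to the lemmas already in hand.
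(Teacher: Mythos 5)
Your proposal is correct and follows essentially the same route as the paper's proof: it decomposes the batch-deletion procedure into the same steps (sketch updates and non-tree-edge handling in $\Oh(1)$ rounds, the inverse E-tour split via the auxiliary sequence in $\Oh(1)$ rounds, constructing $F_H$ via \Cref{lem:del-forest} in $\Oh(1/\spacexp)$ rounds, and re-inserting $F_H$ via \Cref{lem:batch-insert} in $\Oh(1)$ rounds) and cites the same supporting lemmas. The extra detail you give on the sketch-merging round count and the sign/offset symmetry of the inverse split is consistent with, and already covered by, the surrounding text and \Cref{lem:del-forest}.
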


\begin{proof}
We first argue about the number of rounds. The respective sketches can be updated in $\Oh(1)$ rounds by broadcasting all the deletions. Whether a deleted edge is a non-tree can be verified in $\Oh(1)$ rounds from the array $C$. After the deletions, we can update the E-tour trees following a similar procedure as in the insertion case that can also be done in $\Oh(1)$ rounds by \Cref{lem:batch-insert}. Next, from \Cref{lem:del-forest}, we can construct a spanning forest $F_H$ in the auxiliary graph $H$ in $\Oh(1/\spacexp)$ rounds. Once we find $F_H$, which is of size $\Oh(k)$, we insert back the edges in $F_H$ in our graph. This amounts to updating the E-tours under a batch of insertions and can be dealt with as before in $\Oh(1)$ rounds using \Cref{lem:batch-insert}.
\end{proof}

Hence, from the above description, we have the following result:

\begin{theorem}
\label{thm:con-batch}
Let $0 < \spacexp < 1$ be an arbitrary constant. Given an undirected graph $G$ with $n$ vertices, we can maintain the connectivity of $G$ to process a batch of $\Oh(n^{\spacexp}/{\log^3 n})$ updates in $\Oh(1/\spacexp)$ rounds on an \MPC with sublinear local memory $\lspace=\Oh(n^{\spacexp})$ and $\Oh(n {\ \log^3 n})$ total memory. Furthermore, within the same bounds, the algorithm can maintain a spanning forest of $G$. Moreover, it is assumed that the total length of the update stream is a polynomial in $n$.
\end{theorem}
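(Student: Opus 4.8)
The plan is to assemble \Cref{thm:con-batch} directly from the component lemmas already developed in \Cref{sec:connect-batch}, verifying that each of correctness, round complexity, and memory holds for a batch of size $k = \Oh(n^{\spacexp}/\log^3 n)$. First I would handle \emph{correctness}: the algorithm processes a batch $U = I \cup D$ by splitting it into an insertion phase and a deletion phase (as argued in \Cref{subsec:model}, this is legitimate for constant-round algorithms). For the insertion phase, I would invoke the analysis of \Cref{subsec:ins}: the auxiliary graph $H$ on the affected components has $|V(H)| \le 2k$ and $|E(H)| \le k$, its spanning forest $F_H$ picks out exactly the edges that merge distinct components (the rest being non-tree edges that only touch sketches), and the auxiliary sequence $S$ of \Cref{def:aux} together with the shift-index/update-index case analysis (Cases~1--4) correctly reassembles the Euler tours into $\eu_{T}$ for the updated spanning forest $F'$. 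For the deletion phase I would appeal to the symmetry argument in \Cref{subsec:del}: removing tree edges splits E-tours (inverse of the insertion merge, with signs flipped on the shifts and index removals instead of insertions), and then the Ahn--Guha--McGregor-style iterative sketch-merging procedure on $H$ recovers replacement edges and builds $F_H$, which is finally reinserted using the batch-insertion machinery. Correctness of the sketch recovery here rests on the oblivious-adversary assumption (so the batch is independent of the maintained $F$ and the sketches), exactly as in \Cref{obs:Suv}, together with the union bound over all queries issued across the whole (polynomially long) update stream --- this is where the "total length of the update stream is a polynomial in $n$" hypothesis is used.

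Next I would assemble the \emph{round complexity}. Updating all sketches for the batch is $\Oh(1)$ rounds by broadcasting the $\Oh(k)$ updates. For insertions, \Cref{claim:ins-forest} gives $F_H$ in $\Oh(1)$ rounds, \Cref{lem:seq} gives the sequence $S$ in $\Oh(1)$ rounds, and \Cref{lem:batch-insert} gives the E-tour update in $\Oh(1)$ rounds; so the insertion phase is $\Oh(1)$ rounds. For deletions, \Cref{lem:del-forest} constructs $F_H$ in $\Oh(1/\spacexp)$ rounds (the bottleneck being the repeated merging of $\Oh(k)$ sketches, each merge costing $\Oh(\log_{\lspace} n) = \Oh(1/\spacexp)$ rounds, run in parallel over all $\Oh(\log n)$ independent sketch copies), the E-tour splitting mirrors \Cref{lem:batch-insert} at $\Oh(1)$ rounds, and the final reinsertion of $F_H$ is again $\Oh(1)$ rounds via \Cref{lem:batch-delete}. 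Taking the maximum over the two phases yields $\Oh(1/\spacexp)$ rounds overall for a batch, and the query is $\Oh(1)$ since a spanning forest is stored explicitly.

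Finally, the \emph{memory} bound. The persistent data structures are: the component-id array $C$ of size $\Oh(n)$; the spanning forest $F$ and its Euler tours, which have total length $\Oh(n)$; and the per-vertex sketches. For the insertion phase a single sketch per vertex would suffice, but because the deletion phase runs the iterative AGM-style contraction it needs $t = \Oh(\log n)$ independent sketches per vertex, each of $\Oh(\log^2 n)$ bits (\Cref{lem:sketch1}), for $\Oh(n \log^3 n)$ total. All auxiliary objects ($H$, $S$, $F_H$, the merged sketches $\mathcal{S}_{Z_i}$) have size $\Oh(k) \le \Oh(n^{\spacexp})$ and fit in one machine, so they do not change the asymptotics. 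Hence the total memory is $\Oh(n \log^3 n)$ and the local memory is $\lspace = \Oh(n^{\spacexp})$; the batch size $\Oh(n^{\spacexp}/\log^3 n)$ is exactly what is needed so that the $\Oh(\log^3 n)$ bits of sketch information per update collectively fit on one machine. Combining the three verifications gives the theorem.

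I expect the main obstacle to be nothing genuinely new at this point --- the heavy lifting is in \Cref{lem:batch-insert} and the four-case Euler-tour reassembly of \Cref{subsec:ins}, plus \Cref{lem:del-forest} --- so the "proof" is essentially a careful bookkeeping of which lemma supplies which bound and a verification that the deletion phase's switch from one sketch to $\Oh(\log n)$ independent sketches (needed because the iterative contraction queries the \emph{same} sketch family repeatedly and so cannot rely on a single high-probability sketch) does not break the $\Oh(n\log^3 n)$ total-memory target. The one subtlety worth spelling out is the probability amplification: each of the $\Oh(\log n)$ sketches per vertex succeeds only with constant probability, but running the AGM contraction for $\Oh(\log k)$ rounds with fresh sketches each round drives the failure probability down, and then a union bound over the $\poly(n)$ batches keeps the whole stream correct w.h.p.
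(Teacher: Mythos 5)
Your proposal is correct and takes essentially the same route as the paper's own proof: it assembles the theorem from the correctness of the streaming algorithm and the four-case E-tour join/split analysis, gets the round complexity from \Cref{claim:ins-forest}, \Cref{lem:seq}, \Cref{lem:batch-insert}, \Cref{lem:del-forest}, and \Cref{lem:batch-delete} with insertions and deletions processed in two consecutive phases, and uses the $\Oh(\log n)$ independent constant-success sketches (each $\Oh(\log^2 n)$ bits) for the deletion phase exactly as the paper does, with the same $\Oh(n\log^3 n)$ total-memory accounting. (One trivial slip: the final reinsertion of $F_H$ is handled by \Cref{lem:batch-insert}, not \Cref{lem:batch-delete}, but this does not affect the argument.)
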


\begin{proof}
There are three parts to the proof: join and split operation of multiple E-tour trees, maintaining the spanning forest, and the round complexity analysis.

For the join and split operation of multiple E-tour trees, note that the join operation is used for insertion and the split operation is used for deletion. First, consider the insertion case. We only need to argue that the messages received by each machine correctly update the final E-tour as the rest is clear from the description. Towards this note that for a pair of consecutive edges $e_ie_j$ in the sequence $S$, one of the four cases can arise based on whether $e_i$ or $e_j$ is a forward edge or not. In each such case, we present a constructive algorithm and explicitly describe the messages that we broadcast, from which the correctness can be verified readily. In the deletion case, the split of an E-tour tree into multiple E-tour trees, the implementation details are similar to the join operation in the insertion case and their correctness can be argued similarly.

For maintaining the spanning forest, recall that we have argued the correctness of our streaming algorithm in \Cref{sec:pure-streaming}. Then we have shown its \MPC implementation for a single update in \Cref{sec:connect-single}. The fact that the algorithm for batch updates maintains the spanning forest correctly w.h.p.\ follows from the fact that we are essentially implementing the streaming algorithm in \Cref{sec:pure-streaming}. However, a straightforward implementation of the algorithm in \Cref{sec:connect-single} requires a number of rounds proportional to the number of updates. That our implementation maintains a spanning forest correctly follows from (i) the correct implementation of the join and split operations of multiple E-tour trees in parallel, and (ii) we use $\Oh(\log n)$ independent sketches here similar to \cite{AGM12} since we build on their approach for finding the spanning forest of $F_H$ in \MPC.

The round complexity of our algorithm follows from the fact that a batch of insertions is processed in $\Oh(1)$ rounds (\Cref{lem:batch-insert}) and a batch of deletions in $\Oh(1/\spacexp)$ rounds (\Cref{lem:batch-delete}), and by considering the insertions and deletions in a single batch of updates in two consecutive steps.
\end{proof} 
\section{Applications of \textsc{Connectivity}}
\label{sec:mst}

In this section, extending the connectivity algorithm in \Cref{sec:connect-batch} we present an algorithm in the \MPC model under batch updates for (i) an exact minimum spanning forest (MSF) algorithm in insertion-only streams, (ii) approximating the weight of MSF and (iii) testing whether the input graph is bipartite, both in dynamic streams.

The results on the minimum spanning forest are formally stated in the following theorem.

\begin{theorem}
\label{thm:mst}
Let $0 < \spacexp < 1$ be an arbitrary constant. Given an undirected graph with $n$ vertices, on an \MPC with $\lspace=\Oh(n^{\spacexp})$ local memory and $\widetilde{\Oh}(n)$ total memory we can process a batch of $\Oh(n^{\spacexp}/{\log ^3 n})$ updates and maintain in $\Oh(1/\spacexp)$ rounds {(i) an exact minimum spanning forest for insertion-only updates, and (ii) a $(1+\epsilon)$-approximate minimum spanning forest for arbitrary updates}. Moreover, it is assumed that the total length of the update stream is a polynomial in $n$.
\end{theorem}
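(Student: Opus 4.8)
# Proof Proposal for Theorem~\ref{thm:mst}

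The plan is to handle the two parts of the theorem separately, building on the connectivity machinery of \Cref{sec:connect-batch} in a black-box fashion for part (ii) and via a more hands-on use of the Euler-tour data structure for part (i).

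\paragraph{Part (i): Exact MSF for insertion-only streams.}
I would maintain an exact MSF $F$ of the current graph at all times, together with the sketches and the Euler-tour tree data structure of \Cref{sec:connect-single}, now augmented so that each Euler-tour tree also records, on each tree edge, its weight. The folklore sequential rule is: when an edge $e=\{u,v\}$ of weight $w_e$ arrives, if $u,v$ lie in different components we simply add $e$ to $F$; otherwise we look at the unique $u$--$v$ path in $F$, find its heaviest edge $e^\star$, and if $w_{e^\star} > w_e$ we swap (remove $e^\star$, add $e$), else we discard $e$. To execute a batch of $k=\Oh(n^\spacexp/\log^3 n)$ such insertions in $\Oh(1/\spacexp)$ rounds, the new primitive needed is \emph{Identify-Path}: given up to $k$ vertex pairs, report for each pair the set of tree edges on its path in $F$. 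First I would move all $k$ query pairs to one machine and, for each pair $(u,v)$, use $f(u),\ell(u),f(v),\ell(v)$ in the relevant Euler tour to read off the index interval spanning the path (as in the Split analysis, the path corresponds to the positions between $u$ and $v$ in the tour, restricted to the ancestor structure); broadcasting these $\Oh(k)$ index intervals lets every machine report which of its stored tree edges fall inside each interval. Taking, per interval, the maximum-weight edge gives the candidate $e^\star$ for each query, again by a constant-round aggregation. The subtlety is that the $k$ paths may overlap and a single insertion may be "blocked" by another insertion in the same batch; I would resolve this by the standard device of processing the batch as a single static problem — form the auxiliary multigraph on components-of-$F$-plus-batch-edges, but more simply: sort all $\Oh(m'+k)$ relevant edges by weight and run the greedy/Bor\r{u}vka-style MSF computation restricted to $F\cup I$, which has $\Oh(n)$ edges, in $\Oh(1/\spacexp)$ rounds using the total memory $\tOh(n)$. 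The resulting tree edge changes are then realized on the Euler-tour structure via a batch of Splits followed by a batch of Joins, each in $\Oh(1)$ rounds by \Cref{lem:etour-oper} and \Cref{lem:batch-insert}. Correctness follows because the MSF of $G'$ equals the MSF of $F\cup I$ (no inserted edge can displace a non-tree edge of $G$, and all tree edges of $G$ survive in $F\cup I$).

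\paragraph{Part (ii): $(1+\epsilon)$-approximate MSF for arbitrary updates.}
Here I would invoke \Cref{thm:con-batch} as a black box, run on $\Oh(\log_{1+\epsilon} W) = \Oh(\epsilon^{-1}\log W)$ parallel copies (where $W$ is the ratio of maximum to minimum edge weight, assumed $\poly(n)$ so this is $\Oh(\epsilon^{-1}\log n)$ copies). The $j$-th copy maintains, via the connectivity algorithm, the spanning forest of the subgraph $G_{\ge (1+\epsilon)^j}$ consisting of edges of weight at least $(1+\epsilon)^j$ — equivalently, it maintains the number $c_j$ of connected components of $G_{\ge(1+\epsilon)^j}$. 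The classical identity (Chazelle–Rubinfeld–Trevisan, used by Ahn–Guha–McGregor in the streaming setting) expresses the weight of the MSF as a telescoping sum $\sum_j \big((1+\epsilon)^{j+1}-(1+\epsilon)^{j}\big)\,(c_{j}-c_0) + c_0 \cdot w_{\min}$, up to a $(1+\epsilon)$ factor, since a component count at threshold $\tau$ counts exactly the MSF edges of weight $\ge \tau$. Each batch of $\tOh(n^\spacexp)$ updates is forwarded to all $\Oh(\epsilon^{-1}\log n)$ copies (an edge of weight $w$ is an update only in copies with $(1+\epsilon)^j \le w$), and each copy processes it in $\Oh(1/\spacexp)$ rounds by \Cref{thm:con-batch}; the total memory is $\Oh(\epsilon^{-1}\log n)\cdot\tOh(n)=\tOh(n)$, and the final sum is assembled in $\Oh(1)$ rounds. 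If an actual approximate forest (not just its weight) is wanted, one can additionally keep the spanning forests themselves at each threshold and output the coarsest one, or round edge weights to powers of $(1+\epsilon)$ and run part (i)'s structure on the rounded instance.

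\paragraph{Main obstacle.}
I expect the crux to be part (i): specifically, implementing \emph{Identify-Path} for a \emph{batch} of pairs and then correctly resolving the interdependence among the simultaneous insertions so that the final $F'$ is genuinely the MSF of $G'$ and not an artifact of processing order. The Euler-tour index bookkeeping (reading the path-spanning interval from $f(\cdot),\ell(\cdot)$ and then committing a batch of Split/Join operations consistently, reusing the four-case machinery of \Cref{subsec:update-Euler-tour-trees}) is technically the most delicate part; the cleanest route, which I would adopt, is to sidestep per-path reasoning entirely by reducing each batch to a single static MSF computation on the $\Oh(n)$-edge graph $F\cup I$ and only then translating the net change into Euler-tour operations. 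Part (ii) is comparatively routine given \Cref{thm:con-batch}.
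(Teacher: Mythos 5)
There is a genuine gap in part (i). Your ``cleanest route'' is to reduce each batch to a \emph{static} MSF computation on $F\cup I$, a graph with $\Oh(n)$ edges, and you assert this can be run ``in $\Oh(1/\spacexp)$ rounds.'' That step fails: with strongly sublinear local memory, no known algorithm computes connectivity (let alone an MSF) of an $n$-vertex, $\Oh(n)$-edge graph in $\Oh(1/\spacexp)$ rounds --- the best known bounds are $\Oh(\log D)$ or $\Oh(\log n)$ rounds, and the paper itself notes that recomputing connectivity/MST from scratch costs $\Oh(\log n)$ rounds; avoiding exactly this recomputation is the whole point of the theorem. So your fallback for the interdependence issue (overlapping paths within a batch) does not meet the round bound. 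The paper keeps the per-path approach you sketched first: a \emph{batch} Identify-Path (broadcasting the $\Oh(k)$ tuples $f(u),\ell(u),f(v),\ell(v)$ and letting each machine test its tree edges against the ancestor conditions), extraction of the per-path maximum-weight edge $e'$ for every batch edge, then a batch \emph{deletion} of the set $I'$ of these path maxima via the connectivity machinery, followed by re-insertion of $I\cup I'$; since now no edge of $I\cup I'$ is a tree edge, the insertions are handled through the auxiliary component graph $H$, which has $\Oh(k)$ vertices and edges and therefore fits on a single machine, where a suitable minimum-weight subset $X$ can be selected locally and inserted with the batch-insertion routine. The interdependence you worried about is resolved inside this single-machine auxiliary graph, not by a global static recomputation. (Your observation that $\mathrm{MSF}(G\cup I)=\mathrm{MSF}(F\cup I)$ is correct, but it does not rescue the round complexity.)

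Part (ii) also has a gap: the theorem requires maintaining a $(1+\epsilon)$-approximate minimum spanning \emph{forest} under arbitrary updates, and your proposal only solidly delivers the \emph{weight} estimate via the Chazelle--Rubinfeld--Trevisan identity (this part matches the paper's Section on the approximate weight). Your two suggestions for producing an actual forest do not work: outputting ``the coarsest'' threshold forest is not a $(1+\epsilon)$-approximate MSF, and rounding weights and ``running part (i)'s structure'' is inapplicable because part (i) is insertion-only while part (ii) must handle deletions. The paper's missing idea is to maintain a spanning forest $F_i$ of each threshold graph $G_i$ (edges of weight at most $(1+\epsilon)^i$) with the batch-connectivity algorithm, and to assemble $F$ by taking an edge $e=\{u,v\}\in F_i$ exactly when $C_{i-1}[u]\neq C_{i-1}[v]$; the correctness argument is non-trivial precisely because the forests $F_i$ need not be nested, and the paper devotes a separate argument to showing the result is a spanning forest of the right weight.
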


The algorithm in insertion-only streams is presented in \Cref{sec:mst-xat} where we discuss our result in dynamic streams (that maintains an approximation to the weight of the MSF) in \Cref{sec:mst-apx}. The result on bipartiteness is discussed in \Cref{sec:bip}.

\subsection{Exact MSF in insertion-only streams}
\label{sec:mst-xat}

The algorithm maintains a current minimum spanning forest $F$ at any point in time. When an edge $e=\{u,v\}$ arrives, we check whether $u$ and $v$ are in the same component or not. If no, then we add $e$ to the current $F$. Otherwise, we find the heaviest edge $e'$ in the path from $u$ to $v$ in $F$. If the weight of $e'$ is more than $e$, we delete $e$ from $F$ and add $e'$ to $F$. Otherwise, there is no change to $F$. The space complexity is clearly $\tOh(n)$. Note that this algorithm is a folklore algorithm whose processing time per update can be $\Omega(n)$ in the worst case. Our contribution here is to show that we can implement the algorithm in the \MPC model such that we can process a batch of $\Oh(n^\phi)$ updates that can be done in $\Oh(1)$ rounds.

\subsubsection{\MPC implementation for a single update}
\label{sec:mst-insertion-only-single-step}

Recall our connectivity algorithm in \Cref{sec:connect-single}. We maintain the same data structure along with E-tours for each tree in the minimum spanning forest (we are maintaining). Recall that we have discussed the implementations of three operations in \MPC: rooting, joining, and splitting. Here, we introduce another operation \emph{Identify-Path} (defined below) which also can be implemented by communicating only $\Oh(1)$ size information.

\paragraph{Identify-Path.} We are given an E-tour $\eu_T$ of a tree $T$ and two vertices $u$ and $v$ in $T$. The objective is to report all the edges in the path between $u$ and $v$ in $T$.

\begin{lemma}
   Consider the operation Identify-Path defined above. This can be implemented on an E-tour in $\Oh(1)$ rounds in \MPC.
\end{lemma}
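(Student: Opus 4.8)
The plan is to reduce Identify-Path to a purely local test on the stored first/last occurrence indices, after first re-rooting $T$ at $u$. If $u=v$ the path is empty and there is nothing to do, so assume $u\neq v$. First I would apply the Rooting operation of \Cref{lem:etour-oper} to make $u$ the root of $T$; this takes $\Oh(1)$ rounds and afterwards the stored values $f(\cdot)$ and $\ell(\cdot)$ are those of the E-tour of the tree rooted at $u$. With $u$ as the root, the $u$--$v$ path in $T$ is exactly the root-to-$v$ path, so it suffices to recover that path.

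Next I would use the standard E-tour characterisation of the ancestor relation (already invoked in the Split case of \Cref{lem:etour-oper}): a vertex $x$ is an ancestor of, or equal to, a vertex $y$ if and only if $f(x)\le f(y)$ and $\ell(x)\ge \ell(y)$. Hence the vertex set of the root-to-$v$ path is $P=\{x\in V(T): f(x)\le f(v),\ \ell(x)\ge \ell(v)\}$. The key observation is that the edge set of this path is exactly the set of tree edges both of whose endpoints lie in $P$: every edge of the path clearly has both endpoints in $P$; conversely, if a tree edge $\{a,b\}$ has $a,b\in P$, then $a$ and $b$ are comparable in the rooted tree (both lie on the single path from the root to $v$), so one is the parent of the other and the edge belongs to that path. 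Therefore an edge $\{a,b\}$ is on the $u$--$v$ path if and only if $f(a)\le f(v)$, $\ell(a)\ge \ell(v)$, $f(b)\le f(v)$, and $\ell(b)\ge \ell(v)$ --- a condition that depends only on $f(v),\ell(v)$ and on the indices $f(a),f(b),\ell(a),\ell(b)$ that our data structure already stores together with the edge $\{a,b\}$.

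The implementation is then immediate: broadcast the two numbers $f(v)$ and $\ell(v)$ to all machines (one $\Oh(1)$-size message, delivered in $\Oh(1)$ rounds exactly as in the broadcasts used in \Cref{lem:etour-oper}); each machine, using only local information, marks every edge it stores that satisfies the above inequalities; finally gather the marked edges, which number at most $n-1$ and hence fit within the $\tOh(n)$ total memory, using $\Oh(1)$-round sorting~\cite{GSZ11}. If desired, one final Rooting operation restores the original root, again in $\Oh(1)$ rounds. Altogether this is $\Oh(1)$ rounds, as claimed.

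I do not expect a genuine obstacle here; the step that needs the most care is the correctness argument for the edge characterisation --- specifically, verifying that ``both endpoints are ancestors-or-equal of $v$'' selects exactly the path edges and no others --- together with the bookkeeping observation that the relevant $f,\ell$ values are precisely the ones our E-tour data structure keeps on each edge, so that the test is genuinely local and no further communication is needed.
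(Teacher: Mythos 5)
Your proof is correct, and its core is the same as the paper's: broadcast a constant number of first/last-occurrence indices and let every machine test its stored edges locally against interval (ancestor) conditions, so the whole operation costs $\Oh(1)$ rounds. The difference is how you get to a clean test. You first re-root $T$ at $u$ via the Rooting operation, which reduces Identify-Path to recovering the root-to-$v$ path; then a single uniform criterion suffices (both endpoints $x$ of an edge must satisfy $f(x)\le f(v)$ and $\ell(x)\ge \ell(v)$), and only $f(v),\ell(v)$ need to be broadcast. The paper instead keeps the current root, broadcasts all four values $f(u),f(v),\ell(u),\ell(v)$, and does a case analysis according to whether $u$ is an ancestor of $v$ or not, characterizing the path edges directly in the unrooted-at-$u$ tour. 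Your version buys a simpler, essentially case-free correctness argument (your parent-child observation for tree edges with both endpoints ancestors of $v$ is right, and matches the ancestor criterion the paper itself uses in the Split operation). What the paper's version buys is batch-friendliness: in \Cref{sec:mst-insertion-only-batch} many Identify-Path queries are executed in parallel within the same tree, which works because no query modifies the tour; your physical re-rooting at $u$ cannot be done simultaneously for many different $u$'s in one tree, so as written your method handles one query at a time (perfectly fine for the lemma as stated, but it would need to be "virtualized" — i.e., translated back into conditions on the original tour, which is essentially the paper's case analysis — to support the batch application).
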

\begin{proof}
    Assume that $f(u) < f(v)$. The case when $f(u)>f(v)$ is analogous. Any edge $\{a,b\}$ in the unique path from $u$ to $v$ in $T$ satisfy one of the following properties depending on whether $\ell(u) <\ell(v)$ or not:

\begin{description}
     \item[(i) $\ell(u)<\ell(v)$:] One of the following must be true.
     \begin{itemize}
         \item $f(a), f(b) \leq f(u)$; $\ell(a), \ell(b)\geq \ell(u)$; and $\ell(a), \ell(b) \leq \ell(v)$.
         \item  $\ell(a), \ell(b)\geq \ell(u)$; $f(a), f(b) \geq f(v)$; and $f(a), f(b) \geq f(v)$.
     \end{itemize}
     \item[(ii) $\ell (u) > \ell (v)$:] $f(a),f(b)>f(u)$; $f(a),f(b)<f(v)$;  $\ell(a),\ell(b) < \ell(u)$; $\ell(a), \ell(b) > \ell (v)$.
\end{description}

We can broadcast the value of $f(u),f(v),\ell(u)$, and $\ell(v)$ to each machine in $\Oh(1)$ rounds. Then each edge $\{a,b\}$ can decide whether it is in the path from $u$ to $v$ by checking (i) or (ii) depending on $\ell(u) < \ell(v)$ or not, respectively.


Now, we are ready to discuss how we update (in $\Oh(1)$ rounds) the minimum spanning forest $F$ and the E-tours when an edge $e=\{u,v\}$ is inserted.

\paragraph{$u$ and $v$ are in different connected components.}
Here, we want to merge the two components one having $u$ and the other one having $v$. Here the update to the data structure is exactly the same as the Insertion operation that we discussed in \Cref{lem:analysis-single}. Hence, this can be done in $\Oh(1)$ rounds.

\paragraph{$u$ and $v$ are in the same connected components.}
Here we perform the Identify-Path operation such that machines can detect all the edges on them that are in the path between $u$ and $v$ in $F$. Then, we can identify the edge $e'$ with the maximum weight in the path from $u$ to $v$ in $F$ in $\Oh(1/\phi)$ rounds using a broadcast tree argument or sorting. We are done when the weight of $e'$ is less than that of $e$. Otherwise, we delete edge $e$ from $F$ and then finally we insert edge $e'$. The corresponding update to the data structures and E-tours are exactly the same as the Delete and Insert operation in \Cref{lem:analysis-single}, and hence can be done in $\Oh(1)$ rounds.
\end{proof}

\subsubsection{\MPC implementation for batch updates}
\label{sec:mst-insertion-only-batch}

We divide the analysis into two parts: (1) for each $\{u,v\}$ in $I$, $u$ and $v$ are in different components or (2) for each $\{u,v\}$ in $I$, $u$ and $v$ are in the same component.

\paragraph{Case 1: (edges in $I$ are not in the same component).}  Consider the components $\mathcal{C}$ of $G$ between which the edges in $I$ are present, which are at most $O(n^\phi)$ many. Consider a subset $X$ of $I$ as follows. For any two components, $C_1$ and $C_2$ in $\mathcal{C}$ such that there is at least one edge having endpoints in both $C_1$ and $C_2$, $X$ has exactly one such edge with the minimum weight. Then we use our connectivity algorithm to insert the batch of edges in $X$.

\paragraph{Case 2: (edges in $I$ are in different components).}  We first find a set $I'$ and delete them from the graph $G$. For each $e=\{u,v\}$ in $I$, in parallel, we perform the Identify-Path operation such that machines can detect all the edges on them that are in the path between $u$ and $v$ in $F$. This can be done in $\Oh(1)$ rounds as we need to broadcast $f(u), f(v), \ell(u)$, and $\ell(v)$ for each $\{u,v\}$ in $I$. Then, for each edge $e=\{u,v\}$, we can identify the edge $e'$ with the maximum weight in the path from $u$ to $v$ in $F$ in $\Oh(1/\phi)$ rounds. Let $I'$ be the set of such edges $e'$. As already pointed out, we perform deletion of the edges in $I'$. This is equivalent to deleting a batch of $\Oh(n^{\spacexp})$ edges to maintain a spanning forest. Hence, by \Cref{lem:batch-delete}, all the edges in $I'$ can be deleted and the data structure along with E-tours can be updated in $\Oh(1/\spacexp)$ rounds. Now, finally, the objective is to insert the edges in $I \cup I'$. Note that none of the edges in $I \cup I'$ are in the current minimum spanning forest (as we have deleted the edges in $I'$). So, the insertion of the edge in $I \cup I'$ can be done in the same way as we have handled Case 1: we find a suitable subset of edges $X \subseteq I\cup I'$ and insert them. Note that  all steps discussed in Case 2 can be performed in $\Oh(1)$ rounds. 

\subsection{Approximate minimum spanning forest}
\label{sec:mst-apx}

In this section, we present an algorithm for maintaining a $(1+\epsilon)$-approximate minimum spanning forest in dynamic streams. To begin with, we present a simpler algorithm that only maintains the approximate weight of the minimum spanning forest.

\subsubsection{Approximate weight of an MSF}
\label{sec:mst-apx1}

We now describe an algorithm to maintain a $(1+\epsilon)$-approximation to the weight of the minimum spanning forest in dynamic streams. We reduce our problem to maintaining connectivity which is an adaptation of the idea of Chazelle et. al \cite{CRT05}.


Let the given graph be $G=(V, E)$ with edge weights in the range $[1, W]$ where $W$ is bounded by $\poly(n)$. Wlog, we assume that $G$ is connected otherwise we can apply the same algorithm on each connected component of $G$ in parallel. Consider $t+1$ many graphs $G_0, G_1, \dots, G_t$ where $t= \lceil \log_{1+\epsilon} W \rceil$ and each $G_i$ is a subgraph of $G$ consisting of the entire vertex set $V$ but only the edges of weight at most $w_i = (1+\epsilon)^i$ from $E$. For $0 \le i \le t$, let $cc(G_i)$ be the number of connected components of $G_i$. To find the weight of the approximate minimum spanning tree $T$ of $G$, we consider the difference between $cc(G_{i+1})$ and $cc(G_{i})$ for $0 \le i \le t$. In particular, we have
\begin{align}
\label{eq:mst}
    w(T) &\le n - (1+\epsilon)^t + \sum_{i=0}^t \lambda_i cc(G_i) \le (1+\epsilon)w(T) \enspace.
\end{align}
where $\lambda_i = (1+\epsilon)^{i+1} - (1+\epsilon)^i$. See \cite[Lemma 3.4]{AGM12} for a proof.

Our algorithm for maintaining an $(1+\epsilon)$ approximation to the weight $w(T)$ proceeds as follows. In the preprocessing phase, we construct the graphs $G_0, G_1, \dots, G_t$ from $G$. In each $G_i$, we maintain the number of connected components $cc(G_i)$ using the algorithm in \Cref{sec:connect-batch} under batch updates. This can be done by counting the distinct $C[i]$ values which takes $\Oh(1)$ rounds. Whenever a query arrives we compute the quantity in \Cref{eq:mst} in the local memory of a single machine and output it. The correctness and round complexity follow from the above description and the guarantees of \Cref{thm:con-batch}. 
\subsubsection{Finding an approximate MSF}
\label{mst-exact-search}

In this section, we show that we can extend algorithm in the the previous section to even output a $(1+\eps)$-approximate minimum spanning forest. As before, let the given graph be $G=(V, E)$ with edge weights in the range $[1, W]$ where $W$ is bounded by $\poly(n)$. We again assume that $G$ is connected, otherwise, we can apply the same algorithm on each connected component of $G$ in parallel. Consider $t+1$ many graphs $G_0, G_1, \dots, G_t$ where $t= \lceil \log_{1+\epsilon} W \rceil$ as before i.e., each $G_i$ is a subgraph of $G$ with vertex set $V$ and edges of weight at most $w_i = (1+\epsilon)^i$ from $E$. For $1 \le i \le t$, let $F_i$ be the spanning forest of $G_i$.

Our algorithm for maintaining an $(1+\epsilon)$ approximate MSF proceeds as follows. In the preprocessing phase, we construct the graphs $G_0, G_1, \dots, G_t$ from $G$. In each $G_i$, we maintain a spanning forest $F_i$ using the algorithm in \Cref{sec:connect-batch} under batch updates. Recall that in our connectivity algorithms (see \Cref{sec:pure-streaming}) we maintain a component id $C[v]$ for each vertex $v \in V(G)$ that stores the id of the component in which the vertex $v$ lies. Hence for each $G_i$, for $0 \le i \le t$, we maintain here a component id vector $C_i$.

To construct $F$ for the entire graph $G$, for each $F_i$ we consider each edge $e=\{u,v\}$ in $F_i$ in parallel and check if the component ids of $u$ and $v$ in $C_{i-1}$ are the same. We add $e$ to $F$ if and only if $C_{i-1}[u]$ and $C_{i-1}[v]$ are different. This can be done in $\Oh(1)$ rounds for each edge $e$ and for each graph $G_i$. So all together our update algorithm takes $\Oh(1)$ rounds from the guarantees of \Cref{thm:con-batch}.

\paragraph{Correctness of our algorithm:}
Now we proceed to prove the correctness. First notice that the vertices $u$ and $v$ have different $C_{i-1}$ ids iff they are disconnected in $G_{i-1}$. We add this edge $\{u,v\}$ to $F$ from $F_i$, and our conditions are uniquely met for this particular $i$. For $j \ge i$ we have $C_j [u] = C_j[v]$, so we do not add this edge or any other edge between two vertices in that component. For $j < i$, there is no path between $u$ and $v$ in $F_j$.

Also, notice that $F_i$ is not necessarily a subset of $F_j$, for $i \le j$, due to the nature of our maintenance algorithm. This is because the maintained sketches might return a different edge while it is queried for an edge on some vertex $v$ in different $G_i$s. So we may have different edges in different $F_j$, for $j \ge i$, connecting two distinct components of $G_{i-1}$ (and so in $F_{i-1}$). However, while the edges in the spanning forests for each $F_i$ might vary, their component structure is still the same i.e., two vertices in $F_i$ are connected iff they are connected in $G_i$. Moreover, they are also connected in $F_j$ for each $j \ge i$, albeit possibly through different paths.

Hence, when we add an edge $\{u,v\}$ to $F$ from some $F_i$ we know that there was no path connecting $u$ to $v$ in any of $G_0, \ldots, G_{i-1}$. So $F$ is indeed a forest. The forest $F$ is also a spanning one. For contradiction, suppose not, and towards this there is an edge between some $u$ and $v$ but they are in different components in $F$. But from the guarantees of the connectivity algorithm we know for some $F_i$ they must be in the same component, suppose wlog is connected by the edge $\{u,v\}$. Let $F_{i^*}$ be the first such $F_i$. Then our algorithm would add this edge while considering the edges of $F_{i^*}$, which is a contradiction.

The fact that $F$ is indeed a $(1+\eps)$-approximate minimum spanning forest follows from the observations in \Cref{sec:mst-apx1} with the fact that the number of edges added to $F$ from some $G_i$ remains the same as before.

\subsection{Bipartiteness}
\label{sec:bip}

In this section, we provide an algorithm that tests whether the input graph is bipartite or not in dynamic streams. The result is formally stated in the following theorem.

\begin{theorem}
\label{thm:bipartite}
Let $0 < \spacexp < 1$ be an arbitrary constant. Given an undirected graph $G$ with $n$ vertices, we can maintain the bipartiteness of $G$ to process a batch of $\Oh(n^{\spacexp}/{\log ^3 n})$ updates in $\Oh(1/\spacexp)$ rounds on an \MPC with sublinear local memory $\lspace=\Oh(n^{\spacexp})$ and $\widetilde{\Oh}(n)$ total memory. Moreover, it is assumed that the total length of the update stream is a polynomial in $n$.
\end{theorem}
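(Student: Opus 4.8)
The plan is to reduce testing bipartiteness to maintaining connected-component counts on two graphs run in parallel, using \Cref{thm:con-batch} as a black box. Following the classical reduction of Ahn \etal~\cite{AGM12}, alongside the input graph $G$ we maintain its \emph{bipartite double cover} $\widehat G$: the vertex set of $\widehat G$ is $V \times \{0,1\}$ (so $|V(\widehat G)| = 2n$), and each edge $\{u,v\}$ of $G$ is represented in $\widehat G$ by the two edges $\{(u,0),(v,1)\}$ and $\{(u,1),(v,0)\}$. The key combinatorial fact we will use is that a \emph{connected} graph is bipartite if and only if its double cover splits into exactly two connected components (an even/acyclic component lifts to two disjoint copies, while an odd cycle lifts to a single even cycle of twice the length, which is connected). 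Summing this over the components of $G$ gives that $G$ is bipartite if and only if $cc(\widehat G) = 2\, cc(G)$, where $cc(\cdot)$ denotes the number of connected components.

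First I would run two independent instances of the batch-update connectivity algorithm from \Cref{thm:con-batch}, one on $G$ and one on $\widehat G$. A batch of $k = \Oh(n^{\spacexp}/\log^3 n)$ edge updates to $G$ is transformed, in $\Oh(1)$ rounds, into a batch of $2k$ edge updates to $\widehat G$ (each insertion or deletion of $\{u,v\}$ becomes the insertion or deletion of its two corresponding cover edges). Since $|V(\widehat G)| = 2n$, with $(2n)^{\spacexp} = \Theta(n^{\spacexp})$ and $\log(2n) = \Theta(\log n)$, this is still a valid batch for the $\widehat G$-instance, and both instances operate within local memory $\Oh(n^{\spacexp})$ and total memory $\widetilde{\Oh}(n)$. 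Each instance processes its batch in $\Oh(1/\spacexp)$ rounds by \Cref{thm:con-batch}.

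For a query, both connectivity instances maintain a component-id array (as in \Cref{sec:pure-streaming}), so $cc(G)$ and $cc(\widehat G)$ can each be computed in $\Oh(1)$ rounds by sorting the arrays and counting distinct ids~\cite{GSZ11}; we then report ``bipartite'' if and only if $cc(\widehat G) = 2\, cc(G)$. Correctness w.h.p.\ follows from the combinatorial characterization above together with the correctness guarantee of \Cref{thm:con-batch} for each of the two maintained instances; a union bound over the two instances and over the (polynomially many) updates keeps the overall failure probability at $1/\poly(n)$. The round, local-memory, and total-memory bounds are precisely those of \Cref{thm:con-batch} up to constant factors, which yields the theorem.

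I expect the only delicate point — rather than a genuine obstacle — to be the bookkeeping around degenerate components of $G$ (isolated vertices and single edges), where one must check that the identity $cc(\widehat G) = 2\, cc(G)$ still holds verbatim; the case analysis above (an isolated vertex lifts to two isolated vertices; a single edge lifts to two disjoint edges; an acyclic or even component lifts to two disjoint copies) confirms this, so no technical machinery beyond \Cref{thm:con-batch} and the $\Oh(1)$-round sorting primitive is needed.
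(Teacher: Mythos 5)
Your proposal is correct and matches the paper's own argument essentially verbatim: the paper also forms the bipartite double cover (its graph $G'$ with vertices $v_1,v_2$ and edges $\{u_1,v_2\},\{u_2,v_1\}$), invokes the same characterization from Ahn \etal\ \cite{AGM12} that $G$ is bipartite iff $cc(G')=2\,cc(G)$, runs the batch connectivity algorithm of \Cref{thm:con-batch} on both $G$ and $G'$ (each update in $G$ inducing two updates in $G'$), and counts components from the distinct component ids. Your explicit verification of the lifting argument and of the doubled batch size is a slightly more detailed write-up of the same proof, not a different route.
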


Consider the following graph $G' = (V', E')$. For each vertex $v \in V$ create two vertices $v_1, v_2 \in V'$ and for each edge $e= \{u, v\} \in E$, create two new edges $\{u_1, v_2\}$ and $\{u_2, v_1\}$ in $E'$. Then, from \cite[Lemma 3.3]{AGM12}, we know that $G$ is bipartite if and only if the number of connected components in $G'$ is exactly twice that of $G$.

\begin{lemma}[\cite{AGM12}]
Let $K$ be the number of connected components in $G$. Then $G'$ has $2K$ connected components if and only if $G'$ is bipartite.
\end{lemma}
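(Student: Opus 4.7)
The plan is to prove the statement component-by-component. First I would observe that $G'$ is always bipartite by construction: the two sides of the bipartition are $V_1' = \{v_1 : v \in V\}$ and $V_2' = \{v_2 : v \in V\}$, and by definition every edge of $E'$ has one endpoint in $V_1'$ and the other in $V_2'$. So, assuming the intended statement is that $G'$ has $2K$ components iff $G$ is bipartite (since ``$G'$ bipartite'' holds unconditionally and would make the equivalence vacuous), the real content is in relating connectedness in $G'$ to bipartiteness of each component of $G$.

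Next I would reduce to a single connected component. Fix a connected component $C$ of $G$, and let $C' \subseteq G'$ denote the subgraph on vertex set $\{v_1, v_2 : v \in C\}$ together with the $E'$-edges arising from $E(C)$. Since no edge of $E'$ crosses between different components of $G$, the components of $G'$ partition along components of $G$, and it suffices to show: \emph{$C'$ has exactly $2$ connected components if $C$ is bipartite, and exactly $1$ otherwise.} Summing over the $K$ components of $G$ then yields the lemma.

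The key step is a walk-parity argument. For any $u,v \in C$, I would establish by induction on the length of walks that $v_1$ and $u_1$ (respectively $v_1$ and $u_2$) are connected in $C'$ iff there is an even-length (respectively odd-length) walk from $v$ to $u$ in $C$; this uses the ``crossing'' nature of the edge construction $\{u,v\} \mapsto \{u_1,v_2\},\{u_2,v_1\}$, which flips the layer with each traversed edge. Consequently, $v_1$ and $v_2$ lie in the same component of $C'$ iff there is an odd closed walk at $v$ in $C$, which (since $C$ is connected) is equivalent to $C$ containing an odd cycle, i.e., $C$ being non-bipartite. If $C$ is bipartite with bipartition $(A,B)$ where $v \in A$, then the component of $v_1$ is exactly $\{a_1 : a \in A\} \cup \{b_2 : b \in B\}$ and the component of $v_2$ is its complement in $V(C')$, giving $2$ components; if $C$ is not bipartite, $v_1$ and $v_2$ merge, and by connectedness of $C$ the whole of $C'$ collapses into a single component.

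The only mildly delicate point is the walk-parity claim; the rest is bookkeeping. I expect the cleanest write-up to define, for a walk $v = w_0, w_1, \ldots, w_\ell = u$ in $C$, the lifted path $(w_0)_1, (w_1)_2, (w_2)_1, \ldots$ in $C'$ and check that consecutive pairs are indeed $E'$-edges, which is immediate from the construction. Combining with the component-wise reduction, the total number of components of $G'$ equals $2 \cdot |\{C : C \text{ bipartite component of } G\}| + |\{C : C \text{ non-bipartite component of } G\}|$, which equals $2K$ iff every component of $G$ is bipartite, iff $G$ is bipartite.
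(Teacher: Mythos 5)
Your argument is correct, and you were right to flag the statement's typo: as written the lemma says ``$G'$ is bipartite,'' which is vacuous since $G'$ is bipartite by construction; the intended claim (stated correctly in the sentence preceding the lemma) is that $G'$ has $2K$ components iff $G$ is bipartite, and that is what you prove. Note, however, that the paper itself offers no proof to compare against --- the lemma is imported verbatim from Lemma~3.3 of the Ahn--Guha--McGregor paper \cite{AGM12}, so there is no in-paper argument for you to match or diverge from. Your proof is the standard bipartite double cover argument: reduce to a single connected component $C$ of $G$, show via the walk-parity/lifting claim that $v_1$ and $u_1$ (resp.\ $v_1$ and $u_2$) are connected in $C'$ iff $u$ and $v$ are joined by an even (resp.\ odd) walk in $C$, and conclude that $C'$ has two components when $C$ is bipartite and one otherwise, so the total count is $2K$ iff every component of $G$ is bipartite. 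The component-wise reduction, the treatment of the non-bipartite case via odd closed walks, and the edge-case of isolated vertices are all handled soundly; this is a complete, self-contained proof of the cited fact.
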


Our \MPC algorithm for maintaining the bipartiteness of $G$ proceeds as follows. In the preprocessing phase, we construct the graph $G' = (V', E')$ from $G$ as described above. Our \MPC algorithm in \Cref{sec:connect-batch} can maintain the number of connected components of the given graph from the distinct $C[i]$ values. We run this algorithm on both $G$ and $G'$. The correctness and round complexity follow from the above description and the guarantees of \Cref{thm:con-batch}. Furthermore, a single insertion or deletion in $G'$ inserts or deletes only two edges in $G'$, respectively.  
\section{Approximate maximum matching}
\label{sec:matching}

In this section, we present \MPC algorithms for the approximate maximum matching problem in both insertion-only and dynamic streams. We describe the algorithms for finding an approximate matching in \Cref{sec:mat-find} and estimating the size of the maximum matching in \Cref{sec:mat-est}.

\subsection{Finding an approximate matching}
\label{sec:mat-find}

In this section, we prove our results on finding an approximate matching in insertion-only streams and dynamic streams (insertion-deletions streams) in \Cref{thm:match-find} and \Cref{thm:match-find1}, respectively.

\begin{theorem}
\label{thm:match-find}
Let $0 < \spacexp < 1$ be an arbitrary constant and let $0 < \kappa < \spacexp$ and $\alpha > 1$ be arbitrary. Given an undirected graph $G$ with $n$ vertices, on an \MPC with $\lspace = \Oh(n^{\spacexp})$ local memory we can process a batch of $\Oh(\lspace)$ updates and maintain in $\Oh(1)$ rounds an $\Oh(\alpha)$-approximate maximum matching in $G$ with $\widetilde{\Oh}(n/\alpha)$ total memory for insertion-only updates.

\end{theorem}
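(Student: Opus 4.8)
The plan is to implement the folklore greedy streaming algorithm for approximate matching in the \MPC model. Recall the streaming algorithm: maintain a matching $M$; when an edge $e=\{u,v\}$ arrives, add $e$ to $M$ if both $u$ and $v$ are currently unmatched (i.e., $M$ remains a valid matching), and otherwise discard $e$. It is classical that a maximal matching is a $2$-approximation of the maximum matching, so in particular it is an $\Oh(\alpha)$-approximation (indeed an $\Oh(1)$-approximation); but the point is that we additionally impose a cap: we only ever keep a matching of size at most $\Oh(n/\alpha)$, which is what gives the $\widetilde\Oh(n/\alpha)$ total memory bound. The first thing I would do is make precise that storing a matching $M$ of size $s$ requires $\Oh(s\log n)$ bits and that as long as we maintain $M$ to be maximal, its size is at least $\mathrm{OPT}/2$, so if $\mathrm{OPT} \le n/\alpha$ then maximality gives the full $2$-approximation automatically, whereas if $\mathrm{OPT} > n/\alpha$ any matching of size $\Theta(n/\alpha)$ is already within a factor $\Oh(\alpha)$ of $\mathrm{OPT}$; in both regimes capping at $\Oh(n/\alpha)$ edges is consistent with an $\Oh(\alpha)$-approximation, and the invariant ``$M$ is maximal or $|M|=\Theta(n/\alpha)$'' is what we maintain.

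Next I would turn to the \MPC implementation of a batch update of size $\Oh(\lspace)=\Oh(n^{\spacexp})$. Since all updates are insertions and, by the reduction in \Cref{subsec:model}, we may assume they all arrive on a single machine, the entire batch $I$ of $\Oh(n^{\spacexp})$ edges fits in local memory. We also keep the current matching $M$; but $M$ may have size up to $\Theta(n/\alpha)$, which need not fit on one machine. The key observation is that to process the batch we do not need all of $M$: we only need to know, for each vertex incident to an edge of $I$, whether it is currently matched. There are at most $2|I| = \Oh(n^{\spacexp})$ such vertices, so we can gather the relevant ``matched/unmatched'' bits onto the single machine holding $I$ in $\Oh(1)$ rounds by sorting/routing (again using the $\Oh(1)$-round sorting primitive of \cite{GSZ11}). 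Then that machine runs the greedy rule \emph{locally} on the batch $I$: process the edges of $I$ one at a time, adding an edge to $M$ if both endpoints are still unmatched (accounting for edges added earlier within the same batch) and if $|M|$ has not yet reached the cap $\Oh(n/\alpha)$. This produces a set $M' \subseteq I$ of newly added edges with $|M'| \le |I| = \Oh(n^{\spacexp}) \le \lspace$, which we then distribute back into the stored representation of $M$ in $\Oh(1)$ rounds.

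The correctness of the invariant after the batch is the routine part: greedily scanning $I$ after knowing the matched-status of all endpoints yields exactly the same $M$ as processing the edges of $I$ one-by-one in the streaming algorithm (insertions only, so no edge ever leaves $M$), hence $M$ stays a valid matching that is maximal with respect to all edges seen so far, unless the cap was hit, in which case $|M|=\Theta(n/\alpha)$; either way the approximation guarantee from the first paragraph applies. The \textbf{total memory} is $\Oh(n/\alpha \cdot \log n) = \widetilde\Oh(n/\alpha)$ to store $M$, plus $\Oh(n^{\spacexp})$ for the batch, which is within $\widetilde\Oh(n/\alpha)$ since $\alpha \le n^{1-\spacexp}$ (and for larger $\alpha$ the matching is trivially tiny); and the \textbf{query time} is $\Oh(1)$ since $M$ is maintained explicitly and can just be reported. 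I expect the only mildly delicate point to be the bookkeeping around the cap --- making sure that once $|M|$ reaches $\Theta(n/\alpha)$ we stop adding edges but the approximation factor is still $\Oh(\alpha)$ --- together with the routing step that fetches the matched-status bits of the $\Oh(n^{\spacexp})$ batch endpoints, which is a standard but necessary use of constant-round \MPC sorting. This parameter $\kappa$ plays no role in the insertion-only case (the batch size is the full $\Oh(\lspace)$); it will matter only in the dynamic-stream theorem.
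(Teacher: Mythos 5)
Your proposal is correct and follows essentially the same route as the paper: maintain the folklore greedy matching capped at $\Theta(n/\alpha)$ edges, process each insertion batch in $\Oh(1)$ rounds by detecting which batch edges conflict with the stored matching, and argue the $\Oh(\alpha)$-approximation by splitting into the maximal-matching regime and the cap-hit regime. The only difference is a minor implementation detail—you route the matched/unmatched bits of the batch endpoints to the machine holding the batch, whereas the paper broadcasts the batch and has machines report conflicts back—both of which are standard constant-round steps.
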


\begin{proof}
Our algorithm maintains a matching $M$ which is either a maximal matching or a matching of size at most $cn/\alpha$ (in the graph seen so far), where $c$ is a suitable constant. On each machine, we have a set of edges stored and also the information about the edges that are present in $M$. Now, let us discuss how to update the information over the machines when a batch $I$ of $\Oh(n^{\spacexp})$ insertions arrive. If $|M|\geq cn/\alpha$, we do not update anything. Otherwise, we proceed as follows.

We broadcast $I$ to all machines and listen from the machines about the edges (in $I$) whose endpoints coincide with the endpoints of some edges in $M$. Note that this step can be performed in $\Oh(1)$ rounds. Let $I' \subseteq I$ be the set of edges none of whose endpoints are present in $M$. We greedily add the edges in $I'$ to the current maximal matching till the size of the matching exceeds $cn/\alpha$. When we are asked to report a matching, we output the currently stored matching $M$ in the memory. Observe that the algorithm uses $\tOh(n/\alpha)$ space in total. The correctness and the round complexity of the algorithm (for each update and query) follow from the description.
\end{proof}

We now describe our algorithm for dynamic streams. The main result is as follows.

\begin{theorem}
\label{thm:match-find1}
Let $0 < \spacexp < 1$ be an arbitrary constant and let $0 < \kappa < \spacexp$ and $\alpha > 1$ be arbitrary. Given an undirected graph $G$ with $n$ vertices, on an \MPC with $\lspace = \Oh(n^{\spacexp})$ local memory we can process a batch of $\Oh(\lspace^{1 - \kappa})$ updates and maintain in $\Oh(\log(1/\kappa))$ rounds an $\Oh(\alpha)$-approximate maximum matching in $G$ with $\widetilde{\Oh}(\max\{n^2/\alpha^3,n/\alpha\})$ total memory for arbitrary updates. Moreover, it is assumed that the total length of the update stream is a polynomial in $n$.

\end{theorem}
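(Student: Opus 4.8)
The plan is to implement the dynamic streaming algorithm of Assadi \etal~\cite{AKLY16} in the \MPC model, using the batch-connectivity machinery of \Cref{thm:con-batch} together with the maximal-matching algorithm of Nowicki and Onak~\cite{NO21} as black boxes. Recall the structure of the \cite{AKLY16} algorithm: it samples $\Oh(\log n)$ subgraphs at geometrically decreasing densities and, via $\ell_0$-sampling sketches, maintains for each density level a sparse certificate graph $H$ of total size $\tOh(\max\{n^2/\alpha^3, n/\alpha\})$ with the property that any maximal matching of $H$ is an $\Oh(\alpha)$-approximate maximum matching of $G$. The key point, exactly as in the connectivity case, is that all the sketches are \emph{linear} (\Cref{lem:ell0}), so a batch of $\Oh(\lspace^{1-\kappa})$ edge updates to $G$ can be pushed through the sketches in $\Oh(1)$ rounds, after which one reads off a batch of updates to the certificate graph $H$.

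First I would describe the data structures: the $\Oh(\log n)$ levels of $\ell_0$-sketches (one block of sketches per vertex per level, each of size $\Oh(\log^3 n)$), distributed across machines by a vertex-based partition as in \Cref{sec:connect-single}, together with an explicit copy of the current certificate graph $H$ and a maximal matching $M_H$ of $H$ maintained by the \cite{NO21} algorithm. The total memory is $\tOh(\max\{n^2/\alpha^3, n/\alpha\})$, dominated by $H$; this is within the budget claimed in \Cref{thm:intro-match}, since we deliberately relax the $\tOh(n)$ total-memory target for matching. Next I would argue the update step in three phases: (1) broadcast the batch of $\Oh(\lspace^{1-\kappa})$ updates (all on one machine after a constant-round sort, \cite{GSZ11}) and apply them to every relevant sketch --- because the sketches are linear this is $\Oh(1)$ rounds; (2) for each level, re-sample from the updated sketches to determine which edges enter or leave the certificate $H$, producing a batch of updates to $H$ of size $\tOh(\lspace^{1-\kappa}) \cdot \polylog(n)$, which still fits in one machine for $\kappa$ bounded away from $\spacexp$; (3) feed this batch of updates to the Nowicki--Onak maximal-matching algorithm, which processes a batch of $\Oh(\lspace^{1-\kappa})$ updates in $\Oh(\log(1/\kappa))$ rounds. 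The query step just outputs $M_H$, which is an $\Oh(\alpha)$-approximate maximum matching by the correctness of \cite{AKLY16}; since $M_H$ is maintained explicitly, the query time is $\Oh(1)$. Correctness holds w.h.p.\ as long as the total number of updates is $\poly(n)$, because the sketch guarantees of \Cref{lem:ell0} hold against an oblivious adversary over a polynomially long stream once $\delta$ is set to $1/\poly(n)$.

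The main obstacle I anticipate is verifying that the reconstruction step --- translating a batch of sketch updates into a batch of updates to $H$ --- can genuinely be done in $\Oh(1)$ rounds \emph{and} that the resulting batch still fits within a single machine so that the \cite{NO21} black box applies. The \cite{AKLY16} construction is not a single $\ell_0$-sample but a structured collection (sampling at each of $\Oh(\log n)$ densities, with $\tOh(n/\alpha)$ or $\tOh(n^2/\alpha^3)$ stored edges at the appropriate level), so one must check that the per-update blow-up in the number of certificate edges is only polylogarithmic, which is where the $\kappa < \spacexp$ slack (rather than $\kappa$ arbitrarily close to $\spacexp$) is spent --- this is precisely why the batch size in the matching theorem is $\tOh(\lspace^{1-\kappa})$ rather than $\tOh(\lspace)$. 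A secondary subtlety is the $\Oh(\log(1/\kappa))$ round count: it comes entirely from the Nowicki--Onak maximal-matching subroutine on a batch of size $\lspace^{1-\kappa}$, and I would simply cite their round bound rather than re-derive it. The remaining pieces --- linearity of sketches, the oblivious-adversary union bound, and the reduction to maximal matching on $H$ --- are either already established earlier in the paper or are immediate from \cite{AKLY16}, so the bulk of the work is the bookkeeping in phases (1)--(3).
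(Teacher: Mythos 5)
Your proposal is correct and follows essentially the same route as the paper: push the batch through the linear sketches of the Assadi et al.\ construction in $\Oh(1)$ rounds, translate the changed sampler outcomes into a batch of edge insertions/deletions on the sparse certificate graph $H$, and hand that batch to the Nowicki--Onak maximal-matching algorithm, which supplies both the $\Oh(\log(1/\kappa))$ round bound and, via the $\tOh(|E(H)|)$ guarantee, the $\tOh(\max\{n^2/\alpha^3,n/\alpha\})$ total memory. The only points the paper makes explicit that you leave implicit are that one must store the current outcome of every $\ell_0$-sampler so that the old sampled edges $X$ can be deleted from $H$ before the sketches are updated and the new outcomes $Y$ inserted, and that the dependence of $H$ on the sketch randomness is harmless because it is independent of the randomness used by the maximal-matching subroutine; also note that the batch-size restriction $\Oh(\lspace^{1-\kappa})$ is simply inherited from the Nowicki--Onak interface (each update touches at most one active pair per guess level, so there is no blow-up to absorb).
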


Our approach relies on extending the known algorithms for finding an approximate matching in dynamic streams and implementing those in the MPC model suitably. However, it is not the case that all (sketching-based) algorithms in the streaming literature (to find an approximate matching) can be extended to \MPC. In particular, there are three different papers~\cite{AKLY16, CCEHMMV16,Konrad15} that give essentially the same result. However, our techniques rely on the streaming algorithm of Assadi, Li, Khana, and Yaroslavtsev \cite{AKLY16}. We do not see how the algorithms from the other two papers~\cite{CCEHMMV16,Konrad15} can be extended to the \MPC setting such that efficient update complexity can be achieved.

\subsubsection*{Overview of the streaming algorithm of Assadi-Li-Khana-Yaroslavtsev \cite{AKLY16}}

The algorithm in \cite{AKLY16} assumes that we know $\opt'$ which is a $2$-factor approximation on the size of the maximum matching. We can run $\Theta(\log n)$ instances of the algorithm for $\Theta(\log n)$ different guesses for $\opt '$, i.e., $\opt'=n/2$, $\opt'=n/4$, and so on. Then finally we can report the maximum size of the matching found in any of the $\Theta(\log n)$ instances. Without loss of generality, assume that the input graph is bipartite and let $L \sqcup R$ be the bipartition of the vertex set. Otherwise, we can randomly partition the vertex set $V$ into two parts, $L$ and $R$, by using a hash function chosen randomly from a pairwise independent hash family. If we find the maximum matching restricted to the edge set between $L$ and $R$, then one can argue that w.h.p. the size of this matching is within a constant factor of the size of the maximum matching in the original graph $G$.

\paragraph{Pre-processing.}
Let $\beta=\lceil \opt'/\alpha \rceil$ and $\gamma=\lceil \opt'/\alpha^2 \rceil$. The algorithm randomly partitions the vertex sets $L$ and $R$ into $\alpha$ groups by using hash functions $h_L: L \rightarrow [\beta]$ and $h_R: R \rightarrow [\beta]$ chosen independently from a pairwise independent hash family. Let $L_i=\{v \in L | h_L(v)=i\}$ and $R_i=\{v \in R | h_R(v)=i\}$, where $i \in [\beta]$. For each $L_i$, the algorithm assigns $L_i$ with $\gamma$ number of $R_j$'s independently and uniformly  at random and with replacement. If $R_j$ is assigned to $L_i$, then $(L_i,R_j)$ is said to be an \emph{active pair}. Note that the number of active pairs is ${\tOh}( \max\{n^2/\alpha^3, n/ \alpha\})$. For each active pair $(L_i,R_j)$, an $\ell_0$-sampler is initiated, i.e., a sketch for the edge set $E(L_i,R_j)$ of size $\Oh(\log ^3 n)$ as stated in \Cref{lem:sketch-A-B}. This implies the space complexity of the algorithm of \cite{AKLY16} is $\tOh \left(\max\{n^2/\alpha^3, n /\alpha\}\right)$.

\paragraph{Streaming Phase:} When an edge $e=\{u,v\}$ is inserted or deleted, we first determine the groups $L_i$ and $R_j$ such that $u \in L_i$ and $v \in R_j$. If $(L_i,R_j)$ forms an active pair, then update the sketch for $E(L_i,R_j)$ accordingly.

\paragraph{Post-processing.} Let $H$ be the subgraph formed by the set of edges we get from the sketches of the active pairs. Then \cite{AKLY16} shows the following:

\begin{lemma}
\label{lem:mat-H}
The maximum matching of $H$ is an $\Oh(\alpha)$ approximation to the maximum matching of $G$ w.h.p.
\end{lemma}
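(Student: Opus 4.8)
The plan is to reconstruct the analysis of Assadi \etal~\cite{AKLY16}, reducing the statement to a combinatorial claim about how a fixed optimal matching survives the two independent layers of randomness used in the construction: the vertex hashing $h_L,h_R$, and the random choice of active pairs. Since every edge of $H$ is an edge of $G$, we trivially have $\mu(H)\le\mu(G)$ (writing $\mu(\cdot)$ for the maximum matching size), so the entire content is the lower bound $\mu(H)=\widetilde{\Omega}(\mu(G)/\alpha)$. First I would dispose of the non-bipartite case as in the pre-processing step: a uniformly random balanced bipartition $V=L\sqcup R$ drawn from a pairwise-independent hash keeps, with high probability, an $\Omega(1)$ fraction of any fixed maximum matching among the $L$--$R$ edges, so it suffices to work with a bipartite $G$ and a fixed maximum matching $M^*$ with $|M^*|=\mu=\Theta(\mu(G))$. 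For the ``correct'' guess $\opt'=\Theta(\mu)$ (one of the $\Theta(\log n)$ guesses run in parallel) we have $\beta=\Theta(\mu/\alpha)$ and $\gamma=\Theta(\mu/\alpha^2)$, so the typical number of vertices of $V(M^*)$ landing in a group $L_i$ (its \emph{load} $\ell_i$) is $\mu/\beta=\Theta(\alpha)$, and likewise on the $R$-side.

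Next, fix the hashing and introduce the \emph{active-candidate super-graph} $B$ whose vertices are the groups $L_1,\dots,L_\beta$ and $R_1,\dots,R_\beta$, with $(L_i,R_j)\in E(B)$ iff $(L_i,R_j)$ is an active pair \emph{and} some $M^*$-edge runs between $L_i$ and $R_j$. Two facts drive the argument. (a) Any matching in $B$ (a set of super-pairs with distinct $L$-indices and distinct $R$-indices) yields a matching in $H$ of the same size: each such super-pair is active with $E(L_i,R_j)\neq\emptyset$, so its $\ell_0$-sampler places an edge $f_{ij}\in E(L_i,R_j)$ into $H$, and super-pairs in a super-matching use pairwise disjoint vertex sets, hence the $f_{ij}$ are vertex-disjoint; thus $\mu(H)\ge\mu(B)$. (b) $B$ has many non-isolated vertices and small maximum degree. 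A second-moment (Chebyshev) argument using pairwise independence of $h_L,h_R$ shows that w.h.p.\ (after boosting by $\Theta(\log n)$ independent hash choices) a constant fraction of the $L$-groups, and of the $R$-groups, have load $\Theta(\alpha)$. For such an $L_i$, its $\Theta(\alpha)$ ``good'' $R$-groups are each hit among $L_i$'s $\gamma$ i.i.d.\ random partners, so $\Pr[\deg_B(L_i)\ge 1]=1-(1-\Theta(\alpha)/\beta)^\gamma=1-e^{-\Theta(\gamma\alpha/\beta)}=\Omega(1)$, because $\beta,\gamma$ are tuned exactly so that $\gamma\alpha/\beta=\Theta(1)$; hence in expectation $\Omega(\beta)$ of the $L$-groups are non-isolated, and symmetrically for the $R$-groups. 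For the degree bound, $\deg_B(R_j)$ is a sum of at most $\ell'_j=O(\alpha)$ \emph{independent} Bernoulli$(\Theta(1/\alpha))$ indicators (partner choices of distinct $L$-groups are independent), so $\mathbb{E}[\deg_B(R_j)]=O(1)$ and a Chernoff bound with a union bound over the $\le n$ groups gives $\Delta(B)=O(\log n/\log\log n)$ w.h.p.

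Combining (a) and (b): a maximal matching $N$ of $B$ covers, for every non-isolated $L$-node, either that node or one of its at most $\Delta(B)$ neighbours, so the number of non-isolated $L$-nodes is at most $|N|\,(1+\Delta(B))$, whence $\mu(B)\ge|N|\ge\Omega(\beta)/O(\log n)=\widetilde{\Omega}(\mu/\alpha)$, and therefore $\mu(H)\ge\mu(B)=\widetilde{\Omega}(\mu/\alpha)=\widetilde{\Omega}(\mu(G)/\alpha)$, which together with $\mu(H)\le\mu(G)$ gives the claimed $\Oh(\alpha)$-approximation (the polylogarithmic slack is absorbed as in the $\widetilde{\Oh}$ space bounds). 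I would handle separately the degenerate regime $\alpha^2\ge\opt'$, where $\gamma=1$ and the number of active pairs is $\widetilde{\Oh}(n/\alpha)$: there $\beta=\Oh(n/\alpha)$ is below the per-group load, so each load-$\Theta(\alpha)$ group $L_i$ reaches a constant fraction of all $R$-groups and its single random partner is ``good'' with probability $\Omega(1)$, while $\deg_B(R_j)$ is an ordinary balls-into-bins load, $O(\log n)$ w.h.p.; the same counting yields $\mu(H)=\widetilde{\Omega}(\mu(G)/\alpha)$. Taking the best matching over the $\Theta(\log n)$ guesses for $\opt'$ and over the independent hash repetitions boosts the overall failure probability to $1/\poly(n)$.

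The main obstacle is step (b): controlling $B$ given that $h_L,h_R$ are only pairwise independent. Pairwise independence is too weak for a per-group high-probability load bound, so one must (i) settle for a constant fraction of ``balanced'' groups via Chebyshev and boost by $\Theta(\log n)$ independent hash functions, and (ii) carefully exploit that although the \emph{loads} are only weakly concentrated, the \emph{active-pair indicators} entering $\deg_B(R_j)$ are fully independent across the $L$-groups, so Chernoff applies there --- it is precisely this separation of the two sources of randomness that makes the degree bound, and hence the rainbow-matching argument, go through. A secondary subtlety is to check that the oblivious-adversary assumption (already used elsewhere in the paper) legitimately allows conditioning on a fixed $M^*$ and a fixed current edge set when invoking the $\ell_0$-sampler guarantee of \Cref{lem:sketch-A-B}.
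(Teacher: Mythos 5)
There is nothing in the paper to compare against here: \Cref{lem:mat-H} is stated as a black-box import from Assadi \etal~\cite{AKLY16} (``Then \cite{AKLY16} shows the following''), and the paper supplies no proof of it --- its contribution at this point is only the \MPC maintenance of the sketches and the invocation of \cite{NO21} on $H$. So your proposal is really a reconstruction of the analysis of \cite{AKLY16} itself, and as such it follows the intended route: the reduction to the bipartite case, fixing a maximum matching $M^*$, passing to the super-graph $B$ of active pairs carrying an $M^*$-edge, the lifting of any matching of $B$ to vertex-disjoint sampled edges of $H$ (via \cref{lem:sketch-A-B} and a union bound over active pairs), and the calculation $\gamma\alpha/\beta=\Theta(1)$ showing that a constant fraction of the balanced $L$-groups are non-isolated in $B$ are all faithful to the original argument, as is your separate treatment of the regime $\alpha^2\ge\opt'$.

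Two caveats if you intend this as a self-contained proof rather than a pointer to \cite{AKLY16}. First, your final counting step is lossy: bounding $\Delta(B)=O(\log n/\log\log n)$ and dividing by it yields only $\mu(H)=\Omega\bigl(\mu(G)/(\alpha\,\polylog(n))\bigr)$, i.e., an $\tOh(\alpha)$- rather than $\Oh(\alpha)$-approximation; the remark that the slack is ``absorbed into the $\tOh$ space bounds'' must be made explicit (e.g., rerun the construction with $\alpha$ replaced by $\alpha/\polylog(n)$, inflating the number of active pairs only by polylog factors), whereas the argument of \cite{AKLY16} avoids the loss by working with $\mathbb{E}[\deg_B(R_j)]=O(1)$ and a collision-counting (Markov) step: a constant fraction of the non-isolated $L$-groups have a good partner claimed by no other group, and these pairs already form a matching of size $\Omega(\beta)$. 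Second, your Chernoff-plus-union-bound claim $\Delta(B)=O(\log n/\log\log n)$ quietly assumes a load bound $\ell'_j=O(\alpha)$ for \emph{every} $R$-group, which pairwise independence of $h_R$ does not give; you need to restrict $B$ to the balanced groups on both sides (as you already do for the lower bound on non-isolated $L$-groups) and check that this restriction costs only a constant factor, or argue in expectation as in \cite{AKLY16}. Neither point is fatal, but both need to be closed for the ``w.h.p.\ $\Oh(\alpha)$'' statement as written.
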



Apart from the streaming algorithm by \cite{AKLY16}, our algorithm of \Cref{thm:match-find1} relies on the algorithm in the \MPC model \cite{NO21} that maintains a $2$-approximate maximum matching in a black box manner. The result of \cite{NO21} is formally stated as follows.
\begin{proposition}
\label{prop:onak}
Given an undirected graph $G$ with $n$ vertices we can find an $2$-approximate maximum matching in $G$ in \MPC with $S = \Oh(n^\spacexp)$ local memory in $\Oh(\log 1/\kappa)$ rounds to process a batch of update of size $\Oh(\lspace^{1-\kappa})$ with ${\widetilde{\Oh}(m)}$ global memory. Moreover, the algorithm spends $\Oh(\log(1/\kappa))$ rounds to report a query. For constant $\kappa$ this gives an $\Oh(1)$ round algorithm and for $\kappa = \Oh(1/\log n)$ this gives an $\Oh(\log \log n)$ round algorithm.
\end{proposition}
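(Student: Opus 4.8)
The plan is to obtain \Cref{prop:onak} by combining two facts: (i) a maximal matching is always a $2$-approximate maximum matching, and (ii) Nowicki and Onak~\cite{NO21} give a fully-scalable dynamic \MPC algorithm that maintains a maximal matching under batches of edge insertions and deletions within exactly the stated resource bounds. So the proof is primarily a careful restatement of the guarantees of~\cite{NO21} in the parameterization used here, together with the elementary approximation bound; I would not re-derive their algorithm.

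For fact~(i): let $M$ be any maximal matching of $G$ and let $M^\ast$ be a maximum matching. By maximality, every edge of $M^\ast$ is incident to a vertex matched by $M$; since each edge of $M$ contributes only two matched vertices, and each such vertex is incident to at most one edge of $M^\ast$, we get $|M^\ast|\le 2|M|$, so $M$ is a $2$-approximation. The algorithm of~\cite{NO21} maintains the edge set of such an $M$ (of size $\Oh(n)$) explicitly across the machines, so answering a query amounts to gathering those edges onto the first $\tOh(n^{1-\spacexp})$ machines, which costs $\Oh(1)$ rounds by sorting~\cite{GSZ11}; folding in the rounds that~\cite{NO21} itself uses to assemble its output yields the claimed $\Oh(\log(1/\kappa))$ query time.

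For fact~(ii): on an \MPC with local memory $\lspace=\Oh(n^{\spacexp})$, the algorithm of~\cite{NO21} stores the whole graph, the current maximal matching, and their hierarchical auxiliary structure, for a total of $\tOh(n+m)=\tOh(m)$ memory. To process a batch of at most $\Oh(\lspace^{1-\kappa})$ updates it performs a bounded-depth sequence of local repair steps --- deletions can free matched vertices, which are then re-matched greedily, possibly freeing further vertices --- and the depth of this cascade is controlled by how far the batch size $\lspace^{1-\kappa}$ is from the machine capacity $\lspace$, which is $\Oh(\log_{\lspace}(\lspace^{\kappa}\cdot\lspace^{1-\kappa}/\lspace^{1-\kappa}))$-type bookkeeping yielding $\Oh(\log(1/\kappa))$ rounds; in particular $\Oh(1)$ rounds for any constant $\kappa$ and $\Oh(\log\log n)$ rounds for $\kappa=\Theta(1/\log n)$. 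I would quote the corresponding theorem of~\cite{NO21} for this step rather than reproduce the cascade analysis.

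The only real point requiring care --- and thus the main obstacle --- is the bookkeeping of parameters: \cite{NO21} phrases its maximal-matching result for batches up to $\Theta(\lspace^{1-\epsilon})$ in $\Oh(1)$ rounds with $\epsilon$ a constant, and one must confirm that their analysis in fact delivers the smooth trade-off ``batch $\lspace^{1-\kappa}$ in $\Oh(\log(1/\kappa))$ rounds'' uniformly in $\kappa$, including when $\kappa$ shrinks with $n$ (so that for $\kappa=\Theta(1/\log n)$ the batch is essentially $\Theta(\lspace)$ but the round count grows to $\Oh(\log\log n)$), and that the $\tOh(\cdot)$ in the memory bound hides only $\polylog(n)$ factors with no additional dependence on $\kappa$. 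Everything else is immediate from~\cite{NO21} and the approximation bound above.
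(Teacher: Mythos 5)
Your proposal matches the paper's treatment: the paper states this proposition as a black-box import of the maximal-matching result of \cite{NO21} (with the standard fact that a maximal matching is a $2$-approximation left implicit) and gives no further proof, which is exactly what you do by quoting their theorem plus the elementary approximation bound. Your parameter-bookkeeping caveat is reasonable diligence but does not change the argument, since the batch-size/round trade-off in $\kappa$ is taken directly from the statement of \cite{NO21}.
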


Now we are ready to prove \Cref{thm:match-find1}. The high-level idea is to maintain the information same as that of \cite{AKLY16} and run the \MPC algorithm by \cite{NO21} on the graph $H$ that can be formed by considering the edges we get from the sketches corresponding to the active pairs. 
\begin{proof}[Proof of \Cref{thm:match-find1}]
As \cite{AKLY16}, we perform all prepossessing steps before the start of the stream and the required information like $h_L,h_R$ and the set of active pairs are stored over the machines in a distributed fashion. Note that $h_L$ and $h_R$ can be stored in all the machines since $\Oh(\log n)$ bits are enough to store them. The sketches/$\ell_0$-samplers (each of size $\Oh(\log ^3 n)$) corresponding to the active pairs are also stored over the machines in a distributed fashion. Our algorithm also stores the outcome of each $\ell_0$-sampler at any instance of time along with the sketches. Let $U$ be the set of updates under consideration. For an edge $e=\{u,v\}$ in $U$ (either we want to insert or delete $e$), we say $e$ is active update if $\left(L_{h_L(u)},R_{h_R(v)}\right)$ is an active pair. This can be verified in $\Oh(1)$ rounds first by broadcasting the set $U$ and then receiving the information about which edges in $U$ are active updates.
Let $U'\subseteq U$ be the active updates.

Let $X$ be the outcome of the $\ell_0$-samplers corresponding to the active pairs $\left(L_{h_L(u)},R_{h_R(v)}\right)$ where $\{u,v\} \in U'$. Note that $\size{X}\leq \size{U}=\Oh(\lspace^{1-\kappa})$. Now, we gather $X$ in $\Oh(1)$ rounds and delete $X$ from the graph $H$ in $\Oh(\log 1/\kappa)$ rounds by using the algorithm by \cite{NO21} (see \Cref{prop:onak}). For each edge $e=\{u,v\}$ in $U'$ we update the sketches corresponding to the active pair $\left(L_{h_L(u)}, R_{h_R(v)}\right)$ accordingly. Observe that this can be done by just broadcasting $U'$. Let $Y$ be the outcome of the $\ell_0$-samplers corresponding to the active pairs $(h_L(u),h_R(v))$ where $\{u,v\} \in U'$. Note that $\size{Y}\leq \size{U}=\Oh(\lspace^{1-\kappa})$. Now, we gather $Y$ in $\Oh(1)$ rounds and insert $Y$ to the graph $H$ in $\Oh(\log 1/\kappa)$ rounds again using the algorithm by \cite{NO21} (see \Cref{prop:onak}). Here, one crucial point is that the input to the \MPC algorithm of \Cref{prop:onak} is dependent on the randomness of the sketches/$\ell_0$-samplers. This is not a problem as the random bits of the \MPC algorithm and that of the sketches are independent. When we get a query to report an $\Oh(\alpha)$-approximate matching, we use the output of the \MPC algorithm of \Cref{prop:onak} to report a $2$-approximate matching in $H$. Note that, due to \Cref{lem:mat-H}, this indeed produces an $\Oh(\alpha)$-approximate matching.

From the above description, the number of rounds to update a batch of size $\Oh(\lspace^{1-\kappa})$ is $\Oh(\log 1/\kappa)$. As the dynamic streaming algorithm of \cite{AKLY16}  produces a sparse graph $H$ with number of edges $\tOh(\max\{n^2/\alpha^3, n/\alpha\})$ and we are running \MPC algorithm on graph $H$ (that uses total memory of $\tOh(|E(H)|)$), the total memory used by our algorithm is $\tOh(\max\{n^2/\alpha^3, n/\alpha\})$.
\end{proof}

\subsection{Estimating the size of a matching}
\label{sec:mat-est}

In this section, we prove our results on estimating the size of the maximum matching in insertion-only streams and dynamic streams in \Cref{thm:match-size} and \Cref{thm:match-size1}, respectively.

\begin{theorem}
\label{thm:match-size}
Let $0 < \spacexp < 1$ be an arbitrary constant and let $0 < \kappa < \spacexp$ and $ \alpha \leq \sqrt{n}$ be arbitrary. Given an undirected graph $G$ with $n$ vertices, on an \MPC with $S = \Oh(n^{\spacexp})$ local memory we can process a batch of $\Oh(S)$ updates and maintain in $\Oh(1)$ rounds an $\Oh(\alpha)$-approximate to the size of maximum matching in $G$ with $\widetilde{\Oh}(n/\alpha^2)$ total memory for insertion-only updates. Moreover, it is assumed that the total length of the update stream is polynomial in $n$.
\end{theorem}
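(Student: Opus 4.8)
The plan is to have the \MPC system run the insertion-only streaming estimator of Assadi, Khanna, and Li~\cite{AKL21} and to show that a whole batch of $\Oh(\lspace)$ edge insertions can be fed to it in $\Oh(1)$ rounds. Recall that their estimator runs $\Oh(\log n)$ independent instances of the meta-routine \textsc{Tester}$(G,k)$ with geometrically increasing parameters $k$, each of which --- in insertion-only streams --- maintains greedily a matching of size at most $k$ (on $G$, or on a sub-sampled copy of it, with the sampling and cap parameters set as in~\cite{AKL21}), uses $\widetilde{\Oh}(n/\alpha^2)$ space in total over all instances when $\alpha\le\sqrt n$, and returns as its estimate the largest $k$ whose tester has reached its cap, up to the fixed loss guaranteed in~\cite{AKL21}. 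So the task reduces to implementing, in \MPC, one step of each \textsc{Tester} under a batch of insertions, together with a query.

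First I would fix the data layout: for every tester instance, distribute its current matching $M_k$ and the (possibly sampled) edge set it operates on across the $\widetilde{\Oh}(n^{1-\spacexp}/\alpha^2)$ machines in the same vertex-based fashion used in \Cref{sec:connect-single}, and additionally store, for each instance, the matched/free status of every vertex. Processing a batch $I$ of $t=\Oh(\lspace)$ insertions then proceeds instance by instance, all in parallel, exactly as in the proof of \Cref{thm:match-find}: if $|M_k|\ge k$ the instance is already capped and does nothing; otherwise I would gather $I$ together with the at most $2t$ edges of $M_k$ incident to endpoints of $I$ onto a single machine --- possible precisely because $t=\Oh(\lspace)$ --- replay the streaming greedy rule locally on the edges of $I$ in their stream order (add a free edge whenever both endpoints are still unmatched, stop as soon as $|M_k|$ reaches $k$), and finally broadcast the $\Oh(t)$ newly added edges and the $\Oh(t)$ vertices whose status flipped, so that every machine can patch its local part of the structure. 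Since the batch is oblivious, the random bits of any samplers used inside \textsc{Tester} are independent of $I$, so replaying $I$ edge by edge on one machine produces exactly the state the streaming algorithm would reach on the same sequence; hence correctness, the $\Oh(\alpha)$-approximation, and the $\widetilde{\Oh}(n/\alpha^2)$ total space are inherited verbatim from~\cite{AKL21}. Each sub-step is a constant number of gathers/broadcasts of $\widetilde{\Oh}(\lspace)$-size data, so the whole batch costs $\Oh(1)$ rounds; a query is answered by collecting the $\Oh(\log n)$ tester outcomes onto one machine and returning the prescribed estimate, again in $\Oh(1)$ rounds. The local memory stays $\Oh(n^{\spacexp})$ because each machine holds only a $\widetilde{\Oh}(\lspace)$-sized slice of a $\widetilde{\Oh}(n/\alpha^2)$-sized structure, and the polynomial-length hypothesis on the stream comes from union-bounding the $\tOh(\cdot)$ failure probabilities of~\cite{AKL21} over all updates.

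The only point that needs an argument rather than a citation is the faithfulness of the \emph{batched} greedy update, and that is exactly where the hypothesis $t=\Oh(\lspace)$ is used: once $I$ fits on a single machine, the greedy choices among the edges of $I$ can be resolved locally in stream order, and the only global side effect --- a constant amount of new information per touched vertex --- fits into one broadcast. I expect this to be the main (and, honestly, rather mild) obstacle; everything else is a transcription of~\cite{AKL21} using the $\Oh(1)$-round sorting and broadcast primitives of~\cite{GSZ11} already relied upon throughout the paper. A secondary thing to check is that the $\Oh(\log n)$ instances coexist on the machines within the $\Oh(n^{\spacexp})$ local-memory budget, which holds for the same slicing reason.
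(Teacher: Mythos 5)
Your proposal is correct and follows essentially the same route as the paper: implement the insertion-only \textsc{Tester}$(G,k)$ instances of \cite{AKL21} in \MPC by maintaining each greedy matching (capped at $k$) in distributed form, and process a batch in $\Oh(1)$ rounds by exploiting that the batch fits on one machine, resolving conflicts with the stored matching via a constant number of broadcasts/gathers before adding the surviving edges greedily up to the cap. The only cosmetic difference is that the paper broadcasts the batch and lets each machine report conflicting matched edges, whereas you gather the batch plus the incident matching edges on one machine and replay the stream order locally; both are $\Oh(1)$-round realizations of the same faithful greedy simulation.
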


\begin{theorem}
\label{thm:match-size1}
Let $0 < \spacexp < 1$ be an arbitrary constant and let $0 < \kappa < \spacexp$ and $\alpha \leq \sqrt{n}$ be arbitrary. Given an undirected graph $G$ with $n$ vertices, on an \MPC with $S = \Oh(n^{\spacexp})$ local memory we can process a batch of $\Oh(S^{1 - \kappa})$ updates and maintain in $\Oh(\log(1/\kappa))$ rounds an $\Oh(\alpha)$-approximation to the size of maximum matching in $G$ with $\widetilde{\Oh}(n^2/\alpha^4)$ total memory for arbitrary updates. Moreover, it is assumed that the total length of the update stream is polynomial in $n$.
\end{theorem}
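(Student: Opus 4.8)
The plan is to implement, inside the \MPC model, the turnstile matching-size estimator of Assadi, Khanna and Li \cite{AKL21}, following the same template as the proof of \Cref{thm:match-find1}. Recall that their dynamic-stream algorithm runs $\Oh(\log n)$ independent copies of the routine $\mbox{\textsc{Tester}}(G,k)$ in parallel, one for each threshold $k$ from a geometric scale $\{1,2,4,\dots,\Theta(n/\alpha^2)\}$; each dynamic copy maintains, using only linear $\ell_0$-sketches (\Cref{lem:sketch-A-B}) over edge sets between randomly chosen groups of vertices, a sparse subgraph $H_k$ with $\size{E(H_k)}=\tOh(k^2)$, and the (constant-factor) maximum matching size of $H_k$ is what the copy uses to decide whether $\opt\ge k$ or $\opt\le k/2$. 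Since the largest threshold is $\Theta(n/\alpha^2)$, summing the per-copy space over the $\Oh(\log n)$ copies gives total sketch space $\tOh(n^2/\alpha^4)$; the hypothesis $\alpha\le\sqrt n$ is exactly what makes this threshold at least $1$. The $\Oh(\log n)$ verdicts combine, by the rule of \cite{AKL21}, into an $\Oh(\alpha)$-approximation of $\opt$: take the largest threshold $k^\ast$ whose copy reports that $\opt$ exceeds it and output $k^\ast$, which is an $\Oh(1)$-estimate; and if even the copy for $k=\Theta(n/\alpha^2)$ fires, output $\Theta(n/\alpha)$, which is an $\Oh(\alpha)$-estimate because in that case $\opt=\Omega(n/\alpha^2)$ while $\opt\le n/2$. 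The preprocessing phase---sampling the hash functions and the vertex groups, initializing the $\ell_0$-samplers, and recording their current outputs---I would distribute over the machines exactly as in \Cref{thm:match-find1}: each hash function is described by $\Oh(\log n)$ bits and is broadcast to every machine, and the $\tOh(k^2)$ samplers of the $k$-th copy are spread over the machines.

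The one genuinely new piece is to maintain, for every one of the $\Oh(\log n)$ thresholds $k$, a constant-factor estimate of the maximum matching size of $H_k$ as $H_k$ evolves under a batch $U$ of $\Oh(S^{1-\kappa})$ updates to $G$. For this I would reuse the mechanism from \Cref{thm:match-find1}: broadcast $U$, identify in $\Oh(1)$ rounds the at most $\size{U}$ sampler cells whose output changes, and realize the induced change to each $H_k$ as a batch $X_k$ of edge deletions (the stale sampler outputs) followed---after the linear sketches have been updated by a second broadcast of $U$---by a batch $Y_k$ of edge insertions (the refreshed outputs), where $\size{X_k},\size{Y_k}\le\size{U}=\Oh(S^{1-\kappa})$. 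On each $H_k$ I would run the Nowicki--Onak maximal-matching algorithm of \Cref{prop:onak} as a black box: it absorbs these $\Oh(S^{1-\kappa})$-sized batches in $\Oh(\log(1/\kappa))$ rounds with $\tOh(\size{E(H_k)})=\tOh(k^2)$ total memory, and at every instant holds a maximal matching of $H_k$ whose size is a $2$-approximation to the maximum matching size of $H_k$. Exactly as in \Cref{thm:match-find1}, feeding \Cref{prop:onak} an input that depends on the sketch randomness is harmless, since the random bits of \Cref{prop:onak} and of the sketches are independent. A query is then answered in $\Oh(1)$ rounds by gathering the $\Oh(\log n)$ maintained matching sizes onto one machine, converting each into the corresponding $\mbox{\textsc{Tester}}$ verdict, and combining them as above. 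The total memory is $\tOh(\sum_k k^2)=\tOh(n^2/\alpha^4)$, and the round complexity is $\Oh(\log(1/\kappa))$, dominated by the calls to \Cref{prop:onak}.

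The main obstacle is the one already faced in \Cref{thm:match-find1}: guaranteeing that a batch of $\Oh(S^{1-\kappa})$ updates to $G$ disturbs only $\Oh(S^{1-\kappa})$ sampler cells in each copy, hence induces a batch of the same size on each $H_k$, so that \Cref{prop:onak} applies verbatim and within $\Oh(\log(1/\kappa))$ rounds. A secondary point, inherited wholesale from \cite{AKL21}, is the accounting showing that only the thresholds $k=\tOh(n/\alpha^2)$ need be instantiated, which is what pins the total memory at $\tOh(n^2/\alpha^4)$ rather than $\tOh(n^2)$. Neither requires any new structural fact about matchings: beyond the bookkeeping, the argument is a routine adaptation of \Cref{thm:match-find1} together with the guarantees of \cite{AKL21} and \Cref{prop:onak}.
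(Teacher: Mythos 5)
Your proposal is correct and follows essentially the same route as the paper: implement each $\mbox{\textsc{Tester}}(G,k)$ instance of \cite{AKL21} in \MPC by storing the hash function on every machine, distributing the $\tOh(k^2)$ $\ell_0$-samplers together with their current outputs, processing a batch $U$ by gathering the stale outputs $X$, deleting them from $H$ via \Cref{prop:onak}, updating the sketches by broadcasting $U$, and inserting the refreshed outputs $Y$ again via \Cref{prop:onak}, with the same independence-of-randomness remark and the same $\tOh(k^2)\cdot\Oh(\log n)=\tOh(n^2/\alpha^4)$ memory accounting. The only (immaterial) deviation is your paraphrase of the \cite{AKL21} meta-algorithm as geometric thresholds on $G$ itself rather than instances $\mbox{\textsc{Tester}}(G^p,k_p)$ on vertex-sampled subgraphs, but since that combination rule is used as a black box in both arguments, this does not affect the proof.
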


Both the algorithms in this section rely on the streaming algorithms by Assadi, Khanna, and Li~\cite{AKL21} that can report an $\Oh(\alpha)$ approximation to the size of the maximum matching; that uses $\widetilde{O}(n/\alpha)$ space and $\widetilde{O}(n/\alpha^2)$ space in insertion-only streams and in dynamic streams, respectively.

\subsubsection*{Overview of the streaming algorithms by Assadi-Khanna-Li~\cite{AKL21}}

The streaming algorithms of \cite{AKL21} for insertion-only and dynamic streams are based on a meta-algorithm called $\mbox{\textsc{Tester}}(G,k)$, where $k \in \mathbb{N}$. $\mbox{\textsc{Tester}}(G, k)$ takes a graph $G$ as input over a stream along with the parameter $k$ and distinguishes between $\mbox{OPT} \geq k$ and $\mbox{OPT} \leq  k/2$, where $\opt$ denotes the size of the maximum matching in $G$. The meta algorithm runs $\Oh(\log n)$ instances of $\mbox{\textsc{Tester}}$ in parallel and the final output is obtained from the outputs of the $\Oh(\log n)$ instances of \textsc{Tester}. Each instance of $\mbox{\textsc{Tester}}$ is of the form $\mbox{\textsc{Tester}}(G^p,k_p)$, where $G^p$ denotes the induced subgraph of $G$ by the vertex set chosen with probability $p$ using a hash function chosen uniformly from a four-wise independent hash family and parameter $k_p$ depends on the sampling probability $p$. The maximum value of $k_p$ in any instance of $\mbox{\textsc{Tester}}$ is $\tOh(n/\alpha^2)$. Note that the space complexity of $\mbox{\textsc{Tester}}({G, k})$ (to be discussed below) is $\tOh(k)$ and $\tOh(k^2)$ in insertion-only and dynamic streams, respectively. This implies that the space complexity of the streaming algorithm is $\tOh(n/\alpha^2)$ and $\tOh(n^2/\alpha^4)$ in insertion-only and dynamic streams, respectively.

\paragraph{$\mbox{\textsc{Tester}} (G, k)$ in insertion-only streams.}
Here the algorithm either maintains a maximal matching or a matching of size $k$ in $G$. This can be achieved by adding edges to the current matching if the size of the current matching is less than $k$. From the size of the matching stored, the algorithm reports the output.


\paragraph{$\mbox{\textsc{Tester}} (G, k)$ in dynamic streams.}
Here the algorithm chooses a hash function $h: V \rightarrow [\Theta(k)]$ from a pairwise independent hash family uniformly at random. Note that each $h$ partitions the vertex sets into $\Theta(k)$ groups $V_i=\{v \in V | h(v)=i\}$. For each pair $V_i$ and $V_j$, the algorithm maintains an $\ell_0$-sampler for the edges between each $V_i$ and $V_j$, i.e., a sketch for the edge set $E \left(V_i, V_j\right)$ of size $\Oh(\log ^3 n)$ as stated in \Cref{lem:sketch-A-B}. Let $H$ be the subgraph obtained from the outcomes of the $\Theta(k^2)$ many sketches. From the maximum matching of this graph $H$, the algorithm reports the output.

Now we will prove \Cref{thm:match-size}.

\begin{proof}[Proof of \Cref{thm:match-size}]
As \cite{AKL21}, our algorithm in the \MPC model runs $\Oh(\log n)$ instances of \textsc{Tester}. So, we will be done by explaining how we implement $\mbox{\textsc{Tester}}(G,k)$ in the insertion-only \MPC model \hide{such that the global space is same as the space complexity of $\mbox{\textsc{Tester}}(G,k)$ in the insertion-only model, i.e., $\tOh(k)$}. The implementation is similar to the proof of \Cref{thm:match-find}. Let $M$  be the matching stored by the algorithm.  On each machine, we have a set of edges and the information about the edges that are present in the current matching of size at most $k$. Now, let us discuss how to update the information over the machines when a batch $I$ of $\Oh(n^{\spacexp})$ insertions arrive. If $|M|= k$, we do not update anything. Otherwise, we proceed as follows. We broadcast $I$ to all machines and listen from the machines about the edges (in $I$) whose vertices are in some edges in $M$. Note that this can be performed in $\Oh(1)$ rounds. Let $I' \subseteq I$ be the set of edges none of whose endpoints are present in $M$. We greedily add the edges in $I'$ to the current maximal matching till the size of the matching reaches $k$. When we are asked to report a matching, we output the apprximation to the size of the maximum matching of the input graph from the size of the currently stored matching in the memory. Observe that the algorithm uses $\tOh(k)$ space. The correctness and the round complexity of the algorithm follow from the description.
\end{proof}

Apart from the streaming algorithm by \cite{AKL21}, our algorithm of \Cref{thm:match-size1} relies on the algorithm in the \MPC model \cite{NO21}, as stated in \Cref{prop:onak}. The high-level idea is to maintain the information same as that of \cite{AKL21} and run the \MPC algorithm by \cite{NO21} on the graph $H$ formed by the edges we get from the $\Theta(k^2)$ sketches, between the vertex partitions $V_1, \ldots,V_{\Theta(k)}$, in the description of $\mbox{\textsc{Tester}}(G,k)$ in dynamic stream.

\begin{proof}[Proof of \Cref{thm:match-size1}]
Similar to the proof of \Cref{thm:match-size}, here also we will be done by explaining how we implement $\mbox{\textsc{Tester}}(G,k)$ in the  \MPC model when we allow arbitrary updates. Note that the hash function $h: V \rightarrow [\Theta(k)]$ can be stored in every machine since $\Oh(\log n)$ bits are enough to store it. So, the partition of the vertex set into $V_1,\ldots, V_{\Theta(k)}$ (due to $h$) is implicitly stored in every machine. The sketches/$\ell_0$-samplers (each of size $\Oh(\log ^3 n)$) corresponding to $(V_i, V_j)$ pairs are also stored over the machines in a distributed fashion. Our algorithm also stores the outcome of each $\ell_0$-sampler at any instance of time along with the sketches.

Let $U$ be the set of updates under consideration. Let $X$ be the outcome of the $\ell_0$-samplers corresponding to the pairs $V_{h(u)}$ and $V_{h(v))}$ where $\{u,v\} \in U$ (to be either inserted or deleted). Note that $\size{X}\leq \size{U}=\Oh(\lspace^{1-\kappa})$. Now, we gather $X$ in $\Oh(1)$ rounds and delete $X$ from the graph $H$ in $\Oh(\log 1/\kappa)$ rounds using the algorithm of \Cref{prop:onak}. For each edge $e=\{u,v\} \in U$ (either $e$ is to be inserted or deleted), we update the sketch for $E \left(V_{h(u)}, V_{h(v))}\right)$ accordingly. Observe that this can be done by just broadcasting $U$ to every machine in $\Oh(1)$ rounds. Let $Y$ be the outcome of the $\ell_0$-samplers corresponding to pairs $V_{h(u)}$ and $V_{h(v)}$ where $\{u,v\} \in U$. Note that $\size{Y}\leq \size{U}=\Oh(\lspace^{1-\kappa})$. Now, we gather $Y$ in $\Oh(1)$ rounds and insert $Y$ to the graph $H$ in $\Oh(\log 1/\kappa)$ rounds by using the algorithm of \Cref{prop:onak}. Again, we point out that the input to the \MPC algorithm of \Cref{prop:onak} is dependent on the randomness of the sketches/$\ell_0$-samplers. However,  this is not a problem as the random bits of the \MPC algorithm and that of the sketches are independent. When we get a query to report an $\Oh(\alpha)$-approximate matching, we use the output of the \MPC algorithm of \Cref{prop:onak} to report a $2$-approximate matching in $H$, from which we can report an $\Oh(\alpha)$ approximation to the size of the maximum matching. This is possible from the fact that \cite{AKL17} showed that from the size of the maximum matching of $H$ one can report $\Oh(\alpha)$-approximation to the maximum matching of the input graph.

From the above description, the number of rounds to update a batch of size $\Oh(\lspace^{1-\kappa})$ and to report an $\Oh(\alpha)$-approximate matching is $\Oh(\log 1/\kappa)$. As \cite{AKL21} generates a space graph $H$ with size $\tOh(k^2)$ to implement $\mbox{\textsc{Tester}}(G,k)$ and we are using \MPC algorithm of \Cref{prop:onak} on graph $H$ (that uses total memory of $\tOh(|E(H)|)$), the total memory used by our algorithm is $\tOh(k^2)$.
 to simulate $\mbox{\textsc{Tester}}(G,k)$. As we run $\Oh(\log n)$ different instances of $\mbox{\textsc{Tester}}(G,k)$ in parallel with maximum possible $k=\tOh(n/\alpha)$, the total memory used by our algorithm is $\tOh(n^2/\alpha^4)$.
%
\end{proof}



\section{Conclusion}
\label{sec:conclusion}

In this paper, we introduce the Massively Parallel Computation (\MPC) model for data streams. We show that one can efficiently process large batches of edge insertions and deletions for several fundamental graph problems (connectivity, minimum spanning forest, and approximate matching) in a constant number of \MPC rounds using very little memory; the total memory used in our~algorithms matches (up to polylog~factors) the space complexity of the best-known streaming counterparts.


Nevertheless, it is important to note that not all streaming algorithms can be effortlessly adapted to our framework. Examples include problems such as $k$-vertex connectivity, $k$-edge connectivity, minimum cut, and others, which already have efficient semi-streaming algorithms. Exploring the possibility of extending our connectivity result to address these problems would undoubtedly be interesting. It would be equally fascinating to explore advancements in the lower-bound aspect within this model. It is worth mentioning that the streaming lower bounds we have discussed apply to our scenario regardless of the number of rounds spent for updates. {An especially intriguing question is to identify a problem that has a semi-streaming algorithm, yet requires in our \MPC streaming model either a non-constant number of rounds for the updates or a substantial memory requirement. Notice that while proving an unconditional lower bound seems to be very hard since it would imply strong separations in circuit complexity \cite{RVW18}, proving any non-trivial conditional lower bounds would be very interesting.}

While the main focus of our study is on the \emph{one-pass streaming \MPC algorithms}, one could also consider a scenario with \emph{multiple-passes}, in a similar way as one studies graph streaming algorithms. We leave this avenue as an interesting line of future research. 

\paragraph{Acknowledgements}
The authors would like to thank Robi Krauthgamer and Graham Cormode for their useful comments about the model and about streaming algorithms. We also thank the anonymous reviewers for their careful reading of our manuscript and their many insightful comments and suggestions.

\bibliographystyle{alpha}
\bibliography{references}

\newcommand{\etalchar}[1]{$^{#1}$}
\begin{thebibliography}{WYX{\etalchar{+}}15}

\bibitem[AAB{\etalchar{+}}20]{AABDW20}
Umut~A. Acar, Daniel Anderson, Guy~E. Blelloch, Laxman Dhulipala, and Sam Westrick.
\newblock Parallel batch-dynamic trees via change propagation.
\newblock In {\em \Proc 28th \ESA}, pages 2:1--2:23, 2020.
\newblock Also in CoRR arxiv: abs/2002.05129 (2020).

\bibitem[AABD19]{AABD19}
Umut~A. Acar, Daniel Anderson, Guy~E. Blelloch, and Laxman Dhulipala.
\newblock Parallel batch-dynamic graph connectivity.
\newblock In {\em \Proc 31st \SPAA}, pages 381--392, 2019.

\bibitem[ABB{\etalchar{+}}19]{ABBMS19}
Sepehr Assadi, MohammadHossein Bateni, Aaron Bernstein, Vahab~S. Mirrokni, and Cliff Stein.
\newblock Coresets meet {EDCS:} algorithms for matching and vertex cover on massive graphs.
\newblock In {\em \Proc 30th \SODA}, pages 1616--1635, 2019.

\bibitem[ABT20]{ABT20}
Daniel Anderson, Guy~E. Blelloch, and Kanat Tangwongsan.
\newblock Work-efficient batch-incremental minimum spanning trees with applications to the sliding-window model.
\newblock In {\em \Proc 32nd \SPAA}, pages 51--61, 2020.

\bibitem[AG18]{AG18}
Kook~Jin Ahn and Sudipto Guha.
\newblock Access to data and number of iterations: Dual primal algorithms for maximum matching under resource constraints.
\newblock {\em ACM Transactions on Parallel Computing}, 4(4):17:1--17:40, 2018.

\bibitem[AGM12]{AGM12}
Kook~Jin Ahn, Sudipto Guha, and Andrew McGregor.
\newblock Analyzing graph structure via linear measurements.
\newblock In {\em \Proc 23rd \SODA}, pages 459--467, 2012.

\bibitem[AKL17]{AKL17}
Sepehr Assadi, Sanjeev Khanna, and Yang Li.
\newblock On estimating maximum matching size in graph streams.
\newblock In {\em \Proc 28th \SODA}, pages 1723--1742, 2017.

\bibitem[AKL21]{AKL21}
Sepehr Assadi, Sanjeev Khanna, and Yang Li.
\newblock Tight bounds for single-pass streaming complexity of the set cover problem.
\newblock {\em SIAM Journal on Computing}, 50(3):{STOC16--341}--{STOC16--376}, 2021.

\bibitem[AKLY16]{AKLY16}
Sepehr Assadi, Sanjeev Khanna, Yang Li, and Grigory Yaroslavtsev.
\newblock Maximum matchings in dynamic graph streams and the simultaneous communication model.
\newblock In {\em \Proc 27th \SODA}, pages 1345--1364, 2016.

\bibitem[ANOY14]{ANOY14}
Alexandr Andoni, Aleksandar Nikolov, Krzysztof Onak, and Grigory Yaroslavtsev.
\newblock Parallel algorithms for geometric graph problems.
\newblock In {\em \Proc 46th \STOC}, pages 574--583, 2014.

\bibitem[ASS{\etalchar{+}}18]{ASSWZ18}
Alexandr Andoni, Zhao Song, Clifford Stein, Zhengyu Wang, and Peilin Zhong.
\newblock Parallel graph connectivity in log diameter rounds.
\newblock In {\em \Proc 59th \FOCS}, pages 674--685, 2018.

\bibitem[BDE{\etalchar{+}}19]{BDELM19}
Soheil Behnezhad, Laxman Dhulipala, Hossein Esfandiari, Jakub {\L}\c{a}cki, and Vahab~S. Mirrokni.
\newblock Near-optimal massively parallel graph connectivity.
\newblock In {\em \Proc 60th \FOCS}, pages 1615--1636, 2019.

\bibitem[BDH18]{BDH18}
Soheil Behnezhad, Mahsa Derakhshan, and MohammadTaghi Hajiaghayi.
\newblock Brief announcement: {Semi-MapReduce} meets {Congested Clique}.
\newblock {\em CoRR}, arxiv: abs/1802.10297, 2018.

\bibitem[BFK{\etalchar{+}}23]{BFKKH23}
Maciej Besta, Marc Fischer, Vasiliki Kalavri, Michael Kapralov, and Torsten Hoefler.
\newblock Practice of streaming processing of dynamic graphs: Concepts, models, and systems.
\newblock {\em IEEE Transactions on Parallel and Distributed Systems}, 34(6):1860--1876, 2023.

\bibitem[BHH19]{BHH19}
Soheil Behnezhad, MohammadTaghi Hajiaghayi, and David~G. Harris.
\newblock Exponentially faster massively parallel maximal matching.
\newblock In {\em \Proc 60th \FOCS}, pages 1637--1649, 2019.

\bibitem[BJWY22]{BJWY22}
Omri Ben{-}Eliezer, Rajesh Jayaram, David~P. Woodruff, and Eylon Yogev.
\newblock A framework for adversarially robust streaming algorithms.
\newblock {\em Journal of the ACM}, 69(2):17:1--17:33, 2022.

\bibitem[BKS17]{BKS17}
Paul Beame, Paraschos Koutris, and Dan Suciu.
\newblock Communication steps for parallel query processing.
\newblock {\em Journal of the ACM}, 64(6):40:1--40:58, 2017.

\bibitem[CCE{\etalchar{+}}16]{CCEHMMV16}
Rajesh Chitnis, Graham Cormode, Hossein Esfandiari, MohammadTaghi Hajiaghayi, Andrew McGregor, Morteza Monemizadeh, and Sofya Vorotnikova.
\newblock Kernelization via sampling with applications to finding matchings and related problems in dynamic graph streams.
\newblock In {\em \Proc 27th \SODA}, pages 1326--1344, 2016.

\bibitem[CDP24]{CDP24}
Artur Czumaj, Peter Davies{-}Peck, and Merav Parter.
\newblock Component stability in low-space massively parallel computation.
\newblock {\em Distributed Computing}, 37(1):35--64, 2024.

\bibitem[CEK{\etalchar{+}}15]{CEKLM15}
Avery Ching, Sergey Edunov, Maja Kabiljo, Dionysios Logothetis, and Sambavi Muthukrishnan.
\newblock One trillion edges: Graph processing at facebook-scale.
\newblock {\em Proceedings of the VLDB Endowment}, 8(12):1804--1815, 2015.

\bibitem[CHK16]{CHK16}
Keren Censor{-}Hillel, Elad Haramaty, and Zohar~S. Karnin.
\newblock Optimal dynamic distributed {MIS}.
\newblock In {\em \Proc 35th \PODC}, pages 217--226, 2016.

\bibitem[CJ19]{CJ19}
Graham Cormode and Hossein Jowhari.
\newblock \emph{L}\({}_{\mbox{\emph{p}}}\) samplers and their applications: {A} survey.
\newblock {\em ACM Computing Surveys}, 52(1):16:1--16:31, 2019.

\bibitem[C{\L}M{\etalchar{+}}18]{CLMMOS18}
Artur Czumaj, Jakub {\L}\c{a}cki, Aleksander M\c{a}dry, Slobodan Mitrovi\'c, Krzysztof Onak, and Piotr Sankowski.
\newblock Round compression for parallel matching algorithms.
\newblock In {\em \Proc 50th \STOC}, pages 471--484, 2018.

\bibitem[CRT05]{CRT05}
Bernard Chazelle, Ronitt Rubinfeld, and Luca Trevisan.
\newblock Approximating the minimum spanning tree weight in sublinear time.
\newblock {\em SIAM Journal on Computing}, 34(6):1370--1379, 2005.

\bibitem[CY20]{CY20}
Graham Cormode and Ke~Yi.
\newblock {\em Small Summaries for Big Data}.
\newblock Cambridge University Press, 2020.

\bibitem[DDK{\etalchar{+}}20]{DDKPSS20}
Laxman Dhulipala, David Durfee, Janardhan Kulkarni, Richard Peng, Saurabh Sawlani, and Xiaorui Sun.
\newblock Parallel batch-dynamic graphs: Algorithms and lower bounds.
\newblock In {\em \Proc 31st \SODA}, pages 1300--1319, 2020.

\bibitem[DG10]{DG10}
Jeffrey Dean and Sanjay Ghemawat.
\newblock {MapReduce}: A flexible data processing tool.
\newblock {\em Communications of the ACM}, 53(1):72--77, 2010.

\bibitem[DLSY21]{DLSY21}
Laxman Dhulipala, Quanquan~C. Liu, Julian Shun, and Shangdi Yu.
\newblock Parallel batch-dynamic {$k$}-clique counting.
\newblock In {\em \Proc 2nd Symposium on Algorithmic Principles of Computer Systems (APOCS)}, pages 129--143, 2021.

\bibitem[DMVZ18]{DMVZ18}
Samir Datta, Anish Mukherjee, Nils Vortmeier, and Thomas Zeume.
\newblock Reachability and distances under multiple changes.
\newblock In {\em \Proc 45th \ICALP}, pages 120:1--120:14, 2018.

\bibitem[FKM{\etalchar{+}}05]{FKMSZ05}
Joan Feigenbaum, Sampath Kannan, Andrew McGregor, Siddharth Suri, and Jian Zhang.
\newblock On graph problems in a semi-streaming model.
\newblock {\em Theoretical Computer Science}, 348(2-3):207--216, 2005.

\bibitem[GGK{\etalchar{+}}18]{GGKMR18}
Mohsen Ghaffari, Themis Gouleakis, Christian Konrad, Slobodan Mitrovi\'c, and Ronitt Rubinfeld.
\newblock Improved massively parallel computation algorithms for {MIS}, matching, and vertex cover.
\newblock In {\em \Proc 37th \PODC}, pages 129--138, 2018.

\bibitem[GL20]{GL20}
Seth Gilbert and Lawrence Er~Lu Li.
\newblock How fast can you update your {MST}?
\newblock In {\em \Proc 32nd \SPAA}, pages 531--533, 2020.

\bibitem[GSZ11]{GSZ11}
Michael~T. Goodrich, Nodari Sitchinava, and Qin Zhang.
\newblock Sorting, searching, and simulation in the {MapReduce} framework.
\newblock In {\em \Proc 22nd \ISAAC}, pages 374--383, 2011.
\newblock Also in CoRR arxiv: abs/1101.1902 (2011).

\bibitem[GU19]{GU19}
Mohsen Ghaffari and Jara Uitto.
\newblock Sparsifying distributed algorithms with ramifications in massively parallel computation and centralized local computation.
\newblock In {\em \Proc 30th \SODA}, pages 1636--1653, 2019.

\bibitem[HHS22]{HHS22}
Kathrin Hanauer, Monika Henzinger, and Christian Schulz.
\newblock Recent advances in fully dynamic graph algorithms --- {A} quick reference guide.
\newblock {\em ACM Journal of Experimental Algorithmics}, 27:1.11:1--1.11:45, 2022.
\newblock Also in CoRR arxiv: abs/2102.11169 (2021).

\bibitem[HP15]{HP15}
James~W. Hegeman and Sriram~V. Pemmaraju.
\newblock Lessons from the {Congested Clique} applied to {MapReduce}.
\newblock {\em Theoretical Computer Science}, 608:268--281, 2015.

\bibitem[IBY{\etalchar{+}}07]{IBYBF07}
Michael Isard, Mihai Budiu, Yuan Yu, Andrew Birrell, and Dennis Fetterly.
\newblock Dryad: Distributed data-parallel programs from sequential building blocks.
\newblock {\em SIGOPS Operating Systems Review}, 41(3):59--72, 2007.

\bibitem[ILMP19]{ILMP19}
Giuseppe~F. Italiano, Silvio Lattanzi, Vahab~S. Mirrokni, and Nikos Parotsidis.
\newblock Dynamic algorithms for the massively parallel computation model.
\newblock In {\em \Proc 31st \SPAA}, pages 49--58, 2019.

\bibitem[KH16]{KH16}
Pradeep Kumar and H.~Howie Huang.
\newblock G-store: High-performance graph store for trillion-edge processing.
\newblock In {\em \Proc International Conference for High Performance Computing, Networking, Storage and Analysis (SC 2016)}, pages 830--841, 2016.

\bibitem[Kon15]{Konrad15}
Christian Konrad.
\newblock Maximum matching in turnstile streams.
\newblock In {\em \Proc 23rd \ESA}, pages 840--852, 2015.

\bibitem[KSV10]{KSV10}
Howard~J. Karloff, Siddharth Suri, and Sergei Vassilvitskii.
\newblock A model of computation for {MapReduce}.
\newblock In {\em \Proc 21st \SODA}, pages 938--948, 2010.

\bibitem[{\L}MOS20]{LMOS20}
Jakub {\L}\c{a}cki, Slobodan Mitrovi\'c, Krzysztof Onak, and Piotr Sankowski.
\newblock Walking randomly, massively, and efficiently.
\newblock In {\em \Proc 52nd \STOC}, pages 364--377, 2020.

\bibitem[LMSV11]{LMSV11}
Silvio Lattanzi, Benjamin Moseley, Siddharth Suri, and Sergei Vassilvitskii.
\newblock Filtering: A method for solving graph problems in {MapReduce}.
\newblock In {\em \Proc 23rd \SPAA}, pages 85--94, 2011.

\bibitem[LSY{\etalchar{+}}22]{LSYDS22}
Quanquan~C. Liu, Jessica Shi, Shangdi Yu, Laxman Dhulipala, and Julian Shun.
\newblock Parallel batch-dynamic algorithms for {$k$}-core decomposition and related graph problems.
\newblock In {\em \Proc 34th \SPAA}, pages 191--204, 2022.

\bibitem[McG14]{McGregor14}
Andrew McGregor.
\newblock Graph stream algorithms: A survey.
\newblock {\em SIGMOD Records}, 43(1):9--20, 2014.

\bibitem[Mut05]{Muthukrishnan05}
S.~Muthukrishnan.
\newblock Data streams: Algorithms and applications.
\newblock {\em Foundations and Trends in Theoretical Computer Science}, 1(2), 2005.

\bibitem[NO21]{NO21}
Krzysztof Nowicki and Krzysztof Onak.
\newblock Dynamic graph algorithms with batch updates in the massively parallel computation model.
\newblock In {\em \Proc 32nd \SODA}, pages 2939--2958, 2021.

\bibitem[NY19]{NY19}
Jelani Nelson and Huacheng Yu.
\newblock Optimal lower bounds for distributed and streaming spanning forest computation.
\newblock In {\em \Proc 30th \SODA}, pages 1844--1860, 2019.

\bibitem[Ona18]{Onak18}
Krzysztof Onak.
\newblock Round compression for parallel graph algorithms in strongly sublinear space.
\newblock {\em CoRR}, arxiv: abs/1807.08745, 2018.

\bibitem[RBMZ15]{RBMZ15}
Amitabha Roy, Laurent Bindschaedler, Jasmina Malicevic, and Willy Zwaenepoel.
\newblock Chaos: Scale-out graph processing from secondary storage.
\newblock In {\em \Proc 25th Symposium on Operating Systems Principles (SOSP)}, pages 410--424, 2015.

\bibitem[RVW18]{RVW18}
Tim Roughgarden, Sergei Vassilvitskii, and Joshua~R. Wang.
\newblock Shuffles and circuits ({On} lower bounds for modern parallel computation).
\newblock {\em Journal of the ACM}, 65(6):41:1--41:24, 2018.

\bibitem[SWL13]{SWL13}
Bin Shao, Haixun Wang, and Yatao Li.
\newblock Trinity: A distributed graph engine on a memory cloud.
\newblock In {\em \Proc ACM International Conference on Management of Data (SIGMOD)}, pages 505--516, 2013.

\bibitem[TDB19]{TDB19}
Thomas Tseng, Laxman Dhulipala, and Guy~E. Blelloch.
\newblock Batch-parallel {Euler} tour trees.
\newblock In {\em \Proc 21st Workshop on Algorithm Engineering and Experiments (ALENEX)}, pages 92--106, 2019.

\bibitem[TDS22]{TDS22}
Tom Tseng, Laxman Dhulipala, and Julian Shun.
\newblock Parallel batch-dynamic minimum spanning forest and the efficiency of dynamic agglomerative graph clustering.
\newblock In {\em \Proc 34th \SPAA}, pages 233--245, 2022.

\bibitem[Whi15]{White15}
Tom White.
\newblock {\em Hadoop: The Definitive Guide: Storage and Analysis at Internet Scale}.
\newblock O'Reilly, 4 edition, 2015.

\bibitem[WYX{\etalchar{+}}15]{WYXXMWLDZ15}
Ming Wu, Fan Yang, Jilong Xue, Wencong Xiao, Youshan Miao, Lan Wei, Haoxiang Lin, Yafei Dai, and Lidong Zhou.
\newblock Gram: scaling graph computation to the trillions.
\newblock In {\em \Proc 6th ACM Symposium on Cloud Computing (SoCC)}, pages 408--421, 2015.

\bibitem[ZCF{\etalchar{+}}10]{ZCFSS10}
Matei Zaharia, Mosharaf Chowdhury, Michael~J. Franklin, Scott Shenker, and Ion Stoica.
\newblock Spark: Cluster computing with working sets.
\newblock In {\em \Proc 2nd USENIX Workshop on Hot Topics in Cloud Computing (HotCloud)}, 2010.

\end{thebibliography}
\end{document}